\title{Syntactic Interpolation for Tense Logics and Bi-Intuitionistic Logic via Nested Sequents}
\titlerunning{Syntactic Interpolation via Nested Sequents}
\author{Tim Lyon}{Institut f\"ur Logic and Computation, Technische Universit\"at Wien, Austria \and \url{https://logic-cs.at/phd/students/timothy-lyon/}}{lyon@logic.at}{}{Supported by the European Union Horizon 2020 Marie Skłodowska-Curie grant No 689176 and FWF projects I 2982, Y544-N2, and W1255-N23.}
\author{Alwen Tiu}{Research School of Computer Science, The Australian National University, Australia}
{
}{}{}
\author{Rajeev Gor\'e}{Research School of Computer Science, The Australian National University, Australia}
{
}{}{}
\author{Ranald Clouston}{Research School of Computer Science, The Australian National University, Australia}
{
}{}{}
\authorrunning{T. Lyon et al.}
\keywords{Bi-intuitionistic logic, Interpolation, Nested calculi, Proof theory, Sequents, Tense logics}
\newcommand\seq[2]{{#1} \vdash {#2}}
\newcommand\iseq[3]{{#1} \vdash {#2} \mathrel{\Vert} {#3}}
\newcommand\pseq[2]{{#1} \Vdash {#2}}
\newcommand\piseq[3]{{#1} \Vdash {#2} \mathrel{\Vert} {#3}}
\def\FBox{\square}
\def\FDia{\lozenge}
\def\PBox{\blacksquare}
\def\PDia{\blacklozenge}
\def\excl{\ensuremath{- \! \! \! <}}
\def\QDia{\langle ? \rangle}
\def\QBox{[?]}
\def\impl{\supset}
\newcommand\cseq[3]{{#1} \vdash {#2} \mathrel{\Vert} {#3}}
\def\monl{monl}
\def\monr{monr}
\def\orthrule{orth}
\def\ctr{ctr}
\def\wk{wk}
\def\cut{cut}
\def\kt{\mathsf{Kt}}
\def\biint{\mathsf{BiInt}}
\def\ktp{\mathsf{Kt\Pi}}
\def\tenseone{\mathsf{Kt\Pi L}}
\def\tensetwo{\mathsf{Kt\Pi LI}}
\def\biintone{\mathsf{BiIntL}}
\def\biinttwo{\mathsf{BiIntLI}}
\newcommand\orth[1]{({#1})^\bot}
\newcommand\trans[1]{{#1}^t}
\def\Rsym{\mathcal{R}}
\def\Isym{\mathcal{I}}
\def\trans{\text{orthogonal}}
\newenvironment{customlem}[1]
  {\innercustomlem}
  {\endinnercustomlem}
\newenvironment{customthm}[1]
  {\innercustomthm}
  {\endinnercustomthm}
\begin{document}

\maketitle

\begin{abstract}
We provide a direct method for proving Craig interpolation for a
range of modal and intuitionistic logics, including those containing a ``converse'' modality. We demonstrate this method for classical tense logic, its extensions with path axioms, and for bi-intuitionistic logic. These logics do not have straightforward
formalisations in the traditional Gentzen-style sequent calculus, but have all been shown to have cut-free nested sequent calculi.
The proof of the interpolation theorem uses these calculi and is
purely syntactic, without resorting to embeddings, semantic arguments, or interpreted connectives external to the underlying logical language. A novel feature of our proof includes an orthogonality condition for defining duality between interpolants.
\end{abstract}

\section{Introduction}

\newcommand{\Lg}[1]{\ensuremath{\mathbf{#1}}}

The Craig interpolation property for a logic \Lg{L} states that if $A \Rightarrow B \in \Lg{L}$, then there
exists a formula $C$ in the language of \Lg{L} such that $A \Rightarrow C \in
\Lg{L}$ and
$C \Rightarrow B \in \Lg{L}$, and every propositional variable appearing in $C$ appears in $A$ and $B$.
This property has many useful applications: it can be used to prove Beth
definability~\cite{interpolation-beth-kihara-ono}; in computer-aided verification it can be
used to split a large problem involving $A \Rightarrow B$ into smaller problems involving $A \Rightarrow C$ and
$C \Rightarrow B$~\cite{DBLP:reference/mc/McMillan18}; and in knowledge representation (uniform) interpolation can be used to conceal or forget
irrelevant or confidential information in ontology querying~\cite{DBLP:conf/ijcai/LutzW11}. Therefore, demonstrating that a logic possesses the Craig interpolation property is of practical value. 

Interpolation can be proved semantically or syntactically.
In the semantic method,
\Lg{L} is the set of valid formulae, thereby requiring a semantics for
\Lg{L}. In the syntactic method, often known as Maehara's
method~\cite{maehara-method}, \Lg{L} is
the set of theorems, thereby requiring a proof-calculus. 
The syntactic approach constructs the interpolant $C$ by induction on the (usually cut-free) derivation
of $A \Rightarrow B$, and usually also provides derivations
witnessing $A \Rightarrow C$ and $C \Rightarrow B$.

Over the past forty years, Gentzen's original sequent calculus has been extended in many different ways to handle a
plethora of logics. The four main extensions are hypersequent calculi~\cite{DBLP:conf/lics/CiabattoniGT08}, display calculi~\cite{DBLP:journals/igpl/Gore98}, nested
sequent calculi~\cite{DBLP:journals/aml/Brunnler09,DBLP:journals/sLogica/Kashima94,DBLP:journals/rsl/Poggiolesi08}, and labelled calculi~\cite{DBLP:journals/jphil/Negri05}. Various interpolation results have been found using these calculi
but the only general methodology  that we know of is the recent work
of Kuznets~\cite{DBLP:journals/apal/Kuznets18} with Lellman \cite{kuznet-lellman-intermediate}. Although they use extended
sequent calculi, binary-relational Kripke semantical arguments are
crucial for their methodology, and extending their method to other
semantics is left as further work.
They also construct the interpolants using a language containing (interpreted) meta-level connectives which are external
to the logic at hand, and do not
handle logics containing converse modalities such as tense
logic. Finally,  their method
does not yield derivations witnessing $A \Rightarrow C$ and $C \Rightarrow B$.\looseness=-1

We give a general, purely syntactic, methodology for proving
Craig interpolation using nested sequent calculi 
for a variety of propositional, non-classical logics including normal tense logics,
their extensions with path axioms, and bi-intuitionistic logic.
Our methodology does not
utilise semantics, does not embed one logic into another, and does not utilise logical connectives
which are external to the underlying logical language.

The first novelty of our approach is a generalisation of the notion of interpolant from formulas to sets of sequents. 
The second is a notion of orthogonality which
gives rise to a notion of duality via cut: if two interpolants are orthogonal, then the empty sequent is derivable from the sequents in the
interpolants using only the cut and the contraction rules. This duality via cut allows us to relate our more general notion of interpolants (as sets of sequents) to
the usual notion of interpolants (as formulas).
Moreover, given a derivation of $A \Rightarrow B$, our orthogonality condition not only allows us to construct the interpolant $C$, but
also the derivations witnessing $A \Rightarrow C$ and $C \Rightarrow B$. 
This fact shows that our approach possesses a distinct complexity-theoretic advantage over the semantic approach: to verify that $C$ is indeed the interpolant of $A \Rightarrow B$, one need only check the derivations of $A \Rightarrow C$ and $C \Rightarrow B$, which is a PTIME process. In the semantic approach, to verify that $A \Rightarrow C$ and $C \Rightarrow B$ are indeed valid (and that $C$ is in fact an interpolant of $A \Rightarrow B$) one must construct proofs of the implications, which is generally much harder (e.g., finding a proof of a validity in one of the tense logics presented in Sec.~\ref{sec:classical} is PSPACE complete).


{\bf Related work.} Interpolation has been heavily investigated in the description
logic community, where it is used to hide or forget
information~\cite{DBLP:journals/jair/CateFS13}. In this setting, the
logic \textsf{ALC} is a syntactic variant of the multimodal normal modal logic \textsf{Kn}
while its extension with inverse roles, \textsf{ALCI}, is a variant of the multimodal normal tense
logic \textsf{Ktn}. 
Cate et al~\cite{DBLP:journals/jair/CateFS13} utilise a complexity-optimal tableau
algorithm to prove interpolation for \textsf{ALC} via Maehara's method. They
then embed \textsf{ALCI} into \textsf{ALC} and extend their interpolation result to
\textsf{ALCI}.

By contrast, our methodology is direct: we obtain interpolation for
the normal tense logic \textsf{Kt}, and can then extract
interpolation for the normal modal logic \textsf{K} by simply observing
that our nested sequent calculus obeys the separation property: if the
end-sequent $\vdash A \rightarrow B$ contains no occurrences of the
black (converse) modalities, then neither does the interpolant.

As mentioned earlier, the work of Kuznets et
al.~\cite{FittingK15,DBLP:journals/apal/Kuznets18,kuznet-lellman-intermediate}
on interpolation for modal logics in nested sequent calculi 
is closest to ours. Our construction of interpolants for tense logics shares some similarity with theirs. One crucial difference is that our interpolants are justified purely through syntactic and proof-theoretic means, whereas their interpolants are justified via semantic arguments.  
Another important difference is that our method extends to the bi-modal case and also (bi-)intuitionistic case, and it is straightforward to adapt our work to the multi-modal case, e.g., using nested sequent calculi as in \cite{GorIanTiu12}. 
Kowalski and Ono~\cite{KowalskiO17} showed interpolation for bi-intuitionistic logic using a sequent calculus with analytic cut.  In contrast, our proof is based on a cut-free nested sequent calculus~\cite{GorPosTiu08}.


{\bf Outline of the paper.} 
In Sec.~\ref{sec:overview} we give a brief overview of a typical interpolation proof using the traditional sequent calculus, and highlight some issues of extending it to nested sequent calculi, which motivates the generalisation of the interpolation theorem we adopt in this paper. In Sec.~\ref{sec:classical} we show how the generalised notion of interpolants can be used to prove the Craig interpolation theorem for classical tense logic and its extensions with path axioms~\cite{GorPosTiu11}, covering all logics in the modal cube and more. We then show how our approach can be extended to bi-intuitionistic logic in Sec.~\ref{sec:intuitionistic}.
In Sec.~\ref{sec:concl} we conclude and discuss future work.


\section{Overview of our approach}
\label{sec:overview}

We analyze a typical syntactic interpolation proof for
Gentzen sequents, highlight the issues of extending it to nested sequents,
and motivate our syntactic approach for interpolation.

Consider, for example, a two-sided sequent calculus for classical logic such
as~\textsf{G3c}~\cite{basic-proof-theory-troelstra}.
Interpolation holds when we can prove
that for all $\Gamma_1, \Gamma_2, \Delta_1, \Delta_2$,
if 
$\seq{\Gamma_1, \Gamma_2 }{\Delta_1,\Delta_2}$ is provable in \textsf{G3c},
then 
so are both $\seq{\Gamma_1}{\Delta_1,C}$ and $\seq{C,\Gamma_2}{\Delta_2}$,  
for some $C$ containing only propositional variables common to both $\Gamma_{1}, \Delta_{1}$ and $\Gamma_{2},\Delta_{2}$.

The inductive construction of $C$ can be encoded via
inference rules over more expressive 
sequents that specify the splitting of the contexts
and the interpolant constructed thus far.  
In~\textsf{G3c}, we write
$\cseq{\Gamma_1 \mid \Gamma_2}{\Delta_1 \mid \Delta_2}{C}$ to denote the sequent
$\seq{\Gamma_1,\Gamma_2}{\Delta_1,\Delta_2}$ with its context split into
$\Gamma_1 \vdash \Delta_1$ and $\Gamma_2\vdash \Delta_2$, and with $C$ the
interpolant.  Inference rules for this extended sequent are similar to the
usual ones, with variations encoding the different ways the
contexts may be split. For example, the initial rule $\seq{\Gamma, p}{p, \Delta}$ has the following four variants
corresponding to the four splittings of where $p$ can occur (with four different interpolants!):\looseness=-1
$$
\infer[]
{\cseq{\Gamma_1, p \mid \Gamma_2}{p, \Delta_1 \mid \Delta_2}{\bot}}
{}
\qquad
\infer[]
{\cseq{\Gamma_1, p \mid \Gamma_2}{\Delta_1 \mid p, \Delta_2}{p} }
{}
$$
$$
\infer[]
{\cseq{\Gamma_1 \mid \Gamma_2, p}{p, \Delta_1 \mid \Delta_2}{\neg p}}
{}
\qquad
\infer[]
{\cseq{\Gamma_1 \mid \Gamma_2, p}{\Delta_1 \mid p, \Delta_2}{\top}}
{}
$$
Branching rules, such as the right-introduction rule for $\land$, split into 
two variants, depending on whether the principal formula is in the first or the second partition of the context:
$$
\infer[\land_{R_1}]
{\cseq{\Gamma_1 \mid \Gamma_2}{A \land B , \Delta_1\mid \Delta_2}{C \lor D}}
{
\cseq{\Gamma_1 \mid \Gamma_2}{A, \Delta_1\mid \Delta_2}{C}
&
\cseq{\Gamma_1 \mid \Gamma_2}{B, \Delta_1 \mid \Delta_2}{D}
}
$$
$$
\infer[\land_{R_2}]
{\cseq{\Gamma_1 \mid \Gamma_2}{\Delta_1 \mid \Delta_2, A \land B}{C \land D}}
{
\cseq{\Gamma_1 \mid \Gamma_2}{\Delta_1 \mid \Delta_2, A}{C}
&
\cseq{\Gamma_1 \mid \Gamma_2}{\Delta_1 \mid \Delta_2, B}{D}
}
$$
Observe that the interpolants of the conclusion sequents are composed from the
interpolants of the premises, but with the main connectives dual to one another: a disjunction in the $\land_{R_1}$ rule and a conjunction in the $\land_{R_2}$ rule.  These observations also apply for the other rules of
\textsf{G3c}, with a slight subtlety for the implication-left rule: see~
\cite{basic-proof-theory-troelstra}. Interpolation  for
\textsf{G3c} can then be proved by a straightforward induction on the height of
proofs.

Below we discuss some issues with extending this approach to proving interpolation for modal/tense logics and bi-intuitionistic logic using nested sequent calculi, and how these issues lead to the generalisation of the intermediate lemmas we need to prove (which amounts to an interpolation theorem
for sequents, rather than formulae). 

\paragraph*{Classical modal and tense logics} 
A nested sequent~\cite{DBLP:journals/sLogica/Kashima94} can be seen as a tree of
traditional Gentzen-style sequents. For classical modal logics, single-sided
sequents suffice, so a nested sequent in this case
can be seen as a nested multiset: i.e. a multiset whose elements can be
formulae or multisets.  Following the notation in \cite{GorPosTiu11}, a sequent
nested inside another sequent is prefixed with a $\circ$, which is the
structural proxy for the $\Box$ modal operator.  For example,
the nested sequent below first left, with two sub-sequents $\{c, d\}$ and $\{e,f\}$, 
represents the formula shown second left:
\begin{center}
  \begin{math}
    \seq{}{\{a, b, \circ\{c, d\}, \circ\{e,f\} \}}
    \qquad
    a \lor b \lor \FBox(c \lor d) \lor \FBox(e \lor f)
    \qquad
    \infer[]
    {\seq{}{ \Gamma, \FDia A, \circ\{\Delta\} }}
    {\seq{}{ \Gamma, \circ\{A, \Delta\} }}
    \qquad
    \infer[]
    {\seq{}{\FDia \neg p, \FDia p, \circ\{q\}}}
    {\seq{}{\FDia \neg p, \circ \{p, q\}}}
  \end{math}
\end{center}

Nested sequent calculi for modal
logics~\cite{DBLP:journals/aml/Brunnler09,GorPosTiu11,DBLP:journals/sLogica/Kashima94}
typically contain the {\em propagation rule} for diamond shown third left above which
``propagates'' the $A$ into the scope of $\circ$, when read upwards.  
Propagation rules complicate the adaptation of the interpolation
proof from traditional Gentzen sequent calculi.  In particular, it is not
sufficient to partition a context into two disjoint multisets. That is, suppose
a nested sequent $\seq{}{\Gamma,\Delta}$ is provable, and we would like to
construct an interpolant $C$ such that $\seq{}{\Gamma, C}$ and
$\seq{}{\overline{C}, \Delta}$ are provable, where $\overline C$ is the negation normal form of $\neg C$. Suppose the proof of
$\seq{}{\Gamma, \Delta}$ ends with a propagation rule, e.g., when
$\Gamma = \FDia \neg p, \FDia p$ and $\Delta = \circ\{q\}$
as shown above far right.
In this case, by induction, we can construct an interpolant $D$ such that
the splittings $\seq{}{\FDia \neg p, D}$ and $\seq{}{ \overline D, \circ\{p,
  q\}}$ of the premiss are
provable, but
it is in general not obvious how to construct the desired interpolant $C$
for the conclusion 
$\seq{}{\FDia \neg p, \circ \{p, q\}}$
from $D$.  For this example,
$D$ should be $\FBox p$, 
$C$ should be $\FBox \bot$, which does not mention $p$ at all.

The above issue with propagation rules suggests that we need to strengthen the
induction hypothesis to construct interpolants, i.e., by considering splitting
the sequent context at every sub-sequent in the nested sequent. For example, the
nested sequent $\seq{}{\FDia \neg p, \circ \{p, q\}}$ above should be split into
$\seq{}{\FDia \neg p, \circ\{p\} }$ and $\seq{}{\circ \{ q \} }$ when applying
the induction hypothesis. Then, $D = C = \FBox \bot$ is indeed an interpolant:
both $\seq{}{\FDia \neg p, \circ\{p\}, \FBox \bot }$ and
$\seq{}{\FDia \top, \circ \{ q \} }$ are provable. Nevertheless, employing a formula interpolant is not enough
to push through the inductive argument in general. Consider, for example, the nested sequent
$\seq{}{\circ \{p, \neg p\} }$, which is provable with an identity rule, and its
partition $\seq{}{\circ \{ p \}}$ and $\seq{}{\circ\{ \neg p \}}$. There is no
formula $C$ such that both $\seq{}{\circ\{p \}, C}$ and
$\seq{}{\overline C, \circ\{\neg p\}}$ are provable.  One solution to this
problem is to generalise the interpolation statement to consider a nested
sequent as an interpolant: If a nested sequent $\seq{}{\Gamma}$ is provable,
then for every `partitioning' of $\seq{}{\Gamma}$ into $\seq{}{\Gamma_1}$ and $\seq{}{\Gamma_2}$ (where
the partitioning applies to every sub-sequent in a nested sequent; the precise
definition will be given in subsequent sections), there exists $\seq{}{\Delta}$ (the
interpolant), $\seq{}{\Gamma_1'}$ and $\seq{}{\Gamma_2'}$ such that
\begin{enumerate}
 \item The propositional variables occuring in $\seq{}{\Delta}$ are in both $\seq{}{\Gamma_1}$ and $\seq{}{\Gamma_2}$,
 \item $\seq{}{\Gamma_1'}$ splits into $\seq{}{\Gamma_1}$ and $\seq{}{\Delta}$, and $\seq{}{\Gamma_2'}$ splits into $\seq{}{\Gamma_2}$ and $\seq{}{\overline \Delta}$, 
 where $\seq{}{\overline \Delta}$ denotes the nested sequent $\seq{}{\Delta}$ with all formula occurrences replaced with their negations, and
 \item Both $\seq{}{\Gamma_1'}$ and $\seq{}{\Gamma_2'}$ are provable.
\end{enumerate}
 For example, the nested sequent $\seq{}{\circ\{p, \neg p\}}$, with partitions $\seq{}{\circ \{p\}}$ and $\seq{}{\circ\{\neg p\}}$, has the interpolant 
 $\seq{}{\Delta} = \seq{}{\circ\{\neg p\}}$ (hence $\seq{}{\overline \Delta} = \seq{}{\circ\{ p \}}$), and
 $\seq{}{\Gamma_1'} = \seq{}{\Gamma_2'} = \seq{}{\circ\{p, \neg p\}}.$

One remaining issue is that, since we now use a nested sequent as an interpolant, the composition of interpolants needs to be adjusted as well. 
Recall that in the construction of interpolants for \textsf{G3c} above, in the case involving the right-introduction for $\land$,
we constructed either $C \lor D$ or $C \land D$ as the interpolant for the conclusion.
If $C$ and $D$ are nested sequents, the expression $C \lor D$ or $C \land D$ would not be well-formed. To solve this remaining issue, we generalise the interpolant further to be a {\em set} of nested sequents. 

Fitting and Kuznets~\cite{FittingK15} similarly generalise the
notion of interpolants, but instead of generalising interpolants to a
set of (nested) sequents, they introduce `meta' connectives for conjunction and
disjunction, applicable only to interpolants, and justified
semantically. Our notion of interpolants 
requires no new logical operators or semantical notions.

\paragraph*{Propositional bi-intuionistic logic}
Bi-intuitionistic logic is obtained from intuitionistic logic by adding a subtraction (or exclusion) connective $\excl$ that is dual to implication. 
Its introduction rules are the mirror images of those for implication; in the traditional sequent calculus, these take the form:
$$
\infer[\excl_L]
{\seq{A \excl B}{\Delta}}
{\seq{A}{B, \Delta} }
\qquad
\infer[\excl_R]
{\seq{\Gamma}{\Delta, A \excl B}}
{\seq{\Gamma}{\Delta, A} & \seq{\Gamma, B}{\Delta}}
$$
However, as shown in \cite{PinUus18}, the cut rule cannot be entirely eliminated in a sequent calculus featuring these rules, although
they can be restricted to analytic cuts~\cite{KowalskiO17}. 
In \cite{GorPosTiu08}, Postniece et al. show how bi-intuitionistic logic can be formalised in a nested sequent calculus. 
Although interpolation holds for intuitionistic logic, it does not generalise straightforwardly to bi-intuitionistic logic, and only very recently 
has interpolation for bi-intuitionistic logic been shown~\cite{KowalskiO17}. 
The proof for the interpolation theorem for intuitionistic logic is very similar to the proof of the same theorem for classical logic; 
one simply needs to restrict the partitioning of the sequent to the form $\seq{\Gamma_1 \mid \Gamma_2}{ \Delta_1 \mid \Delta_2}$ where $\Delta_1$
is empty and $\Delta_2$ contains at most one formula occurrence. Since the (nested) sequent calculus for bi-intuitionistic logic uses
multiple-conclusion (nested) sequents, the proof for intuitionistic logic cannot be adapted to the bi-intutionistic case. 
The problem already shows up in the very simple case involving the identity rule: suppose we have a proof of the initial sequent $\seq{p}{p}$ and we want to partition the sequent as $\seq{\cdot \mid p}{p \mid \cdot}$. It is not possible to find an interpolant $C$ such that 
$\seq{\cdot}{p,C}$ and $\seq{C,p}{\cdot}$ (otherwise, one would be able to prove the excluded middle
$p \lor (p\supset \bot)$, which is not valid in bi-intuitionistic logic, using the cut formula $p \lor C$).
In general, the inductive construction of the interpolant for $A \impl B$ may involve finding an interpolant $C$
for the problematic partition of the form $\seq{ \cdot \mid \Gamma}{\Delta \mid \cdot}$, where $\Delta$ is non-empty. This case does not
arise in the interpolation proof for intuitionistic logic in \cite{basic-proof-theory-troelstra}, due to the restriction to single-conclusion sequents.

We show that the above issue with bi-intutionistic logic can be solved using the
same approach as in modal logic: simply extend the interpolant to
a set of nested sequents. In particular, for
$\seq{\cdot \mid p}{p \mid \cdot}$, the generalised interpolation
statement only requires finding an interpolating sequent $\seq{\Gamma}{\Delta}$
and its `dual' $\seq{\Gamma'}{\Delta'}$ (see below) such that both
$\seq{\Gamma}{p, \Delta}$ and $\seq{\Gamma',p}{\Delta'}$ are provable,
which is achieved by letting $\Gamma = \{p\}$, $\Delta = \{~\}$,
$\Gamma' = \{~\}$ and $\Delta' = \{p\}.$

\paragraph*{Interpolating sequents and orthogonality}
In a simplified form (e.g., sequent calculus), 
the generalised interpolation result we show can be roughly summarised as follows: given a provable sequent 
$\seq{\Gamma_1, \Gamma_2}{\Delta_1,\Delta_2}$, there exist two sets of sequents $\Isym$ and $\Isym'$ such that
\begin{enumerate}
\item For every sequent $(\seq{\Sigma}{\Theta}) \in \Isym$, the sequent $\seq{\Gamma_1, \Sigma}{\Delta_1, \Theta}$ is provable,
\item For every sequent $(\seq{\Sigma'}{\Theta'}) \in \Isym'$, the sequent $\seq{\Gamma_2, \Sigma'}{\Delta_2, \Theta'}$ is provable,
\item The propositional variables in $\Isym$ and $\Isym'$ occur in both $\seq{\Gamma_1}{\Delta_1}$ and $\seq{\Gamma_2}{\Delta_2}$, and 
\item The sequents in $\Isym$ and $\Isym'$ are {\em orthogonal} to each other, that is, the empty sequent $\seq{}{}$ is derivable from
all sequents in $\Isym \cup {\Isym'}$ using only the cut rule and possibly structural rules (contraction and/or weakening).
\end{enumerate}
The set $\Isym$ is taken to be the (sequent) interpolant. 

Last, the orthogonality condition, can be seen as a generalisation of duality. 
To see how this is the case, consider a degenerate case where $\seq{\Gamma_1, \Gamma_2}{\Delta_1,\Delta_2}$ is a classical sequent
(e.g., in \textsf{G3c}). We show how one can convert a formula interpolant in the usual definition (i.e., formula $C$ s.t. 
$\seq{\Gamma_1}{C, \Delta_1}$ and $\seq{\Gamma_2, C}{\Delta_2}$ are provable) to a sequent interpolant satisfying the four conditions above, and 
vice-versa.
For the forward direction, simply let $\Isym = \{ \seq{}{C}\}$
and $\Isym' = \{ \seq{C}{} \}$. It is easy to see that $\Isym$ is orthogonal to $\Isym'.$
For the converse direction, suppose we have a sequent interpolant $\Isym$ and its orthogonal $\Isym'$. We illustrate how one can
construct a formula interpolant $C$. To simplify the discussion, let us assume that 
$$
\Isym = \{ (\seq{p,q}{r,s}) \}
\quad
{\Isym'} = \{ (\seq{}{p}), ~ (\seq{}{q}), ~ (\seq{r}{}) ~ , (\seq{s}{~})\}
$$
and that the following sequents are provable:
$$(1)~\seq{\Gamma_1, p, q}{r,s, \Delta_1} \quad (2)~\seq{\Gamma_2}{\Delta_2, p} \quad (3)~\seq{\Gamma_2}{\Delta_2, q} \quad (4)~\seq{\Gamma_2,r }{\Delta_2} \quad (5)~\seq{\Gamma_2, s}{\Delta_2}$$
Let $C = (p \land q) \impl (r \lor s).$ Then it is easy to see that $\seq{\Gamma_1}{\Delta_1,C}$ is provable given (1), and
$\seq{\Gamma_2, C}{\Delta_2}$ is provable given (2) - (5).
The formal statement and the proof of the generalised interpolation theorem will be discussed in detail in the next two sections.

\paragraph*{A note on notation}
In what follows, we adopt a representation of nested sequents using restricted
labelled sequents where we use the labels and relational atoms to encode the tree structure of a nested sequent.
To clarify what we mean, consider the following nested sequent for tense logic~\cite{GorPosTiu11}:
$$
A, B, \circ\{ C, D\}, \circ\{ E, F, \bullet \{G, H\}, \circ\{ I\} \} 
$$
Graphically, the nested sequent can be represented as a tree (shown below left) with two types of edges $\overset{\circ}{\rightarrow}$ and $\overset{\bullet}{\rightarrow}$. Alternatively, the nested sequent can be represented as the polytree shown below right with a single type of edge $\overset{}{\rightarrow}$ and where the orientation of the edge encodes the two types of structures $\circ{\{}\}$ and $\bullet{\{}\}$ of the nested sequent (observe that the $\bullet$-edge from $E,F$ to $G,H$ in the left diagram has been reversed in the right diagram).\footnote{A polytree is a directed graph such that its underlying graph---the graph obtained by ignoring the orientation of the edges---is a tree.}
\begin{center}
\begin{footnotesize}
\begin{tabular}{c c}
\xymatrix{
 & A, B \ar@{>}[dl]_{\circ}\ar@{>}[dr]^{\circ} & & \\
 C,D & & E,F\ar@{>}[dl]_{\bullet}\ar@{>}[dr]^{\circ} & \\
  & G,H & & I
}

&

\xymatrix{
 & A, B \ar@{>}[dl]\ar@{>}[dr] & & \\
 C,D & & E,F\ar@{>}[dr] & \\
  & G,H\ar@{>}[ur] & & I
}
\end{tabular}
\end{footnotesize}
\end{center}
In the latter representation, the structure of the nested sequent can be encoded using a single binary relation: we label each node of the tree corresponding to the nested sequent (as shown above left) with unique labels $x$, $y$, $z$, $\ldots$, encode each edge $x \overset{\circ}{\rightarrow} y$ from a label $x$ to a label $y$ with a relation $Rxy$, and encode each edge $x\overset{\bullet}{\rightarrow}y$ with a relation $Ryx$~\cite{CiaLyoRam18}. The above nested sequent can then be equivalently represented as a labelled sequent
where $\Rsym = \{Ruv, Ruw, Rxw, Rwy\}$ and $R$ is a relational symbol:
$$
\seq{\Rsym}{u:A, u:B, v:C, v:D, w:E, w:F, x:G, x:H, y:I}
$$

Inference rules in a nested sequent calculus can be trivially encoded as rules
in a restricted labelled calculus seen as a `data
structure' rather than a proper labelled sequent
calculus.

We \textbf{stress} that our labelled notation to represent
nested sequents is just a matter of presentation: the labelled representation is
notationally simpler for presenting inference rules and composing nested
sequents. For instance, the operation of merging two nested sequents with
isomorphic shapes is simply the union of the multiset of labelled formulae.

\section{Interpolation for Tense Logics}
\label{sec:classical}

\begin{figure*}

$$
\overline{A} \vee \FBox \PDia A
\qquad
\overline{A} \vee \PBox \FDia A
\qquad
\FDia (A \wedge \overline{B}) \vee \FDia \overline{A} \vee \FBox B
\qquad
\PDia (A \wedge \overline{B}) \vee \PDia \overline{A} \vee \PBox B
\qquad
\infer[\FBox]
{\FBox A }
{ A }
\qquad
\infer[\PBox]
{ \PBox A }
{ A }
$$


\caption{The minimal tense logic $\kt$ consists of all classical propositional tautologies, modus ponens, and is additionally extended with the above axioms and inference rules.}
\label{fig:minimal-tense-logic-kt}
\end{figure*}

As usual, we interpret $\FBox A$ as saying that $A$ holds at every
point in the immediate future, and $\FDia A$ as saying that
$A$ holds at some point in the immediate future. Conversely, the $\PBox$ and
$\PDia$ modalities make reference to the past: $\PBox A$ says that $A$ holds at every point in the immediate past, and $\PDia A$ 
says that $A$ holds at some point in the immediate past. Last,
we take $\overline{p}$ to be the negation of $p$, and use the notation  $\QBox \in \{\FBox, \PBox\}$ and $\QDia \in \{\FDia,\PDia\}$.


We consider tense formulae in negation normal form (nnf) as
this simplifies our calculi while retaining the expressivity of the original
language. The language for the tense logics we consider is given via the following BNF grammar: 
$$
A ::= p \ | \ \overline{p} \ | \ (A \wedge A) \ | \ (A \vee A) \ | \ (\FBox A) \ | \ (\FDia A) \ | \ (\PBox A) \ | \ (\PDia A).
$$
Since our language excludes an explicit connective for negation, we define it formally below (Def.~\ref{def:tense-negation}). Using the definition, we may define an implication $A \rightarrow B$ to be $\overline{A} \vee B$.
\begin{definition}
  \label{def:tense-negation} For a formula $A$, we define the \emph{negation}
  $\overline{A}$ recursively on the structure of $A$: if $A=p$ then
  $\overline{A} := \overline{p}$ and if $A = \overline{p}$ then
  $\overline{A} :=p$. The clauses concerning the connectives are as follows: (1)
  $\overline{B \land C} := \overline{B} \vee \overline{C}$, (2)
  $\overline{\QBox B} := \QDia \overline{B}$, (3)
  $\overline{B \lor C} := \overline{B} \wedge \overline{C}$, and (4)
  $\overline{\QDia B} := \QBox \overline{B}$.
\end{definition}

Path axioms are of the form $\QBox_1 \QBox_2 \cdots \QBox_n \;\bar p \;\vee \QDia p$ (or, equivalently, $\QDia_1 \cdots \QDia_n p \to \QDia p$) with $n \in \mathbb{N}$. 
See~\cite{GorIanTiu12} for an overview of path axioms. 

The tense logics we consider are all extentions of the minimal tense logic $\kt$
(Fig.~\ref{fig:minimal-tense-logic-kt}) with path
axioms. 
Thus, $\ktp$ is the minimal extension
of $\kt$ with all axioms from the
finite set $\mathsf{\Pi}$ of path axioms.



The calculus for $\kt$, extended with a set of path axioms $\mathsf{\Pi}$, is given in Fig.~\ref{fig:labelled-rules}. Labelled sequents are defined to be syntactic objects of the form $\seq{\Rsym}{\Gamma}$, where $\Rsym$ is a multiset of relational atoms of the form $Rxy$ and $\Gamma$ is a multiset of labelled formulae of the form $x:A$, with $A$ a tense formula and labels from a countable set $\{x,y,z, \ldots\}$. 



Note that the side conditions $x \Rsym^\mathsf{\Pi} y$ and $y \Rsym^\mathsf{\Pi} x$ of the $\FDia$ and $\PDia$ rules, respectively, depend on the set $\mathsf{\Pi}$ of path axioms added to $\kt$. The definition of the relation $\Rsym^{\mathsf{\Pi}}$ is founded upon
various auxiliary concepts that fall outside the main scope of this
paper. We therefore refer the interested reader to App.~\ref{app:path-axioms-relation} 
where the $\Rsym^{\mathsf{\Pi}}$ relation as
well as the concepts needed for its definition are explicitly
provided. See also~\cite{GorPosTiu11,GorIanTiu12} for details.





\begin{lemma}\label{lm:labelled-calc-properties} The contraction rules $\ctr$, the weakening rules $\wk$ and $\cut_{1}$ are admissible, and all inference rules are invertible in $\tenseone$. 
\end{lemma}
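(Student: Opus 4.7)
The plan is to establish the four properties in the standard order that lets each step reuse the preceding ones: first \emph{weakening} admissibility, then \emph{invertibility} of all rules, then \emph{contraction} admissibility, and finally \emph{cut} admissibility. All of these I would prove as height-preserving where possible, measuring derivations by the maximum number of inference applications along any branch of the derivation tree. Weakening is meant here both for labelled formulae $x:A$ and for relational atoms $Rxy$.

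First I would show that $\wk$ is height-preserving admissible by a straightforward induction on the height of a derivation of $\seq{\Rsym}{\Gamma}$. The base case is trivial, and the inductive step goes through because every rule of $\tenseone$ is closed under arbitrary extension of its relational and labelled-formula contexts: one simply weakens each premise by the induction hypothesis and reapplies the rule, noting that the side conditions $x \Rsym^{\mathsf{\Pi}} y$ and $y \Rsym^{\mathsf{\Pi}} x$ on the $\FDia$ and $\PDia$ rules are monotone, i.e., they are preserved when extra relational atoms are added to $\Rsym$.

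Next I would prove that every rule of $\tenseone$ is height-preserving invertible. For the $\FDia$ and $\PDia$ rules invertibility is immediate from weakening, since the premise is obtained from the conclusion by adding a labelled formula in an appropriate world. For the propositional rules and for $\FBox$, $\PBox$, I proceed by induction on the height of the derivation of the conclusion with a case split on whether the relevant principal connective was introduced by the final rule (in which case the premises are given directly) or not (in which case one applies the induction hypothesis to each premise and reapplies the last rule); the usual subtlety, a clash of principal labels, is handled by renaming eigenvariables, exploiting height-preserving admissibility of label substitution (itself a routine induction). With weakening and invertibility in hand, the admissibility of $\ctr$ follows by induction on the height of the derivation of $\seq{\Rsym}{\Gamma, x:A, x:A}$: if neither copy of $x:A$ is principal in the last rule, apply the induction hypothesis to the premises and reapply the rule; if one copy is principal, invert the last rule on the other copy to obtain premises of no greater height in which the duplication has been reduced to formulas of smaller complexity, and apply the induction hypothesis. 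Relational contraction is handled analogously.

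The main obstacle is the admissibility of $\cut_{1}$, which I would prove by the standard double induction on the lexicographic pair (cut-rank, cut-height), where cut-rank is the complexity of the cut formula and cut-height is the sum of the heights of the two cut premises. Commutative cases, where the cut formula is not principal in one of the premises, are handled by permuting the cut upward and invoking the induction hypothesis on cut-height; admissibility of weakening is used to align contexts, and contraction admissibility is used to merge duplicated material arising from branching rules. The delicate principal cases are cuts on $\FBox A$ and $\PBox A$ against $\FDia \overline{A}$ and $\PDia \overline{A}$ in premises whose side condition depends on $\Rsym^{\mathsf{\Pi}}$. Here I would use the closure properties of $\Rsym^{\mathsf{\Pi}}$ induced by the path axioms $\mathsf{\Pi}$ (in particular, closure under composition of path witnesses, as detailed in the appendix) to guarantee that after reducing to cuts on $A$ of strictly smaller rank, the resulting applications of $\FDia$ and $\PDia$ still satisfy their side conditions. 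This last interaction between cut reduction and the $\Rsym^{\mathsf{\Pi}}$-based propagation rules is where I expect the real work to lie.
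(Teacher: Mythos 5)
Your decomposition (weakening, then invertibility, then contraction, then cut, each by induction on height or on the lexicographic pair of cut-rank and cut-height) is the standard structural route, and for weakening, invertibility and contraction it is essentially what the literature the paper relies on does: the paper itself gives no proof of this lemma, deferring entirely to \cite{GorPosTiu11,GorIanTiu12}. For cut admissibility, however, your route genuinely differs from the cited one. In \cite{GorPosTiu11}, cut elimination is carried out in a \emph{shallow}, display-style system in which the path axioms appear as explicit structural rules, and cut admissibility for the deep system $\tenseone$ with propagation rules is then inherited through a proof of equivalence between the shallow and deep systems; you instead propose a direct, internal cut elimination in $\tenseone$. That can be made to work, but the principal case you flag is not just a matter of ``checking that the side conditions survive'': the reduction of a cut on $x:\FBox A$ against $x:\FDia \overline{A}$ needs an auxiliary admissibility statement of roughly the form ``if $\seq{\Rsym, Rxy}{\Gamma, y:A}$ is derivable with $y$ fresh and $x \Rsym^{\mathsf{\Pi}} z$, then $\seq{\Rsym}{\Gamma, z:A}$ is derivable,'' proved by its own induction using the closure of ${\sf\Pi}^{*}$ under composition of path axioms; without isolating and proving that lemma the principal reduction does not go through, since the eigenvariable $y$ of the $\FBox$ premise must be relocated to an arbitrary $\Rsym^{\mathsf{\Pi}}$-reachable label $z$, not merely to an immediate successor. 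The indirect route buys a modular treatment of the path axioms at the price of maintaining two calculi; yours, if completed with that propagation lemma, buys a self-contained argument inside $\tenseone$.
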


\begin{proof} See Fig.~\ref{fig:admissible-rules-biint} for rules\footnote{Note that since $\tenseone$ uses one sided sequents, we only consider instances of the rules where labelled formulae occur solely on the right of the sequent arrow.} and \cite{GorPosTiu11,GorIanTiu12} for details.
\end{proof}

\begin{figure}
$$
\infer[id]
{\seq{\Rsym}{x:\bar p, x: p, \Delta}}
{}
\qquad
\infer[\lor]
{\seq{\Rsym}{ x:A\lor B, \Delta}}
{\seq{\Rsym}{x:A, x:B,  \Delta}}
\qquad
\infer[\land]
{\seq{\Rsym}{x:A \land B, \Delta}}
{\seq{\Rsym}{x:A, \Delta}  & \seq{\Rsym}{x:B, \Delta}}
$$
$$
\infer[\FDia, ~ x \Rsym^\mathsf{\Pi} y]
{\seq{\Rsym}{x:\FDia A, \Delta} }
{
\seq{\Rsym}{x:\FDia A, y:A, \Delta}
}
\qquad \qquad
\infer[\FBox, \mbox{ $y$ fresh}]
{\seq{\Rsym}{x:\FBox A, \Delta} }
{\seq{\Rsym, Rxy}{y:A, \Delta} }
$$
$$
\infer[\PDia, ~ y \Rsym^\mathsf{\Pi} x]
{\seq{\Rsym}{x:\PDia A, \Delta} }
{
\seq{\Rsym}{x:\PDia A, y:A, \Delta}
}
\qquad \qquad
\infer[\PBox, \mbox{ $y$ fresh}]
{\seq{\Rsym}{x:\PBox A, \Delta} }
{\seq{\Rsym, Ryx}{y:A, \Delta} }
$$
\caption{The calculus $\tenseone$ for $\kt$ extended with a set of path axioms $\mathsf{\Pi}$.} 
\label{fig:labelled-rules}
\end{figure}

\begin{example}
\label{ex:tense-proof}
Consider the formula $\FBox \FDia \overline{q} \rightarrow \FBox(\FDia \overline{p} \lor \FDia \FDia p)$, which is a theorem in the logic $\ktp$ with $\mathsf{\Pi} = \{\FBox \FBox \overline{p} \vee \FDia p\}$. A proof of this formula is provided in Fig. \ref{fig:ex-tense-proof}.
\end{example}

\begin{figure*}
{\small
$$
\infer[\lor]
{\seq{}{x:\FDia \FBox q \lor \FBox(\FDia \overline{p} \lor \FDia \FDia p)}}
{
\infer[\FBox]
{\seq{}{x:\FDia \FBox q, x: \FBox(\FDia \overline{p} \lor \FDia \FDia p)}}
 {
 \infer[\lor]
 {
 \seq{Rxy}{x: \FDia \FBox q, y: \FDia \overline{p} \lor \FDia \FDia p}
 }
  {
  \infer[\FDia]
  {\seq{Rxy}{x: \FDia \FBox q, y: \FDia \overline{p}, y: \FDia \FDia p}}
   {
   \infer[\FBox]
   {\seq{Rxy}{x: \FDia \FBox q, y: \FBox q, y: \FDia \overline{p}, y: \FDia \FDia p}}
    {
    \infer[\FDia]
    {\seq{Rxy,Ryz}{x: \FDia \FBox q, z: q, y: \FDia \overline{p}, y: \FDia \FDia p}}
     {
     \infer[\FBox]
     {\seq{Rxy,Ryz}{x: \FDia \FBox q, z: \FBox q, z: q, y: \FDia \overline{p}, y: \FDia \FDia p}}
      {
      \infer[\FDia]
      {\seq{Rxy,Ryz,Rzw}{x: \FDia \FBox q, w:q,  z: q, y: \FDia \overline{p}, y: \FDia \FDia p}}
       {
       \infer[\FDia]
       {\seq{Rxy,Ryz,Rzw}{x: \FDia \FBox q, w:q,  z: q, y: \FDia \overline{p}, w: \overline{p}, y: \FDia \FDia p}}
        {
        \infer[\FDia]
        {\seq{Rxy,Ryz,Rzw}{x: \FDia \FBox q, w:q,  z: q, y: \FDia \overline{p}, w: \overline{p}, y: \FDia \FDia p, z:\FDia p}}
         {\infer[id]
         {\seq{Rxy,Ryz,Rzw}{x: \FDia \FBox q, w:q,  z: q, y: \FDia \overline{p}, w: \overline{p}, y: \FDia \FDia p, z:\FDia p, w:p}}
          {}
         }
        }
       }
      }
     }
    }
   }
  }
 }
}
$$
}
\caption{A proof in $\tenseone$ where $\mathsf{\Pi} = \{\FBox \FBox \overline{p} \vee \FDia p\}$.}
\label{fig:ex-tense-proof}
\end{figure*}

As stated in Sec.~\ref{sec:overview}, we extend the notion of an interpolant to a set of nested sequents. 
In our definition of interpolants, we are interested only in duality via
cut. In particular, the relational atoms (encoding the tree shape of a nested
sequent) are not explicitly represented in the interpolants since they can be
recovered from the contexts of the sequents in which the interpolants are used. We therefore define 
a {\em flat sequent} to be  a sequent without relational atoms. 
For classical tense logic, a flat sequent is thus a multiset of labelled formulas.

\begin{definition}
\label{def:interpolant}
An interpolant, denoted $\Isym$, is a set of 
flat sequents.
\end{definition}

For example, the set below is an interpolant in our context:
$$\{ (\seq{}{x:A, y: B, z:W}) , (\seq{}{x:C, y:D}), (\seq{}{u: E}) \}.$$

Since our interpolant is no longer a formula, we need to define the dual of an interpolant in order to generalise the statement of 
the interpolation result to sequents.  We have informally explained in Sec.~\ref{sec:overview}
that duality in this case is defined via cut.
%
Intuitively, given an interpolant $\Isym_1$, 
its \emph{dual} is any set of nested sequents $\Isym_2$ such that the empty sequent can be derived from $\Isym_1$ and $\Isym_2$ using cut (possibly with contraction). 
For example, given $\Isym_1 = \{ (\seq{}{x:A, y:B}), (\seq{}{u:C}), (\seq{}{v:D}) \}$, 
there are several candidates for its dual:
$$
\begin{array}{rcl}
\Isym_2 & = & \{ (\seq{}{x:\overline A }), ~ (\seq{}{y:\overline B, u:\overline C, v:\overline D})  \} \\
\Isym_3 & = & \{ (\seq{}{x:\overline A, u:\overline C, v:\overline D}),~ (\seq{}{y:\overline B}) \} \\
\Isym_4 & = & \{ (\seq{}{x: \overline A, u:\overline C}), (\seq{}{x:\overline A, v:\overline D}), (\seq{}{y:\overline B, u:\overline C}), 
\{y:\overline B, v:\overline D\} \}
\end{array}
$$
The empty sequent can be derived from 
$\Isym_1 \cup {\Isym_i}$, for $i = 2,3,4$ using cut (and contraction, in the case of $\Isym_4$). 
In principle, any of the dual candidates to $\Isym_1$ can be used, but to make the construction
of the interpolants deterministic, our definition below will always choose $\Isym_4$, as it is relatively
straightforward to define as a function of $\Isym_1.$

\begin{definition}
\label{def:orthogonal}
For an interpolant $\Isym = \{ \seq{}{ \Lambda_1} , \ldots , \seq{}{\Lambda_n} \}$,
the {\em $\trans$} $\orth{\Isym}$ is defined as
$$
\orth{\Isym} = \{ (\seq{}{x_1:\overline{A_1}, \ldots, x_n:\overline{A_n}})  \mid \forall i \in \{1,\dots,n\}, x_i:A_i \in \Lambda_i \}.
$$
\end{definition}
For example, the $\trans$ of $\Isym = \{(\seq{}{x:A,y:B}), (\seq{}{x:C,z:D})\}$ is
$$
\orth{\Isym} = \{(\seq{}{x:\overline{A},x:\overline{C}}), (\seq{}{x:\overline{A},z:\overline{D}}), 
(\seq{}{y:\overline{B},x:\overline{C}}), (\seq{}{y:\overline{B},z:\overline{D}})\}.
$$

\begin{definition}
\label{def:box-dia-interpolants}
Let $\Isym$ be the interpolant
$$
\{ (\seq{}{ \Delta_1, y:B_{1,1}, \ldots, y:B_{1,k_1}}) , \ldots ,
   (\seq{}{ \Delta_n, y:B_{n,1}, \ldots, y:B_{n,k_n} }) \}
$$
where $y$ does not occur in $\Delta_1, \ldots, \Delta_n$ 
and define
$$
\QBox \Isym^y_x := \{ (\seq{}{ \Delta_1, x: \QBox \bigvee_{j=1}^{k_1} B_{1,j} }) , 
  \ldots,
  (\seq{}{ \Delta_n, x:\QBox \bigvee_{j=1}^{k_n} B_{n,j} }) \}
$$
where an empty disjunction is $\bot$. 
\end{definition}

\begin{definition}
\label{def:interpolant-sequent}
We define an \emph{interpolation sequent} to be a syntactic object of the form 
$\iseq{\Rsym, \Gamma_1 \mid \Gamma_2}{\Delta_1 \mid \Delta_2}{\Isym}$, 
where $\Rsym$ is a set of relational atoms, $\Gamma_{i}$ and $\Delta_{i}$ are multisets of labelled formulae (for $i \in \{1,2\}$), and $\Isym$ is an interpolant. Note that in the interpolation calculus $\tensetwo$, $\Gamma_{1} = \Gamma_{2} = \emptyset$ (see Fig.~\ref{fig:int-rules}).
\end{definition}

The vertical bar $\mid$ in an interpolation sequent marks where the sequent will be partitioned, with the left partition
serving as the antecedent and the right partition serving as the consequent in the interpolation statement. 
For example, the initial interpolation sequent shown below left 
splits into the two sequents shown below right 
$$
\infer[id]
{
\Rsym \vdash \Gamma \mid x:p, x:\overline{p}, \Delta \mathrel{\Vert} \{(\vdash x:\top)\}
}
{}
\qquad
(\Rsym \vdash \Gamma, x:\top)
\qquad
(\Rsym \vdash x:\bot,  x:p, x:\overline{p},\Delta)
$$
where the first member $\Gamma$ of the split is placed in the left sequent and the second member $x:p,x:\overline{p},\Delta$ is placed in the right sequent (note that the relational atoms $\Rsym$ are inherited by both sequents). We think of the interpolant $x:\top$ as being \emph{implied by} the left sequent, and so, we place it in the left sequent, and we think of the interpolant as \emph{implying} the right sequent, so we place its negation (viz. $x:\bot$) in the right sequent. Observe that an application of $\cut_{1}$ between the two sequents, yields $\Rsym \vdash \Gamma, x:\overline{p}, x:p, \Delta$ without the interpolant. Performing a $\cut_{1}$ in this way \emph{syntactically establishes} (without evoking the semantics) that the interpolant is indeed an interpolant (so long as the interpolant satisfies certain other properties; cf. Lem.~\ref{lm:interpolant-labels-literals} below).

The interpolation calculus $\tensetwo$ (Fig.~\ref{fig:int-rules}) uses interpolation sequents. More importantly, the calculus succinctly represents our algorithm for constructing interpolants. Most of the rules are straightforward counterparts of the 
proof system in Fig.~\ref{fig:labelled-rules}, except for the orthogonality rule $\orthrule.$ The orthogonality rule is arguably 
the most novel aspect of our interpolation calculus, as it imposes a strong requirement on our generalised  notion of
interpolants, that it must respect the underlying duality in the logic.  The key to the correctness of this rule is given in the Persistence Lemma below, which shows that double-orthogonal transformation always retains some sequents in the original interpolant. 

\begin{lemma}[Persistence]
\label{lm:double-orthogonal}
If $\seq{}{\Lambda} \in \orth{\orth{\Isym}}$, then there exists a $\seq{}{\Lambda'} \in \Isym$ such that $\Lambda' \subseteq \Lambda.$
\end{lemma}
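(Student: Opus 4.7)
\emph{Proof plan.} The plan is to prove the Persistence Lemma by a combinatorial argument that unpacks the two layers of choice functions implicit in Def.~\ref{def:orthogonal}. Writing $\Isym = \{\seq{}{\Lambda_1}, \ldots, \seq{}{\Lambda_n}\}$, every element of $\orth{\Isym}$ is determined by a first-layer choice function $f$ with $f(i) \in \Lambda_i$ for each $i \in \{1,\ldots,n\}$; explicitly, it is the sequent $\Sigma_f = \{\,\overline{f(i)} : 1 \le i \le n\,\}$, where $\overline{x:A}$ abbreviates $x:\overline{A}$. Applying the orthogonal a second time, every $(\seq{}{\Lambda}) \in \orth{\orth{\Isym}}$ is determined by a second-layer choice function $g$ that picks some $g(f) \in \Sigma_f$ for each first-layer $f$, and one has $\Lambda = \{\,\overline{g(f)} : f\,\}$. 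In particular, every element of $\Lambda$ has the form $h(i)$ for some first-layer $h$ and some index $i$ with $g(h) = \overline{h(i)}$.

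Next I would reduce the lemma to the following combinatorial claim: there exists some $i^* \in \{1, \ldots, n\}$ with $\Lambda_{i^*} \subseteq \Lambda$. Taking $\Lambda' := \Lambda_{i^*}$ then yields $\seq{}{\Lambda'} \in \Isym$ with $\Lambda' \subseteq \Lambda$, which is exactly the required conclusion.

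To prove the combinatorial claim I would argue by contradiction. Suppose no such $i^*$ exists; then for every $i$ there is a labelled formula $h(i) \in \Lambda_i$ with $h(i) \notin \Lambda$. This assignment $h$ is itself a first-layer choice function, so $\Sigma_h \in \orth{\Isym}$, and hence $g(h) \in \Sigma_h$ is defined and has the form $\overline{h(i_0)}$ for some index $i_0$. Taking duals, $\overline{g(h)} = h(i_0)$; but $\overline{g(h)} \in \Lambda$ by the very construction of $\Lambda$ from $g$, whereas $h(i_0) \notin \Lambda$ by the choice of $h$. This contradiction establishes the claim.

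The main obstacle is really just keeping the two-layer choice-function bookkeeping clean, together with a quick sanity check of the degenerate cases (empty $\Isym$ or some empty $\Lambda_i$); these are either vacuous or already witnessed by $i^* = i$. Conceptually the argument is a syntactic analogue of the classical inclusion $A \subseteq A^{\perp\perp}$, recast in terms of multisets of labelled formulas: once the choice-function description of $\orth{\orth{\Isym}}$ is made explicit, the contradiction above is essentially forced by a pigeonhole-style selection, and the restriction to $\Lambda'\subseteq\Lambda$ (rather than equality) is precisely what this indirection of choices buys us.
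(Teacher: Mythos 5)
Your proposal is correct and is essentially the paper's own proof: both argue by contradiction, select from each $\Lambda_i$ a formula not in $\Lambda$ (your choice function $h$ is the paper's multiset $\Theta$), observe that its dual lies in $\orth{\Isym}$, and then use the definition of $\orth{\orth{\Isym}}$ to force some selected formula into $\Lambda$, contradicting the selection. The choice-function bookkeeping is just a more explicit rendering of the same argument.
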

\begin{proof}
Suppose otherwise, i.e., there exists $\seq{}{\Lambda} \in \orth{\orth{\Isym}}$ such that for all $\seq{}{\Lambda'} \in \Isym$,
we have $\Lambda' \not \subseteq \Lambda.$
Suppose $\Isym = \{\seq{}{\Lambda_1}, \ldots, \seq{}{\Lambda_n}\}.$ 
Then for each $i$, there must be a labelled formula $x_i : A_i \in \Lambda_i$
such that $x_i:A_i \not \in \Lambda.$ Let $\Theta = \{x_1:A_1, \ldots, x_n:A_n\}.$ By construction, we must have
that $\Theta \cap \Lambda = \emptyset.$
However, by Def.~\ref{def:orthogonal}, we have $\overline{\Theta} \in \orth{\Isym}$, and since $\Lambda \in \orth{\orth{\Isym}}$, 
by Def.~\ref{def:orthogonal}, $\Theta \cap \Lambda \not = \emptyset$. Contradiction.
\end{proof}

\begin{figure*}
{\small
$$
\infer[id]
{\iseq{\Rsym}{\Gamma, x:\bar p \mid x: {p}, \Delta}{\{ (\seq{}{x: p}) \} } }
{}
\quad
\infer[id]
{\iseq{\Rsym}{ \Gamma \mid x:\bar p, x:p, \Delta}{\{ (\seq{}{ x:\top})  \}}}
{}
\quad\infer[\orthrule]
{\iseq{\Rsym}{\Delta \mid \Gamma}{\orth{\Isym}}}
{\iseq{\Rsym}{\Gamma \mid \Delta}{\Isym}}
$$
$$
\infer[\lor]
{\iseq{\Rsym}{\Gamma \mid x:A\lor B,\Delta}{\Isym}}
{\iseq{\Rsym}{\Gamma \mid x:A, x:B, \Delta}{\Isym}}
\qquad
\infer[\land]
{\iseq{\Rsym}{\Gamma \mid x:A \land B, \Delta}{\Isym_1 \cup \Isym_2}}
{\iseq{\Rsym}{ \Gamma \mid x:A, \Delta}{\Isym_1}  & \iseq{\Rsym}{ \Gamma \mid x:B, \Delta}{\Isym_2}}
$$
$$
\infer[\FDia, ~ x \Rsym^\mathsf{\Pi} y]
{\iseq{\Rsym}{\Gamma \mid x:\FDia A, \Delta}{\Isym}}
{
\iseq{\Rsym}{ \Gamma \mid x:\FDia A, y:A, \Delta}{\Isym}
}
\qquad
\infer[\FBox, \mbox{ $y$ fresh}]
{\iseq{\Rsym}{ \Gamma \mid x:\FBox A, \Delta}{\FBox \Isym^y_x} }
{\iseq{\Rsym, Rxy}{ \Gamma \mid y:A, \Delta}{\Isym}}
$$
$$
\infer[\PDia, ~ y \Rsym^\mathsf{\Pi} x]
{\iseq{\Rsym}{\Gamma \mid x:\PDia A, \Delta}{\Isym}}
{
\iseq{\Rsym}{ \Gamma \mid x:\PDia A, y:A, \Delta}{\Isym}
}
\qquad
\infer[\PBox, \mbox{ $y$ fresh}]
{\iseq{\Rsym}{ \Gamma \mid x:\PBox A, \Delta}{\PBox \Isym^y_x} }
{\iseq{\Rsym, Ryx}{ \Gamma \mid y:A, \Delta}{\Isym}}
$$
}
\caption{Calculus $\tensetwo$ for constructing interpolants for $\kt$ extended with path axioms $\mathsf{\Pi}$.}
\label{fig:int-rules}
\end{figure*}

Given a formula $A$, we define the set of propositional variables $var(A)$ of $A$ to be the set $\{p \ | \ p \text{ or } \overline{p} \text{ in } A\}$. 
This notation extends straightforwardly to sets of formulae and interpolants. 

We write $\pseq{\Rsym}{\Gamma, \Delta}$ to denote that the sequent $\seq{\Rsym}{\Gamma, \Delta}$ is provable in $\tenseone$. Similarly, $\piseq{\Rsym}{\Gamma \mid \Delta}{ \Isym}$
denotes that the sequent $\iseq{\Rsym}{\Gamma \mid \Delta}{\Isym}$
is provable in $\tensetwo.$ 

\begin{definition}
\label{def:interpolation-property}
A logic has the Craig interpolation property iff for every implication $A \Rightarrow B$ in the logic, there is a formula $C$ such that (i) $var(C) \subseteq var(A) \cap var(B)$ and (ii) $A \Rightarrow C$ and $C \Rightarrow B$ are in the logic, where $\Rightarrow$ is taken to be the implication connective of the logic.
\end{definition}

We now establish that each tense logic $\ktp$ possess the Craig interpolation property when the implication connective is taken to be $\rightarrow$. 
To achieve this, we begin by showing that an interpolant sequent can be constructed from any cut-free proof.
\begin{lemma}
\label{lm:interpolant-labels-literals}
If $\pseq{\Rsym}{\Gamma, \Delta}$, then there exists an $\Isym$ such that $\piseq{\Rsym}{\Gamma \mid \Delta}{\Isym}$,  
$var(\Isym) \subseteq var(\Gamma) \cap var(\Delta)$, and all labels occuring in $\Isym$ also occur in $\Rsym,\Gamma$ or $\Delta$.
\end{lemma}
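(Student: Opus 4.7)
The plan is to induct on the height of the given $\tenseone$-derivation of $\seq{\Rsym}{\Gamma,\Delta}$, and to exhibit, for any partition of the conclusion multiset into $\Gamma$ and $\Delta$, a suitable $\Isym$. The overall strategy is to mimic the $\tenseone$-derivation step-by-step inside $\tensetwo$, applying at each step the corresponding $\tensetwo$-rule (possibly sandwiched between two applications of $\orthrule$) depending on whether the principal formula of the current rule lies in $\Gamma$ or in $\Delta$.

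In the base case the last rule is $id$, so the conclusion has the form $\seq{\Rsym}{x:\bar p, x:p, \Delta'}$. The pair $x:p, x:\bar p$ can be distributed between $\Gamma$ and $\Delta$ in four ways. Two of these distributions fit the two $id$-rules of $\tensetwo$ directly, producing the interpolants $\{(\vdash x:p)\}$ and $\{(\vdash x:\top)\}$; the remaining two are obtained from the former by one extra application of $\orthrule$, producing $\{(\vdash x:\bar p)\}$ and $\{(\vdash x:\bot)\}$. In all four cases the variable condition is immediate: either $p$ occurs on both sides of the partition, or the interpolant is $\top$ or $\bot$. For the inductive step, I case-split on the last rule $r$. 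If the principal formula of $r$ lies in $\Delta$, I invoke the IH on each premise with the newly-introduced formulas placed on the right of $\mid$, and then apply the corresponding $\tensetwo$-rule to assemble the new interpolant: the rules for $\lor$, $\FDia$, $\PDia$ pass the interpolant through unchanged; the rule for $\land$ takes the union of the two premise interpolants; and the rules for $\FBox$ and $\PBox$ apply the construction of Def.~\ref{def:box-dia-interpolants}. If instead the principal formula lies in $\Gamma$, I first apply the IH with it placed on the left, then use $\orthrule$ to swap partitions, apply the corresponding $\tensetwo$-rule (whose principal formula is now on the right), and finally apply $\orthrule$ once more to restore the intended partition.

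The main obstacle I expect is the treatment of the $\FBox$ (and symmetric $\PBox$) case. Here, in the premise $\seq{\Rsym, Rxy}{y:A, \ldots}$ the label $y$ is fresh, and hence by the label condition provided by the IH every occurrence of $y$ in the premise interpolant $\Isym'$ must be of the form $y:B$ for some formula $B$. This is exactly what Def.~\ref{def:box-dia-interpolants} demands in order for the construction $\FBox \Isym'^y_x$ to be well-defined. I also need to verify that $\FBox \Isym'^y_x$ inherits the variable and label conditions: no new propositional variables are introduced (the construction merely collects $y$-labelled formulas into a single $\FBox$ of their disjunction), and the fresh label $y$ is replaced by $x$, which is already present in the ambient sequent. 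For every other rule and for the swaps via $\orthrule$, the variable and label conditions are preserved by a direct inspection of Def.~\ref{def:orthogonal} and of the shapes of the $\tensetwo$-rules, so these checks pose no serious difficulty.
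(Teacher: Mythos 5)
Your proposal is correct and is essentially the paper's intended argument: the paper's proof is the one-liner ``induction on the height of the proof of $\seq{\Rsym}{\Gamma,\Delta}$ using the rules of $\tensetwo$,'' and your elaboration --- sandwiching rules between two applications of $\orthrule$ when the principal formula sits in the left partition, and using freshness of $y$ to justify the shape required by Def.~\ref{def:box-dia-interpolants} in the $\FBox$/$\PBox$ cases --- is exactly how that induction is meant to go (compare Fig.~\ref{fig:ex-tense-interpolant}). No gaps.
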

\begin{proof}
Induction on the height of the proof of $\seq{\Rsym}{\Gamma, \Delta}$ and by using the rules of $\tensetwo$.
\end{proof}

The next lemma establishes the correctness of the interpolants constructed from our interpolation calculus in Fig.~\ref{fig:int-rules}.
Its proof can be found in App.~\ref{app-proofs}. 
\begin{lemma}
\label{lm:interpolant}
For all $\Rsym, \Gamma, \Delta$ and $\Isym$, 
if $\piseq{\Rsym}{ \Gamma \mid \Delta}{\Isym}$, then 
\begin{enumerate}
\item For all $(\seq{}{\Lambda}) \in \Isym,$ we have $\pseq{\Rsym}{ \Gamma, \Lambda}$ and 
\item For all $(\seq{}{\Theta}) \in \orth{\Isym}$, we have $\pseq{\Rsym}{ \Theta, \Delta}.$
\end{enumerate}
\end{lemma}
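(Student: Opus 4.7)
The plan is to proceed by induction on the height of the derivation of $\piseq{\Rsym}{\Gamma \mid \Delta}{\Isym}$ in $\tensetwo$, with a case analysis on the last rule applied, freely invoking the admissibility of weakening and contraction and the invertibility of all rules provided by Lemma~\ref{lm:labelled-calc-properties}. The two identity axioms reduce immediately to $\tenseone$ identity instances after weakening off the $x{:}\top$ or $x{:}\bot$ literal. The $\lor$ case is routine since the interpolant is unchanged. For $\land$, every $(\seq{}{\Theta}) \in \orth{\Isym_1 \cup \Isym_2}$ decomposes as $\Theta = \Theta_1 \cup \Theta_2$ with $(\seq{}{\Theta_i}) \in \orth{\Isym_i}$, so applying the two induction hypotheses, weakening each derivation to the combined context, and closing with the $\tenseone$ $\land$ rule yields condition~(2); condition~(1) splits trivially along the union. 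The diamond rules $\FDia$ and $\PDia$ preserve the interpolant, so both conditions follow from the induction hypothesis composed with one diamond step in $\tenseone$.

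For the orthogonality rule, condition~(1) of the conclusion coincides with condition~(2) of the premiss and is immediate. Condition~(2) requires, for each $(\seq{}{\Theta'}) \in \orth{\orth{\Isym}}$, a derivation of $\pseq{\Rsym}{\Theta', \Gamma}$; the Persistence Lemma~\ref{lm:double-orthogonal} supplies some $(\seq{}{\Lambda'}) \in \Isym$ with $\Lambda' \subseteq \Theta'$, and the induction hypothesis on condition~(1) followed by weakening produces the required derivation. This case is the place where the strong duality built into the orthogonality definition is essential.

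The main obstacle is the box rules $\FBox$ and $\PBox$, in which the interpolant is transformed by the operators of Definition~\ref{def:box-dia-interpolants}. For condition~(1) in the $\FBox$ case, each element $(\seq{}{\Delta_i, x:\FBox \bigvee_j B_{i,j}})$ of $\FBox \Isym^y_x$ is reconstructed from the premiss entry $(\seq{}{\Delta_i, y:B_{i,1}, \ldots, y:B_{i,k_i}})$ by applying the induction hypothesis, folding the $y:B_{i,j}$'s together via repeated $\lor$, and closing with the $\tenseone$ $\FBox$ rule; the freshness side condition is inherited from the original $\FBox$ application in $\tensetwo$. The delicate point is condition~(2). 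A sequent $\Theta \in \orth{\FBox \Isym^y_x}$ is built by choosing, for each index $i$, either some $z_i:C_i \in \Delta_i$ or the new summand $x:\FBox \bigvee_j B_{i,j}$ and negating the choice; partition the indices as $I_1 \cup I_2$ accordingly. To derive $\pseq{\Rsym}{\Theta, x:\FBox A, \Delta}$, I apply $\FBox$ to introduce $Rxy$ with fresh $y$ together with $y:A$, then propagate each $x:\FDia \bigwedge_j \overline{B_{i,j}}$ (for $i \in I_2$) to $y$ using $\FDia$, and then use $\land$ to branch on a choice $f(i) \in \{1, \ldots, k_i\}$ for each $i \in I_2$. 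Each resulting branch contains exactly the labelled formulas of the $\orth{\Isym}$ selection that picks $y:B_{i,f(i)}$ when $i \in I_2$ and $z_i:C_i$ when $i \in I_1$, so condition~(2) of the induction hypothesis on the premiss closes each branch, with weakening restoring the residual $\FDia$-formulas. The $\PBox$ case is symmetric, using $Ryx$ and $\PDia$-propagation into a fresh past world.
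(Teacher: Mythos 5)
Your proposal is correct and follows essentially the same route as the paper's proof: induction on derivation height, with the $\orthrule$ case resolved via the Persistence Lemma plus weakening, the $\land$ case via the product decomposition of $\orth{\Isym_1 \cup \Isym_2}$, and the $\FBox$ case by applying $\FBox$, propagating the negated box-interpolant formulae with $\FDia$, and branching with $\land$ back to members of $\orth{\Isym'}$. The only difference is that you sketch the box case in full generality (arbitrary many boxed-interpolant formulae and disjuncts), where the paper works out only the representative sub-case of one such formula with $k=2$.
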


To prove Craig interpolation, we need to construct formula interpolants. Lem.~\ref{lm:interpolant} provides
sequent interpolants, so the next step is to show how one can derive a formula interpolant from a sequent interpolant.
This is possible if the formulas in an interpolant are all prefixed with the same label.
In that case, there is a straightforward interpretation of the interpolant as a formula. 
More precisely, let $\Isym = \{(\seq{}{\Lambda_{1}}), \ldots, (\seq{}{\Lambda_{n}})\}$, where $\Lambda_{i} = \{x:A_{i,1}, \ldots, x:A_{i,k_{i}}\}$ for all $1 \leq i \leq n$. Then, its formula interpretation is given by  $\bigwedge_{i=1}^{n} \bigvee_{j=1}^{k_{i}} A_{i,j}.$
Given such an interpolant $\Isym$, we write $\bigwedge \bigvee \Isym$ to denote its formula interpretation.
The following lemma is a straightforward consequence of this interpretation.
\begin{lemma}
\label{lm:deriving-interpolant-formula}
Let $\Isym = \{\seq{}{\Lambda_{1}}, \ldots, \seq{}{\Lambda_{n}}\}$ be an interpolant with $\Lambda_{i} = \{x:A_{i,1}, \ldots, x:A_{i,k_{i}}\}$ for each $1 \leq i \leq n$. 
For any multiset of relational atoms $\Rsym$ and multiset of labelled formulae $\Gamma$, if $\pseq{\Rsym}{\Gamma,\Lambda}$ for all $\seq{}{\Lambda} \in {\Isym}$, then $\pseq{\Rsym}{\Gamma,x:\bigwedge \bigvee \Isym}$.
\end{lemma}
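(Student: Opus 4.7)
The plan is to prove this by constructing a proof of $\seq{\Rsym}{\Gamma, x:\bigwedge \bigvee \Isym}$ directly in $\tenseone$, working bottom-up from the goal by successively applying the $\land$ and $\lor$ rules of Fig.~\ref{fig:labelled-rules} until the hypotheses of the lemma are reached. Since $\land$ and $\lor$ are both invertible (and indeed simply decompose the principal formula in place) by Lemma~\ref{lm:labelled-calc-properties}, this reduction is unproblematic.

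More concretely, starting from $\seq{\Rsym}{\Gamma, x:\bigwedge_{i=1}^{n} \bigvee_{j=1}^{k_{i}} A_{i,j}}$, I would first apply the $\land$ rule $n-1$ times, splitting the proof into $n$ subgoals of the form $\seq{\Rsym}{\Gamma, x:\bigvee_{j=1}^{k_{i}} A_{i,j}}$, one for each conjunct. For each such subgoal, I would then apply the $\lor$ rule $k_{i}-1$ times to unfold the disjunction into the flat multiset $\seq{\Rsym}{\Gamma, x:A_{i,1}, \ldots, x:A_{i,k_{i}}}$, which is exactly $\seq{\Rsym}{\Gamma, \Lambda_{i}}$. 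This is provable by the hypothesis $\pseq{\Rsym}{\Gamma, \Lambda_{i}}$ for every $\seq{}{\Lambda_{i}} \in \Isym$, which closes each branch. Structurally this is really just an induction on $n$ (the number of conjuncts), with an inner induction on $k_{i}$ (the size of each disjunction).

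The only subtlety is the degenerate case where some $k_{i} = 0$, so that the corresponding disjunct is the empty disjunction $\bot$; here $\Lambda_{i} = \emptyset$, and the hypothesis reduces to $\pseq{\Rsym}{\Gamma}$. We then need $\pseq{\Rsym}{\Gamma, x:\bot}$, which follows immediately from the admissibility of weakening in $\tenseone$ (Lemma~\ref{lm:labelled-calc-properties}). A dual remark applies if $n = 0$ (the interpolant itself is empty), where the convention gives $\bigwedge \bigvee \Isym = \top$ and the proof is a one-step application of an appropriate axiom or again weakening from an identity. I do not expect any real obstacle: the lemma is essentially the observation that the formula interpretation $\bigwedge \bigvee \Isym$ is designed precisely to match the inference pattern of the $\land$ and $\lor$ rules, so the proof is a mechanical unfolding.
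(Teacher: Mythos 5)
Your proof is correct and matches the argument the paper intends: the paper gives no explicit proof of Lemma~\ref{lm:deriving-interpolant-formula}, stating only that it is ``a straightforward consequence'' of the formula interpretation, and your mechanical unfolding via $n-1$ applications of $\land$ followed by $k_i-1$ applications of $\lor$ down to the provable sequents $\seq{\Rsym}{\Gamma,\Lambda_i}$ is exactly that straightforward argument. Your handling of the degenerate cases (empty disjunction via weakening admissibility, empty interpolant via the convention for $\top$) is a sensible completion of details the paper leaves implicit.
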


However, the formula-interpolant derived in Lem.~\ref{lm:deriving-interpolant-formula} gives only one-half of the full picture, as one
still needs to show that the orthogonal of a sequent interpolant admits a dual interpretation as a formula.
A key to this is the following Duality Lemma that shows that orthogonality behaves like negation. 
\begin{lemma}[Duality]
\label{lm:dual}
Given an interpolant $\Isym$, the empty sequent is derivable from $\Isym \cup \orth{\Isym}$ using the $\cut_{1}$ rule and
the contraction rule.
\end{lemma}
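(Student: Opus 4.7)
The plan is to proceed by induction on $n$, the cardinality of $\Isym = \{(\seq{}{\Lambda_1}), \ldots, (\seq{}{\Lambda_n})\}$. The base case $n = 0$ is handled by the vacuous reading of Def.~\ref{def:orthogonal}: the empty tuple of choices yields $(\seq{}{}) \in \orth{\Isym}$, so nothing needs to be derived. If at any stage some $\Lambda_i$ is empty, then $(\seq{}{}) \in \Isym$ itself and we are again immediately done.

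For the inductive step, assume $n \geq 1$ and each $\Lambda_i$ is non-empty. Write $\Lambda_1 = \{y_1{:}B_1, \ldots, y_l{:}B_l\}$ and let $\Isym'' := \{(\seq{}{\Lambda_2}), \ldots, (\seq{}{\Lambda_n})\}$. The key observation, which is immediate from Def.~\ref{def:orthogonal}, is that for every $(\seq{}{\Theta}) \in \orth{\Isym''}$ and every $j \in \{1, \ldots, l\}$, the sequent $(\seq{}{y_j{:}\overline{B_j}, \Theta})$ belongs to $\orth{\Isym}$. In other words, elements of $\orth{\Isym}$ arise exactly by prefixing some negated formula from $\Lambda_1$ onto an element of $\orth{\Isym''}$.

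The central sub-derivation is then the following: for each fixed $(\seq{}{\Theta}) \in \orth{\Isym''}$, I derive $(\seq{}{\Theta})$ from the leaf $(\seq{}{\Lambda_1}) \in \Isym$ together with the $l$ leaves $(\seq{}{y_j{:}\overline{B_j}, \Theta}) \in \orth{\Isym}$, by iterating $\cut_{1}$ first on $B_1$, then on $B_2$, and so on, applying contraction after each cut to collapse the accumulating duplicate copies of $\Theta$. After $l$ cuts the remaining sequent is exactly $(\seq{}{\Theta})$. Once every element of $\orth{\Isym''}$ has been derived in this way, I apply the induction hypothesis to $\Isym''$ (which has $n-1$ sequents) to obtain $(\seq{}{})$ from $\Isym'' \cup \orth{\Isym''}$; since $\Isym'' \subseteq \Isym$ and each element of $\orth{\Isym''}$ was just derived from $\Isym \cup \orth{\Isym}$ using only $\cut_{1}$ and contraction, the derivation is complete.

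The anticipated obstacle is not proof-theoretic but combinatorial: I must spell out carefully the correspondence between elements of $\orth{\Isym}$ and ``$\Lambda_1$-annotated'' elements of $\orth{\Isym''}$, and verify that the iterated cut-and-contract pattern leaves exactly $(\seq{}{\Theta})$ without stray formulas. No further insight into the modal structure is needed, since $\cut_{1}$ and contraction on flat sequents behave just as in ordinary one-sided classical sequent calculus, and the relational atoms play no role in the derivation.
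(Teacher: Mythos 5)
Your proposal is correct and follows essentially the same route as the paper's proof: induction on $|\Isym|$, peeling off one sequent $\Lambda_1$, observing that $\orth{\Isym}$ consists of the elements of $\orth{\Isym''}$ each extended by one negated formula of $\Lambda_1$, deriving each $(\seq{}{\Theta}) \in \orth{\Isym''}$ by iterated cut against $\Lambda_1$ with contraction, and then invoking the induction hypothesis. Your explicit treatment of the $n=0$ and empty-$\Lambda_i$ edge cases is a small addition the paper omits, but otherwise the two arguments coincide.
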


An interesting consequence of Duality Lemma is that it translates into duality in the above formula
interpretation as well, as made precise in the following lemma.
\begin{lemma}
\label{lm:deriving-dual-interpolant-formula}
Let $\Isym = \{\seq{}{\Lambda_{1}}, \ldots, \seq{}{\Lambda_{n}}\}$ be an interpolant with $\Lambda_{i} = \{x:A_{i,1}, \ldots, x:A_{i,k_{i}}\}$ for each $1 \leq i \leq n$. 
For any multiset of relational atoms $\Rsym$ and multiset of labelled formulae $\Delta$, if $\pseq{\Rsym}{{\Theta},\Delta}$ for all $\seq{}{\Theta} \in \orth{\Isym}$, then $\pseq{\Rsym}{x:\overline{\bigwedge \bigvee \Isym},\Delta}$.
\end{lemma}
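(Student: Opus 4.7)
The plan is to construct the desired derivation in $\tenseone$ by decomposing $x:\overline{\bigwedge\bigvee\Isym}$ using the right $\lor$ and $\land$ rules bottom-up, until every leaf of the derivation matches one of the hypotheses $\pseq{\Rsym}{\Theta, \Delta}$ for $\seq{}{\Theta} \in \orth{\Isym}$.

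First I would push the negation inwards using Def.~\ref{def:tense-negation} to obtain
$$
\overline{\bigwedge\bigvee\Isym} \;=\; \bigvee_{i=1}^{n} B_i
\qquad\text{where}\qquad
B_i := \bigwedge_{j=1}^{k_i} \overline{A_{i,j}}.
$$
Reading the $\lor$ rule of Fig.~\ref{fig:labelled-rules} bottom-up $(n-1)$ times reduces the target $\pseq{\Rsym}{x:\bigvee_{i=1}^{n} B_i, \Delta}$ to the single sequent $\pseq{\Rsym}{x:B_1, \ldots, x:B_n, \Delta}$. I would then, for each $i \in \{1, \ldots, n\}$ in turn, apply the branching $\land$ rule bottom-up to fully decompose $B_i$ into its $k_i$ conjuncts. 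Since $\land$ preserves context on both premises, after all $n$ conjunctions have been decomposed the resulting tree has exactly $\prod_{i=1}^{n} k_i$ leaves, each of the form $\pseq{\Rsym}{x:\overline{A_{1,j_1}}, \ldots, x:\overline{A_{n,j_n}}, \Delta}$, indexed by a tuple $(j_1,\ldots,j_n) \in \prod_i \{1,\ldots,k_i\}$.

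The final step is to observe, from Def.~\ref{def:orthogonal} together with the assumption that every $\Lambda_i$ uses only the label $x$, that $\orth{\Isym}$ is precisely the set of sequents $\seq{}{x:\overline{A_{1,j_1}}, \ldots, x:\overline{A_{n,j_n}}}$ ranging over the same index set $\prod_i \{1, \ldots, k_i\}$. Hence the leaves of the constructed tree exhaust the hypothesis set, and stacking the $\lor$ and $\land$ inferences on top of them yields the desired derivation.

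No substantive obstacle is expected; the only care required is the combinatorial bookkeeping that aligns the $\land$-decomposition tuples with the enumeration of $\orth{\Isym}$, which is essentially immediate once one unfolds Def.~\ref{def:orthogonal} in the single-label case. If preferred, the explicit construction above can be replaced by a one-line argument using the invertibility of $\lor$ and $\land$ from Lem.~\ref{lm:labelled-calc-properties}, but the explicit bottom-up presentation makes the correspondence with $\orth{\Isym}$ most transparent, and it mirrors the dual construction already used in Lem.~\ref{lm:deriving-interpolant-formula}.
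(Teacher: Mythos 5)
Your proof is correct, but it takes a genuinely different route from the paper's. The paper obtains this lemma as a consequence of the Duality Lemma (Lem.~\ref{lm:dual}): it starts from the derivation of the empty sequent from $\Isym \cup \orth{\Isym}$ built out of $\cut_{1}$ and contraction, weakens $x:\overline{\bigwedge \bigvee \Isym}$ into every leaf coming from $\Isym$ and $\Delta$ into every leaf, and then discharges the resulting leaves using the hypotheses together with admissibility of cut and contraction. You instead build a cut-free derivation directly: unfold $x:\overline{\bigwedge \bigvee \Isym} = x:\bigvee_{i}\bigwedge_{j}\overline{A_{i,j}}$ bottom-up with $(n-1)$ applications of $\lor$ followed by the branching $\land$ rule, and observe that the resulting $\prod_{i} k_{i}$ leaves are exactly the sequents $\seq{\Rsym}{\Theta,\Delta}$ with $\seq{}{\Theta} \in \orth{\Isym}$, which is immediate from Def.~\ref{def:orthogonal} in the single-label case. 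Your version is more elementary and self-contained: it needs neither Lem.~\ref{lm:dual} nor admissibility of $\cut_{1}$ and contraction, and it sidesteps the (standard but unstated) generalised identity $\pseq{\Rsym}{x:F, x:\overline{F}}$ that the paper's proof implicitly relies on to discharge the leaves $\seq{\Rsym}{\Lambda_{i}, x:\overline{\bigwedge \bigvee \Isym}}$. What the paper's route buys is modularity: it depends only on the abstract fact that the empty sequent is derivable from $\Isym \cup \orth{\Isym}$ by cut and contraction, so the same argument is reused verbatim for Lem.~\ref{lm:deriving-biint-dual-interpolant-formula} in the bi-intuitionistic setting, where the formula interpretation is a conjunction of implications and your connective-by-connective decomposition would have to be reworked around the implication rules. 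Both arguments are sound for the tense-logic case at hand.
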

\begin{proof}
Suppose $\orth{\Isym} = \{ (\seq{}{\Theta_1}), \ldots, (\seq{}{\Theta_k}) \}$ for some $k.$
By Lem.~\ref{lm:dual}, we have a derivation $\Xi_1$ of the empty sequent from assumptions $\Isym \cup \orth{\Isym}$. 
$$
\deduce{\seq{}{}}{\deduce{\vdots}{\seq{}{\Lambda_1} \quad \cdots \seq{}{\Lambda_n}
\qquad \seq{}{\Theta_1} \quad \cdots \quad \seq{}{\Theta_k} }}
$$
Due to admissibility of weakening (Lem.~\ref{lm:labelled-calc-properties}), for each $\Lambda_i$, there is a proof $\Psi_i$ of the sequent 
$\seq{\Rsym}{\Lambda_i,  x:\overline{\bigwedge \bigvee \Isym}}$. Adding $x:\overline{\bigwedge \bigvee \Isym}$ to every leaf sequent in $\Xi_1$ belonging to $\Isym$ gives us a derivation 
$\Xi_2$: 
$$
\infer[ctr^*]
{\seq{\Rsym}{x:\overline{\bigwedge \bigvee \Isym}}}
{
\deduce
{\seq{\Rsym}{(x:\overline{\bigwedge \bigvee \Isym})^*}}
{\deduce{\vdots}{[\seq{\Rsym}{x:F, \Lambda_1}] \quad  \cdots \quad [\seq{\Rsym}{x:F, \Lambda_n}] \qquad 
\seq{\Rsym}{\Theta_1}  \quad \cdots \quad  \seq{\Rsym}{\Theta_k}} }
}
$$
where $F = \overline{\bigwedge \bigvee \Isym}$ and sequents in brackets are provable, and where $*$ denotes multiple
copies of sequents or rules. 
By the assumption we know that each $\seq{\Rsym}{\Theta_i, \Delta}$ is provable, so by adding $\Delta$ to each
premise sequent in $\Xi_2$, we get the following proof:
$$
\infer[ctr^*]
{\seq{\Rsym}{x:\overline{\bigwedge \bigvee \Isym}, \Delta}}
{
\deduce
{\seq{\Rsym}{(x:\overline{\bigwedge \bigvee \Isym})^*, \Delta^*}}
{\deduce{\vdots}{[\seq{\Rsym}{x:F, \Lambda_1,\Delta}] \quad  \cdots \quad [\seq{\Rsym}{x:F, \Lambda_n,\Delta}] \qquad 
[\seq{\Rsym}{\Theta_1, \Delta}]  \quad \cdots \quad  [\seq{\Rsym}{\Theta_k, \Delta}]} }
}
$$
\end{proof}

\begin{theorem}
\label{tm:tense-proof-theoretic-interpolation}
If $\pseq{}{x:A \rightarrow B}$, then there exists a $C$ such that (i) $var(C) \subseteq var(A) \cap var(B)$ and (ii) $\pseq{}{x:A \rightarrow C}$ and $\pseq{}{x:C \rightarrow B}$.
\end{theorem}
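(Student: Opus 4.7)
The plan is to reduce the theorem to the sequent-level interpolation machinery already developed. Since $A \rightarrow B$ is defined as $\overline{A} \vee B$, a proof of $\pseq{}{x:A \rightarrow B}$ can, by invertibility of the $\vee$-rule (Lem.~\ref{lm:labelled-calc-properties}), be transformed into a proof of $\pseq{}{x:\overline{A}, x:B}$. We then apply Lem.~\ref{lm:interpolant-labels-literals} with $\Rsym = \emptyset$, $\Gamma = \{x:\overline{A}\}$ and $\Delta = \{x:B\}$ to obtain an interpolant $\Isym$ satisfying $\piseq{}{x:\overline{A} \mid x:B}{\Isym}$, with $var(\Isym) \subseteq var(\overline{A}) \cap var(B) = var(A) \cap var(B)$, and with every label appearing in $\Isym$ already occurring in $\Rsym, \Gamma$ or $\Delta$.

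The crucial observation is that the only label appearing in $\Rsym \cup \Gamma \cup \Delta$ is $x$, so $\Isym$ is a set of flat sequents whose formulae are all labelled by the single label $x$. This means the formula interpretation $C := \bigwedge \bigvee \Isym$ introduced before Lem.~\ref{lm:deriving-interpolant-formula} is well-defined, and clearly $var(C) = var(\Isym) \subseteq var(A) \cap var(B)$, giving condition (i).

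For condition (ii), I would invoke Lem.~\ref{lm:interpolant} to obtain both halves of the interpolation data: (a) $\pseq{}{x:\overline{A}, \Lambda}$ for every $(\seq{}{\Lambda}) \in \Isym$, and (b) $\pseq{}{\Theta, x:B}$ for every $(\seq{}{\Theta}) \in \orth{\Isym}$. Feeding (a) into Lem.~\ref{lm:deriving-interpolant-formula} with $\Gamma = \{x:\overline{A}\}$ yields $\pseq{}{x:\overline{A}, x:C}$, and one more application of the $\vee$-rule (bottom-up, to combine the two $x$-labelled formulae) gives $\pseq{}{x:\overline{A} \vee C} = \pseq{}{x:A \rightarrow C}$. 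Symmetrically, feeding (b) into Lem.~\ref{lm:deriving-dual-interpolant-formula} with $\Delta = \{x:B\}$ yields $\pseq{}{x:\overline{C}, x:B}$, hence $\pseq{}{x:C \rightarrow B}$.

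The proof is largely bookkeeping: the conceptual heavy lifting has already been done in Lem.~\ref{lm:interpolant} (which constructs the sequent-level interpolant and its dual witnesses via orthogonality) and in Lem.~\ref{lm:deriving-dual-interpolant-formula} (which uses the Duality Lemma and cut/contraction admissibility to turn the orthogonal side into a single formula implication). The only step that requires genuine care is verifying that $\Isym$ contains no labels other than $x$, since this is exactly what licenses the formula interpretation $\bigwedge \bigvee \Isym$; fortunately this falls out directly from the label-tracking clause of Lem.~\ref{lm:interpolant-labels-literals} applied to a root sequent whose only label is $x$.
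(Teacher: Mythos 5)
Your proposal is correct and follows essentially the same route as the paper's own proof: invert the $\lor$-rule, apply Lemma~\ref{lm:interpolant-labels-literals} to get a sequent interpolant whose labels are all $x$, use Lemma~\ref{lm:interpolant} for the two halves, and convert to the formula $\bigwedge\bigvee\Isym$ via Lemmas~\ref{lm:deriving-interpolant-formula} and~\ref{lm:deriving-dual-interpolant-formula} before reassembling the implications with the $\lor$-rule. The paper's proof is the same argument with the same lemma applications in the same order.
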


\begin{corollary}
\label{cr:tense-logics-interpolation}
Every extension of the (minimal) tense logic $\kt$ with a set $\mathsf{\Pi}$ of path axioms has the Craig interpolation property.
\end{corollary}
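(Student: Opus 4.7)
The plan is to derive this corollary directly from Theorem~\ref{tm:tense-proof-theoretic-interpolation} by combining it with the soundness and completeness of the labelled calculus $\tenseone$ with respect to $\ktp$. Concretely, the argument reduces the semantic-flavoured Craig interpolation property of Def.~\ref{def:interpolation-property} to the purely proof-theoretic statement already furnished by the theorem, so no new syntactic construction is required at this stage.

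First, I would invoke the soundness and completeness of $\tenseone$ for $\ktp$ established in~\cite{GorPosTiu11,GorIanTiu12}: for any tense formula $F$ and any label $x$, $F$ is a theorem of $\ktp$ iff $\pseq{}{x:F}$. The subtlety here is that the side conditions $x \Rsym^\mathsf{\Pi} y$ and $y \Rsym^\mathsf{\Pi} x$ built into the $\FDia$ and $\PDia$ rules must faithfully capture the frame properties imposed by the path axioms in $\mathsf{\Pi}$; this is precisely the content of the cited results, so I would not reprove it.

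Now assume $A \rightarrow B$ is a theorem of $\ktp$. By completeness, $\pseq{}{x:A \rightarrow B}$. Applying Theorem~\ref{tm:tense-proof-theoretic-interpolation} yields a formula $C$ with $var(C) \subseteq var(A) \cap var(B)$ such that both $\pseq{}{x:A \rightarrow C}$ and $\pseq{}{x:C \rightarrow B}$. By soundness, both $A \rightarrow C$ and $C \rightarrow B$ are theorems of $\ktp$. Taking the implication $\Rightarrow$ in Def.~\ref{def:interpolation-property} to be $\rightarrow$, this is exactly the Craig interpolation property for $\ktp$.

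The genuinely hard work has been entirely absorbed into Theorem~\ref{tm:tense-proof-theoretic-interpolation}, which itself rests on the interpolant construction in $\tensetwo$ together with Lem.~\ref{lm:interpolant-labels-literals}, Lem.~\ref{lm:interpolant}, Lem.~\ref{lm:deriving-interpolant-formula}, and Lem.~\ref{lm:deriving-dual-interpolant-formula}. The only point requiring care when passing to the corollary is uniformity in $\mathsf{\Pi}$: since $\tensetwo$ inherits exactly the same $\Rsym^\mathsf{\Pi}$ side conditions as $\tenseone$, and since neither the construction of interpolants nor the correctness lemmas inspect $\mathsf{\Pi}$ beyond these side conditions, the reduction goes through uniformly for every finite set $\mathsf{\Pi}$ of path axioms, which gives the stated corollary.
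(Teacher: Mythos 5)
Your proposal is correct and matches the paper's (implicit) argument: the corollary is stated without proof precisely because it follows from Theorem~\ref{tm:tense-proof-theoretic-interpolation} together with the soundness and completeness of $\tenseone$ for $\ktp$ cited from~\cite{GorPosTiu11,GorIanTiu12}, exactly as you describe. Your remark about uniformity in $\mathsf{\Pi}$ via the $\Rsym^{\mathsf{\Pi}}$ side conditions is a reasonable extra observation but not something the paper needed to make explicit.
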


\begin{example}
\label{ex:tense-interpolant}
Consider the formula given in example \ref{ex:tense-proof}. By making use of its derivation in Fig.~\ref{fig:ex-tense-proof}, we can apply our interpolation algorithm as shown in Fig.~\ref{fig:ex-tense-interpolant} to construct an interpolant for the formula.
\end{example}

\begin{figure*}
$$
\infer[\FBox]
{\iseq{}{x:\FDia \FBox q \mid x: \FBox(\FDia \overline{p} \lor \FDia \FDia p)}{\{(\seq{}{x:\FBox \FDia \FDia \top})\} }}
 {
 \infer[\lor]
 {
 \iseq{Rxy}{x: \FDia \FBox q \mid y: \FDia \overline{p} \lor \FDia \FDia p}{\{(\seq{}{y:\FDia \FDia \top })\}}
 }
  {
 \infer[\orthrule]
 {\iseq{Rxy}{  x: \FDia \FBox q \mid y: \FDia \overline{p}, y: \FDia \FDia p }{\{(\seq{}{y:\FDia \FDia \top })\} }}
 {
  \infer[\FDia]
  {\iseq{Rxy}{ y: \FDia \overline{p}, y: \FDia \FDia p \mid x: \FDia \FBox q }{\{(\seq{}{y:\FBox \FBox \bot })\} }}
   {
   \infer[\FBox]
   {\iseq{Rxy}{ y: \FDia \overline{p}, y: \FDia \FDia p \mid x: \FDia \FBox q, y: \FBox q}\{(\seq{}{y:\FBox \FBox \bot})\}}
    {
    \infer[\FDia]
    {\iseq{Rxy,Ryz}{ y: \FDia \overline{p}, y: \FDia \FDia p \mid x: \FDia \FBox q, z: q }\{(\seq{}{z:\FBox \bot})\}}
     {
     \infer[\FBox]
     {\iseq{Rxy,Ryz}{ y: \FDia \overline{p}, y: \FDia \FDia p \mid x: \FDia \FBox q, z: \FBox q, z: q }\{(\seq{}{z:\FBox \bot)}\}}
      {
      \infer[\orthrule]
      {\iseq{Rxy,Ryz,Rzw}{ y: \FDia \overline{p}, y: \FDia \FDia p \mid x: \FDia \FBox q, w:q,  z: q}\{(\seq{}{w:\bot})\}}
      {
      \infer[\FDia] 
      {\iseq{Rxy,Ryz,Rzw}{x: \FDia \FBox q, w:q,  z: q \mid y: \FDia \overline{p}, y: \FDia \FDia p}\{(\seq{}{w:\top})\}}
       {
       \infer[\FDia] 
       {\iseq{Rxy,Ryz,Rzw}{x: \FDia \FBox q, w:q,  z: q \mid y: \FDia \overline{p}, w: \overline{p}, y: \FDia \FDia p}\{(\seq{}{w:\top})\}}
        {
        \infer[\FDia] 
        {\iseq{Rxy,Ryz,Rzw}{x: \FDia \FBox q, w:q,  z: q \mid y: \FDia \overline{p}, w: \overline{p}, y: \FDia \FDia p, z:\FDia p}\{(\seq{}{w:\top})\}}
         {\infer[id] 
         {\iseq{Rxy,Ryz,Rzw}{x: \FDia \FBox q, w:q,  z: q \mid y: \FDia \overline{p}, w: \overline{p}, y: \FDia \FDia p, z:\FDia p, w:p}\{(\seq{}{w:\top}) \}}
          {}
         } 
        } 
       } 
      } 
      }
     }
    }
    } 
   }
  }
 }
$$
\caption{An example of the construction of tense interpolants.}
\label{fig:ex-tense-interpolant}
\end{figure*}


\section{Interpolation for Bi-Intuitionistic Logic}
\label{sec:intuitionistic}

The language for bi-intuitionistic logic $\biint$ is given via the following BNF grammar:
$$A ::= p \ | \ \top \ | \ \bot \ | \ (A \wedge A) \ | \ (A \vee A) \ | \ (A \supset A) \ | \ (A \excl A)$$

For an axiomatic definition of $\biint$ consult \cite{Rau80} and for a semantic definition see \cite{GorPosTiu08,PinUus18}.

The calculus $\biintone$ for $\biint$ is given in Fig.~\ref{fig:BiIntL-labelled-rules}. The calculus makes use of sequents of the form $\seq{\Rsym, \Gamma}{\Delta}$ with $\Rsym$ a multiset of relational atoms of the form $Rxy$, $\Gamma$ and $\Delta$ multisets of labelled formulae of the form $x : A$
(where $A$ is a bi-intuitionistic formula), and all labels are among a countable set $\{x, y, z, \ldots\}$. Note that we need not restrict the consequent of sequents to at most one formula on the right or left due to the eigenvariable condition imposed on the $\supset_{R}$ and $\excl_{L}$ rules. Moreover, for a multiset $\Rsym$ of relational atoms or a multiset $\Gamma$ of labelled formulae, we use the notation $\Rsym[x/y]$ and $\Gamma[x/y]$ to represent the multiset obtained by replacing each occurrence of the label $y$ for the label $x$. The $\monl$ and $\monr$ rules are the natural way to capture monotonicity when nested sequents are represented using labels.

\begin{figure}
{\small
$$
\infer[\ctr]
{\seq{\Rsym, \Gamma, \Gamma'}{\Delta}}
{\seq{\Rsym, \Gamma, \Gamma', \Gamma'}{\Delta}}
\quad
\infer[\ctr]
{\seq{\Rsym, \Gamma}{\Delta, \Delta'}}
{\seq{\Rsym, \Gamma}{\Delta, \Delta', \Delta'}}
\quad
\infer[\ctr]
{\seq{\Rsym, \Rsym', \Gamma}{\Delta}}
{\seq{\Rsym, \Rsym', \Rsym', \Gamma}{\Delta}}
\quad
\infer[\wk]
{\seq{\Rsym, \Gamma}{\Delta, \Delta'}}
{\seq{\Rsym, \Gamma}{\Delta}}
\quad
\infer[\wk]
{\seq{\Rsym, \Gamma, \Gamma'}{\Delta}}
{\seq{\Rsym, \Gamma}{\Delta}}
$$
$$
\infer[\wk]
{\seq{\Rsym, \Rsym', \Gamma}{\Delta}}
{\seq{\Rsym, \Gamma}{\Delta}}
\quad
\infer[\cut_{1}]
{\seq{\Rsym }{\Gamma}}
{\seq{\Rsym}{\Gamma, x:A} & \seq{\Rsym}{\Gamma, x:\overline{A}}}
\quad
\infer[\cut_{2}]
{\seq{\Rsym, \Gamma}{\Delta}}
{\seq{\Rsym, \Gamma}{\Delta, x:A} & \seq{\Rsym, x:A, \Gamma}{\Delta}}
$$}
\caption{Admissible rules.}
\label{fig:admissible-rules-biint}
\end{figure}

\begin{lemma}\label{lm:biint-labelled-calc-properties} The calculus $\biintone$
  enjoys the following: (1) admissibility of $\ctr$, $\wk$ and $\cut_{2}$ from
  Fig.~\ref{fig:admissible-rules-biint};
  (2) invertibility of all inference rules from Fig.~\ref{fig:BiIntL-labelled-rules};
  (3) if $\pseq{\Rsym,Rxy,\Gamma}{\Delta}$, then
  $\pseq{\Rsym[x/y],\Gamma[x/y]}{\Delta[x/y]}$, and
  (4) If $\pseq{\Gamma}{\Delta}$
  where $\Gamma$ and $\Delta$ only contain formulae solely labelled with $y$,
  then $\pseq{\Gamma[x/y]}{\Delta[x/y]}$ for any label $x$.
\end{lemma}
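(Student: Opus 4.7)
The plan is to prove the four items in a specific order, using the standard height-preserving admissibility strategy adapted to labelled sequents. I would first establish (3), the label substitution property, by induction on the height of the derivation of $\seq{\Rsym,Rxy,\Gamma}{\Delta}$. For every rule, the substitution $[x/y]$ commutes with the rule application, the only delicate cases being $\supset_R$, $\excl_L$, $\FBox$-like rules (here: the rules that introduce a fresh eigenlabel), where one must first rename the eigenvariable to avoid clashes with $x$ (this renaming itself is an instance of label substitution at a smaller height, so the induction goes through). Part (4) will then follow by applying (3) together with an initial weakening step that turns a ``dangling'' label $y$ into a label occurring in a fresh relational atom; alternatively, (4) is just the special case of (3) in which the relational context to be collapsed is empty, so a separate simple height-preserving induction does the job.

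Next, for (1), I would prove height-preserving admissibility of the weakening rules by a routine induction on derivation height: in the eigenvariable rules one renames the eigenlabel if it would clash with the freshly weakened-in label. For invertibility (item (2)), I would treat each rule separately; most non-branching rules are invertible by a direct height-preserving induction, and the branching rules ($\land_R$, $\lor_L$, $\supset_L$, $\excl_R$) are invertible by the standard argument in which one commutes the target rule past the last rule applied, using the invertibility of other rules already shown and admissibility of weakening. For the rules with eigenvariables, invertibility will require applying the label-substitution property (3) to rename the eigenvariable before commuting. Contraction admissibility then follows in the usual way: to contract on $x:A$, apply the invertibility lemma for the principal rule of $A$, then appeal to the induction hypothesis on smaller formulae or smaller heights; for atomic or structural contractions the argument is direct, and for relational contraction one uses (3) to identify two copies of the same relational atom.

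The hard part, as usual, will be cut elimination, i.e.\ the admissibility of $\cut_2$ (and of $\cut_1$, which is its one-sided specialisation). I plan a double induction: primary on the complexity of the cut formula $A$, secondary on the sum of the heights of the two cut premises. The cases split into (i) cut formula not principal in at least one premise, handled by permuting the cut upwards, using admissibility of weakening and the invertibility lemma to re-establish the derivation; (ii) cut formula principal in both premises, where one reduces the cut to cuts on strict subformulae of $A$. The genuinely subtle cases are the principal cuts on $\supset$ and $\excl$, because of the eigenvariable conditions and because these rules interact with the relational context: reducing a principal cut on $A \supset B$ will require eliminating a cut on $A$ and then on $B$, while also using the label-substitution property (3) to instantiate the eigenlabel of the $\supset_R$ premise with a label from the $\supset_L$ side, and similarly for $\excl$. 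The $\monl$ and $\monr$ rules must be handled with care, since they shift labelled formulae along relational atoms; here admissibility of weakening and substitution are what make the reduction terminate. Once these principal cases are settled, the remaining rules are either purely propositional (handled exactly as in \textsf{G3}-style calculi) or structural on $\Rsym$ (handled via (3)). Having $\cut_2$ then yields admissibility of the contraction rules as a corollary in case the direct argument above was incomplete, closing the loop.
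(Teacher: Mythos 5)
The paper does not actually prove this lemma: items (1)--(3) are discharged by citation to \cite[Section~3]{PinUus18}, and the only original content is the one-line remark that (4) follows from the others (essentially your own observation: weaken in a fresh relational atom $Rxy$ and then apply the merge property (3)). So your proposal is attempting substantially more than the paper does, and what you sketch is, in outline, the standard playbook that the cited reference follows. Two points deserve attention, though.

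First, your ordering of the items has a genuine dependency problem. You propose to prove (3) before contraction, but (3) is not a mere renaming: the atom $Rxy$ is \emph{deleted} in the conclusion, so when the last rule of the given derivation is $\monl$ or $\monr$ with that very atom principal, the substituted premise contains a duplicated labelled formula ($x{:}A, x{:}A$) that must be contracted away to reach the stated conclusion. This is exactly the situation handled in the paper's own Lemma~\ref{lm:admissibility-refl}, whose $\monl$/$\monr$ cases explicitly appeal to admissibility of $\ctr$. So (3) must come after (1) (or the induction hypothesis for (3) must be strengthened to absorb duplicates), and as literally ordered your induction would not close.

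Second, the weight of the lemma sits entirely in the admissibility of $\cut_{2}$, and this is where your sketch is least convincing. Bi-intuitionistic logic is precisely the case where the naive Gentzen-style cut-elimination argument \emph{fails} (as the paper recalls via \cite{PinUus18,KowalskiO17}); what rescues it in $\biintone$ is the labelled/nested structure together with the $\monl$/$\monr$ propagation rules, and the critical reductions are the principal $\supset$/$\excl$ cuts where the eigenlabel of the right (resp.\ left) premise must be instantiated and the resulting cut formulae pushed along relational atoms. Your plan names these as ``the genuinely subtle cases'' but does not exhibit the reduction, so it would not by itself distinguish this calculus from the ordinary sequent calculus where the same plan breaks down. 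Also, $\cut_{1}$ is not the ``one-sided specialisation'' of $\cut_{2}$ in this setting: it is the negation-based cut used for the tense calculus $\tenseone$, and it is not meaningful for $\biintone$, whose language has no negation normal form; the lemma rightly claims only $\cut_{2}$. Finally, the closing remark that contraction could instead be recovered ``as a corollary'' of cut is circular, since your cut-elimination argument already presupposes contraction (and invertibility) in the permutation cases.
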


\begin{proof} 
For proofs of (1)-(4), see~\cite[Section~3]{PinUus18}. Statement (4)
follows from the others.
\end{proof}

\begin{figure}
{\small
$$
\infer[id]
{\seq{\Rsym,x:p,\Gamma}{ \Delta, x:p}}
{}
\qquad
\infer[\top]
{\seq{\Rsym,\Gamma}{x:\top, \Delta} }
{}
\qquad
\infer[\bot]
{\seq{\Rsym,\Gamma,x:\bot}{\Delta} }
{}
$$
$$
\infer[\lor_{L}]
{\seq{\Rsym,\Gamma, x : A \lor B}{\Delta}}
{\seq{\Rsym,\Gamma,x:A}{\Delta} & \seq{\Rsym,\Gamma,x:B}{\Delta}}
\qquad
\infer[\lor_{R}]
{\seq{\Rsym,\Gamma}{x:A \lor B, \Delta}}
{\seq{\Rsym, \Gamma}{x:A,x:B, \Delta}}
\qquad
\infer[\monl]   
{\seq{\Rsym,Rxy,x:A,\Gamma}{\Delta} }
{
{\seq{\Rsym,Rxy,x:A,y:A,\Gamma}{\Delta} }
}
$$
$$
\infer[\land_{R}]
{\seq{\Rsym,\Gamma}{x:A \land B, \Delta}}
{\seq{\Rsym, \Gamma}{x:A, \Delta} & \seq{\Rsym, \Gamma}{x:B, \Delta}}
\qquad
\infer[\land_{L}]
{\seq{\Rsym,\Gamma, x : A \land B}{\Delta}}
{\seq{\Rsym,\Gamma,x:A,x:B}{\Delta}}
\qquad
\infer[\monr]   
{\seq{\Rsym,Rxy,\Gamma}{y:A,\Delta} }
{
{\seq{\Rsym,Rxy,\Gamma}{x:A,y:A,\Delta} }
}
$$
$$
\infer[\excl_{L}, \mbox{ $y$ fresh}]
{\seq{\Rsym,x:A \excl B, \Gamma}{\Delta} }
{\seq{\Rsym,Ryx,y:A,\Gamma}{y:B, \Delta} }
\quad
\infer[\supset_{L}]
{\seq{\Rsym, x:A \supset B,\Gamma}{\Delta} }
{
\seq{\Rsym,x:A \supset B,\Gamma}{x:A,\Delta} & \seq{\Rsym,x:B,\Gamma}{\Delta} 
}
$$
$$
\infer[\excl_{R}]
{\seq{\Rsym, \Gamma}{x:A \excl B,\Delta} }
{
\seq{\Rsym, \Gamma}{x:A,\Delta}  & \seq{\Rsym, x:B,\Gamma}{x:A \excl B,\Delta} 
}
\quad
\infer[\supset_{R}, \mbox{ $y$ fresh}]
{\seq{\Rsym, \Gamma}{x:A \supset B, \Delta} }
{\seq{\Rsym,Rxy,\Gamma,y:A}{y:B, \Delta} }
$$

}
\caption{The calculus $\biintone$ for $\biint$~\cite{PinUus18}.}
\label{fig:BiIntL-labelled-rules}
\end{figure}

As in the case with tense logics, we define a generalised interpolant to be a set of  two-sided flat sequents. 
However, to ease the definition of $\trans$, we shall use an encoding of two-sided sequents into single-sided sequents
by annotating the left-hand side occurrence of a formula with a $L$ and the right-hand side occurrence with a $R$.
In this way some results concerning intuitionistic interpolants can be easily adapted from the classical counterparts.

\begin{definition}
A {\em polarised formula} is a  formula annotated with $L$ (left-polarised) or $R$ (right-polarised). 
We write $A^L$ ($A^R$) for the left-polarised (right-polarised) version of formula $A.$
A {\em labelled polarised formula} is a polarised formula further annotated with a label. We write $x:A^L$ ($x:A^R$)
to denote a left-polarised  (a right-polarised) formula labelled with $x.$
Given a polarised formula $A^L$ (resp. $A^R$), its dual is defined as
$\overline{A^L} = A^R$ and $\overline{A^R} = A^L.$
That is, duality changes polarities (the side where the formula occurs), but not the actual
formula.
\end{definition}

\begin{definition}
A {\em polarised (flat) sequent} is a single-sided (flat) sequent where all formulas in the sequent are polarised. 
Given a two-sided sequent $S = \seq{\Rsym, x_1:A_1,\ldots,x_m:A_m}{y_1:B_1,\ldots, y_n:B_n}$,
its corresponding polarised sequent is the following
$$
\seq{\Rsym}{x_1:A_1^L, \ldots, x_m: A_m^L, y_1:B_1^R, \ldots, y_n:B_n^R}. 
$$
Given a two-sided sequent $S$, we denote with $\mathsf{P}(S)$ its encoding as a polarised sequent. Conversely,
given a polarised sequent $S$, we denote with $\mathsf{T}(S)$ its two-sided counterpart. This notation extends
to sets of sequents by applying the encoding element-wise. 
\end{definition}

\begin{definition}
An {\em intuitionistic interpolant} is a set of two-sided flat sequents. Given an intuitionistic interpolant $\Isym$,
its orthogonal $\orth{\Isym}$ is defined as $\mathsf{T}(\orth{\mathsf{P}(\Isym)}).$
\end{definition}
\begin{example}
Let $\Isym = \{ (\seq{x:A}{y:B}), (\seq{}{u:C, v:D})\}$. Then, $\orth{\Isym}$ is the set:
$$
\{(\seq{u:C}{x:A}), (\seq{v:D}{x:A}), (\seq{u:C,y:B}{}), (\seq{v:D,y:B}{}) \}
$$
\end{example}

By defining orthogonality via the embedding into polarised sequents,
the Persistence Lemma for the intuitionistic case comes for free, by
appealing to Lem.~\ref{lm:double-orthogonal}. Note that for sequents $\seq{\Gamma_1}{\Delta_1}$ and $\seq{\Gamma_2}{\Delta_2}$, we write
$\seq{\Gamma_1}{\Delta_1} \subseteq \seq{\Gamma_2}{\Delta_2}$ iff $\Gamma_1 \subseteq \Gamma_2$
and $\Delta_1 \subseteq \Delta_2.$

\begin{lemma}[Persistence]
If $\Lambda \in \orth{\orth{\Isym}}$, then there exists a $\Lambda' \in \Isym$ such that 
$\Lambda' \subseteq \Lambda.$
\end{lemma}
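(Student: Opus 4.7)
The plan is to reduce the intuitionistic Persistence Lemma to the already-proved single-sided version (Lem.~\ref{lm:double-orthogonal}) via the polarisation encoding. The key observation is that $\mathsf{P}$ and $\mathsf{T}$ are mutually inverse bijections between two-sided (flat) sequents and polarised (flat) sequents: applying $\mathsf{P}$ merely tags each labelled formula with $L$ or $R$ according to its side, and $\mathsf{T}$ reads that tag back off. Consequently, for any set $X$ of polarised sequents, $\mathsf{P}(\mathsf{T}(X)) = X$, and likewise for any set $Y$ of two-sided sequents, $\mathsf{T}(\mathsf{P}(Y)) = Y$.

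First I would use this to simplify the double-orthogonal. By definition, $\orth{\Isym} = \mathsf{T}(\orth{\mathsf{P}(\Isym)})$, so
\[
\orth{\orth{\Isym}} \;=\; \mathsf{T}\bigl(\orth{\mathsf{P}(\mathsf{T}(\orth{\mathsf{P}(\Isym)}))}\bigr) \;=\; \mathsf{T}\bigl(\orth{\orth{\mathsf{P}(\Isym)}}\bigr),
\]
where the second equality is the $\mathsf{P}\circ\mathsf{T} = \mathrm{id}$ observation applied to the inner set of polarised sequents. Thus $\Lambda \in \orth{\orth{\Isym}}$ is equivalent to $\mathsf{P}(\Lambda) \in \orth{\orth{\mathsf{P}(\Isym)}}$, with the outer orthogonals now taken in the polarised (single-sided) sense of Def.~\ref{def:orthogonal}.

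Next I would invoke Lem.~\ref{lm:double-orthogonal} on the polarised interpolant $\mathsf{P}(\Isym)$: there exists some polarised flat sequent $\Sigma \in \mathsf{P}(\Isym)$ with $\Sigma \subseteq \mathsf{P}(\Lambda)$. Setting $\Lambda' := \mathsf{T}(\Sigma)$, we have $\Lambda' \in \mathsf{T}(\mathsf{P}(\Isym)) = \Isym$, which is the membership part of the conclusion.

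Finally, the only thing left is to transport the inclusion $\Sigma \subseteq \mathsf{P}(\Lambda)$ back through $\mathsf{T}$. Since the polarity annotation fully determines which side a labelled formula occupies, $x{:}A^L \in \Sigma$ translates to $x{:}A$ occurring on the left in $\mathsf{T}(\Sigma) = \Lambda'$, and similarly on the right for $x{:}A^R$; so $\Sigma \subseteq \mathsf{P}(\Lambda)$ yields both the antecedent and consequent inclusions required for $\Lambda' \subseteq \Lambda$ in the two-sided sense. The main obstacle, really the only non-bookkeeping point, is to make sure that $\mathsf{P}$ and $\mathsf{T}$ really do commute with the $\subseteq$ relation on sequents; this is immediate from the definition but deserves to be stated cleanly so that the reduction to the classical Persistence Lemma goes through with no further work.
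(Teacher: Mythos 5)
Your proposal is correct and is exactly the reduction the paper intends: the paper proves this lemma by noting that it ``comes for free'' from the classical Persistence Lemma (Lem.~\ref{lm:double-orthogonal}) via the polarised encoding, which is precisely your argument, with the $\mathsf{P}$/$\mathsf{T}$ bookkeeping spelled out in more detail than the paper bothers to. The one point worth being explicit about (and which your reduction implicitly relies on) is that the classical lemma applies to polarised sequents because polarity-flip is an involution, playing the role of $\overline{(\cdot)}$ in Def.~\ref{def:orthogonal}.
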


\begin{lemma}[Duality]
\label{lm:dual-int}
Given an intuitionistic interpolant $\Isym$, the empty sequent is derivable from $\Isym \cup \orth{\Isym}$ using the $\cut_{2}$ rule and
the contraction rule.
\end{lemma}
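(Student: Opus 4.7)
The plan is to reduce the intuitionistic Duality Lemma to its classical counterpart (Lem.~\ref{lm:dual}) via the polarisation embedding. By the definition of intuitionistic orthogonality, $\mathsf{P}(\Isym \cup \orth{\Isym}) = \mathsf{P}(\Isym) \cup \orth{\mathsf{P}(\Isym)}$. The proof of Lem.~\ref{lm:dual} is entirely syntactic and depends only on the involutive duality on labelled formulas together with the shape of $\cut_1$ (which cuts a formula against its dual); it therefore applies uniformly to the polarised setting, where duality is the polarity swap $A^L \leftrightarrow A^R$. Invoking it yields a derivation $\Xi$ of the empty polarised sequent $\seq{}{}$ from $\mathsf{P}(\Isym) \cup \orth{\mathsf{P}(\Isym)}$ using only $\cut_1$ and contraction on polarised formulas.

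Next, I would translate $\Xi$ sequent-by-sequent via $\mathsf{T}$ into a $\biintone$ derivation. An application of $\cut_1$ in $\Xi$ cuts a pair of duals $x:A^L$ and $x:A^R$; translating its two premises gives two-sided sequents of the shape $\seq{\Gamma_1, x:A}{\Delta_1}$ and $\seq{\Gamma_2}{\Delta_2, x:A}$, which is exactly the shape required by $\cut_2$ on the formula $x:A$. Each contraction on two copies of $x:A^L$ (respectively $x:A^R$) translates into a left (respectively right) contraction on $x:A$, both of which are admissible in $\biintone$ by Lem.~\ref{lm:biint-labelled-calc-properties}. The leaves of $\Xi$ lie in $\mathsf{P}(\Isym \cup \orth{\Isym})$ and map back to the two-sided sequents in $\Isym \cup \orth{\Isym}$, while the root empty polarised sequent maps to the empty two-sided sequent, as desired.

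The main subtlety—and where I would take most care—is verifying that $\mathsf{T}$ is well-defined on every intermediate sequent of $\Xi$, i.e., that each such polarised sequent lies in the image of $\mathsf{P}$ so that its labelled formulae partition unambiguously into a left and a right context. This is established by a straightforward induction on the structure of $\Xi$: the leaves belong to $\mathsf{P}(\Isym \cup \orth{\Isym})$ and are by construction in the image of $\mathsf{P}$, and neither $\cut_1$ nor contraction creates unpolarised formulas or alters existing polarities—they merely delete or merge polarised occurrences. Consequently, $\mathsf{T}(\Xi)$ is a well-formed $\biintone$-derivation of $\seq{}{}$ from $\Isym \cup \orth{\Isym}$ using only $\cut_2$ and contraction, completing the proof.
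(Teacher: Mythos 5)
Your proof is correct and is essentially the paper's intended argument: the paper defines intuitionistic orthogonality via the polarisation embedding $\mathsf{T}(\orth{\mathsf{P}(\Isym)})$ precisely so that the classical Duality Lemma (Lem.~\ref{lm:dual}) transfers, and its own proof of Lem.~\ref{lm:dual-int} is just ``similar to Lem.~\ref{lm:dual}''. Your version makes the transfer explicit---observing that $\cut_1$ on a polarity pair $x{:}A^L$, $x{:}A^R$ depolarises exactly to $\cut_2$ on $x{:}A$, and that polarised contraction becomes left/right contraction---which is, if anything, a slightly more rigorous rendering of the same argument.
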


\begin{proof} Similar to Lem.~\ref{lm:dual}.
\end{proof}


\begin{definition}
\label{def:imp-excl-interpolants}
Let $\Isym$ be the interpolant below
where $y$ does not occur in $\Gamma_1, \ldots, \Gamma_n, \Delta_1, \ldots,
\Delta_n$:
$$
\begin{array}{ll}
&\{ ( \seq{\Gamma_1,  y:C_{1,1}, \ldots, y:C_{1,k_1}}{\Delta_1, y:D_{1,1}, \ldots, y:D_{1,j_1}} ),  \ldots, \\
& \hspace{.5em}  
( \seq{\Gamma_n, y:C_{n,1}, \ldots, y:C_{n,k_n}}{\Delta_n, y:D_{n,1}, \ldots, y:D_{n,j_n}} ) \}\\
\end{array}
$$

The interpolants $\excl \Isym^x_y$ and $\supset \Isym^x_y$ are shown below
where empty conjunction denotes $\top$ and empty disjunction
denotes $\bot$:
\[
\begin{aligned}
\excl \Isym^x_y & = &\{  (\seq{\Gamma_1,  x: \bigwedge^{k_1}_{i=1} C_{1,i} \excl \bigvee^{j_1}_{i=1} D_{1,i}}{\Delta_1} ) , \ldots, 
    (\seq{\Gamma_n, x: \bigwedge^{k_n}_{i=1} C_{n,i} \excl \bigvee^{j_n}_{i=1} D_{n,i}}{\Delta_n})  \} \\
\supset \Isym^x_y & = & \{  ( \seq{\Gamma_1}{\Delta_1,  x: \bigwedge^{k_1}_{i=1} C_{1,i} \supset \bigvee^{j_1}_{i=1} D_{1,i}} ) , \ldots, 
   ( \seq{\Gamma_n}{\Delta_n, x:\bigwedge^{k_n}_{i=1} C_{n,i} \supset \bigvee^{j_n}_{i=1} D_{n,i}}  )  \}
\end{aligned}
\]
\end{definition}

The proof system 
$\biinttwo$
for constructing intuitionistic interpolants is given in Fig.~\ref{fig:BiInt-Interpolation-labelled-rules}.


\begin{figure}[t]
{\small
$$
\infer[id]
{\iseq{\Rsym,\Gamma_{1},x:p \mid  \Gamma_{2}}{ \Delta_{1} \mid x:p, \Delta_{2}}{\{ \seq{}{x:p}\}}
}
{}
\qquad
\infer[id]
{\iseq{\Rsym,\Gamma_{1} \mid x:p,  \Gamma_{2}}{ \Delta_{1} \mid x:p, \Delta_{2}}{\{\seq{}{x:\bot}\}}
}
{}
$$
$$
\infer[\top]
{\iseq{\Rsym,\Gamma_{1} \mid \Gamma_{2}}{\Delta_{1} \mid x:\top, \Delta_{2}} {\{\seq{}{x:\bot} \}}
}
{}
\qquad
\infer[\bot]
{\iseq{\Rsym,\Gamma_{1} \mid x:\bot,\Gamma_{2}}{\Delta_{1} \mid \Delta_{2}} {\{\seq{}{x:\bot} \}}
}
{}
$$
$$
\infer[\orthrule]
{\iseq{\Rsym, \Gamma_{2} \mid \Gamma_{1}}{\Delta_{2} \mid \Delta_{1}}{\orth{\Isym}}}
{\iseq{\Rsym, \Gamma_{1} \mid \Gamma_{2}}{\Delta_{1} \mid \Delta_{2}}{\Isym}}
$$
$$
\infer[\monl]
{\iseq{\Rsym,Rxy,\Gamma_{1} \mid x:A,\Gamma_{2}}{\Delta_{1} \mid \Delta_{2}}{\Isym} }
{
{\iseq{\Rsym,Rxy,\Gamma_{1} \mid x:A,y:A,\Gamma_{2}}{\Delta_{1} \mid \Delta_{2}}{\Isym} }
}
\qquad
\infer[\monr]
{
{\iseq{\Rsym,Rxy,\Gamma_{1}\mid \Gamma_{2}}{\Delta_{1} \mid y:A, \Delta_{2}}{\Isym} }
}
{\iseq{\Rsym,Rxy,\Gamma_{1} \mid \Gamma_{2}}{\Delta_{1} \mid x:A,y:A, \Delta_{2}}{\Isym} }
$$
$$
\infer[\lor_{L}]
{\iseq{\Rsym,\Gamma_{1} \mid x : A \lor B, \Gamma_{2}}{\Delta_{1} \mid \Delta_{2}} {\Isym_{1} \cup \Isym_{2} }
}
{\iseq{\Rsym,\Gamma_{1} \mid x:A,\Gamma_{2}}{\Delta_{1} \mid \Delta_{2}}{\Isym_{1}} & \iseq{\Rsym,\Gamma_{1} \mid x:B, \Gamma_{2}}{\Delta_{1} \mid \Delta_{2}}{\Isym_{2} }
}
$$
$$
\infer[\land_{R}]
{\iseq{\Rsym,\Gamma_{1} \mid \Gamma_{2}}{\Delta_{1} \mid x:A \land B, \Delta_{2}} {\Isym_{1} \cup \Isym_{2} }
}
{\iseq{\Rsym,\Gamma_{1} \mid \Gamma_{2}}{\Delta_{1} \mid x:A, \Delta_{2}}{\Isym_{1}} & \iseq{\Rsym,\Gamma_{1} \mid \Gamma_{2}}{\Delta_{1} \mid x:B, \Delta_{2}}{\Isym_{2} }
}
$$
$$
\infer[\land_{L}]
{\iseq{\Rsym,\Gamma_{1} \mid x:A \land B,\Gamma_{2}}{\Delta_{1} \mid \Delta_{2}}{\Isym}
}
{\iseq{\Rsym,\Gamma_{1} \mid x:A, x:B, \Gamma_{2}}{\Delta_{1} \mid \Delta_{2}} {\Isym }
}
\qquad
\infer[\lor_{R}]
{\iseq{\Rsym,\Gamma_{1} \mid \Gamma_{2}}{\Delta_{1} \mid x:A \lor B, \Delta_{2}}{\Isym}
}
{\iseq{\Rsym,\Gamma_{1} \mid \Gamma_{2}}{\Delta_{1} \mid x:A,x:B, \Delta_{2}} {\Isym }
}
$$
$$
\infer[\supset_{L}]
{\iseq{\Rsym, \Gamma_{1} \mid x:A \supset B, \Gamma_{2}}{\Delta_{1} \mid \Delta_{2}} {\Isym_{1} \cup \Isym_{2}} }
{
\iseq{\Rsym,\Gamma_{1} \mid x:A \supset B, \Gamma_{2}}{\Delta_{1} \mid x:A, \Delta_{2}}{\Isym_{2}} & \iseq{\Rsym,\Gamma_{1} \mid x:B, \Gamma_{1}}{\Delta_{1} \mid \Delta_{2}}{\Isym_{1}}
}
$$
$$
\infer[\excl_{R}]
{\iseq{\Rsym, \Gamma_{1} \mid \Gamma_{2}}{\Delta_{1}  \mid x:A \excl B, \Delta_{2}} {\Isym_{1} \cup \Isym_{2}}
}
{
\iseq{\Rsym, \Gamma_{1} \mid \Gamma_{2}}{\Delta_{1} \mid x:A, \Delta_{2}}{\Isym_{1}}  & \iseq{\Rsym, \Gamma_{1} \mid x:B, \Gamma_{2}}{\Delta_{1}, x:A \excl B \mid \Delta_{2}}{\Isym_{2}} 
}
$$
$$
\infer[\excl_{L}]
{\iseq{\Rsym,\Gamma_{1} \mid x:A \excl B, \Gamma_{2}}{\Delta_{1} \mid \Delta_{2}}{\excl \Isym^y_x} }
{\iseq{\Rsym,Ryx,\Gamma_{1} \mid y:A,\Gamma_{2}}{\Delta_{1} \mid y:B, \Delta_{2}} {\Isym} }
\qquad
\infer[\supset_{R}]
{\iseq{\Rsym, \Gamma_{1} \mid \Gamma_{2}}{\Delta_{1} \mid x:A \supset B, \Delta_{2}} {\supset \Isym^y_x} }
{\iseq{\Rsym,Rxy,\Gamma_{1} \mid y:A, \Gamma_{2}}{\Delta_{1} \mid y:B, \Delta_{2}} {\Isym}}
$$
}
\caption{The calculus $\biinttwo$ used to compute interpolants for $\biint$. In $\supset_R$ and $\excl_L$, $y$ is fresh.}
\label{fig:BiInt-Interpolation-labelled-rules}
\end{figure}

\begin{lemma}
\label{lm:int-interpolant-propvar-labels}
If $\pseq{\Rsym, \Gamma_{1}, \Gamma_{2}}{\Delta_{1}, \Delta_{2}}$, then there exists an $\Isym$ 
such that
$\piseq{\Rsym, \Gamma_{1} \mid \Gamma_{2}}{\Delta_{1} \mid \Delta_{2}}{ \Isym}$, $var(\Isym) \subseteq var(\Gamma_{1}, \Delta_{1}) \cap var(\Gamma_{2}, \Delta_{2})$, and all labels in $\Isym$ also occur in $\Rsym, \Gamma_{1}, \Delta_{1}$ or $\Gamma_{2}, \Delta_{2}$.
\end{lemma}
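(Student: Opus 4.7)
The plan is to proceed by induction on the height of the cut-free derivation of $\seq{\Rsym, \Gamma_1, \Gamma_2}{\Delta_1, \Delta_2}$ in $\biintone$, showing that for each possible last inference rule one can construct a corresponding derivation in $\biinttwo$ whose conclusion has the desired partitioning and whose associated interpolant $\Isym$ satisfies the variable and label constraints. This mirrors the strategy used for Lem.~\ref{lm:interpolant-labels-literals} in the tense case, but the two-sided character of $\biintone$ forces us to track, for every principal formula, which of the two partitions it lies in.

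For the base cases ($id$, $\top$, $\bot$), the four choices of placement of the principal labelled formula are each handled by the corresponding variant in Fig.~\ref{fig:BiInt-Interpolation-labelled-rules}, and one immediately reads off that the proposed interpolants ($\{\seq{}{x:p}\}$, $\{\seq{}{x:\bot}\}$, etc.) use only variables common to both partitions, with labels already present in the conclusion. For the purely structural rules $\monl$ and $\monr$, the IH is applied directly and the interpolant is transferred unchanged. For the single-premise propositional rules ($\land_L$, $\lor_R$), the IH yields an interpolant that is reused verbatim. For the two-premise rules on a fixed side ($\lor_L$ with the principal formula in $\Gamma_2$, and dually $\land_R$ in $\Delta_2$, $\supset_L$ in $\Gamma_2$, $\excl_R$ in $\Delta_2$), the interpolants of the two premisses are combined by union $\Isym_1\cup\Isym_2$, and the variable and label conditions are preserved componentwise.

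The characteristic difficulty lies in the rules whose principal formula might sit in either partition, namely $\supset_R$, $\excl_L$, $\supset_L$, $\excl_R$, and the rules $\lor_L$, $\land_R$. The uniform trick is to use the orthogonality rule $\orthrule$ to swap the two partitions whenever the principal formula is on the ``wrong'' side, and then appeal to the Persistence Lemma to guarantee that applying $\orthrule$ twice does not lose information. For $\supset_R$ (dually $\excl_L$), when the principal formula $x:A\supset B$ lies in, say, $\Delta_2$, we apply the IH to the premiss, which has $Rxy$ added and $y:A$, $y:B$ placed in the second partition; the resulting interpolant is transformed by the $\supset\Isym^y_x$ construction of Def.~\ref{def:imp-excl-interpolants}. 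If instead the principal formula lies in $\Delta_1$, we first apply $\orthrule$ to swap partitions, invoke the case just described, and then swap back with another $\orthrule$. The freshness of $y$ in $\supset_R$ (and $\excl_L$) is crucial for Def.~\ref{def:imp-excl-interpolants} to be well-defined and for the label condition to hold, since $y$ is absorbed into the $\supset$/$\excl$ connective.

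The main obstacle will be checking the variable condition for the $\supset\Isym^y_x$ and $\excl\Isym^y_x$ cases, because these operations fold several labelled formulas into a single implication/exclusion at label $x$; one must confirm that every variable appearing in the combined connective was already common to both partitions, which follows from the IH together with the fact that $y$ is fresh and hence contributes no new propositional variable beyond those already in the premiss's interpolant. A subtler point is that, unlike in the tense case, swapping partitions via $\orthrule$ interacts with the two-sided interpolation statement, so one must be careful that the variable condition $var(\Isym)\subseteq var(\Gamma_1,\Delta_1)\cap var(\Gamma_2,\Delta_2)$ is stable under orthogonalisation; this holds because $\orth{\Isym}$ only recombines existing labelled formulas under negation/side-swap and introduces no new propositional variables or labels. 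Once these checks are discharged for every rule, the induction closes.
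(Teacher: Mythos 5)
Your proposal follows essentially the same route as the paper, which proves this lemma by a one-line induction on the height of the $\biintone$ derivation using the rules of $\biinttwo$; your case analysis (including the use of $\orthrule$ to reposition principal formulae in the ``wrong'' partition and the observation that orthogonalisation introduces no new propositional variables or labels) is the intended elaboration of that induction. The only small inaccuracy is that the Persistence Lemma plays no role here---it is needed for the correctness statement (Lem.~\ref{lm:int-interpolant}), not for constructing $\Isym$ or verifying the variable and label conditions, which follow directly from the inductive hypothesis and the shape of the $\biinttwo$ rules.
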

\begin{proof}
  Induction on the height of the proof of
  $\seq{\Rsym, \Gamma_{1}, \Gamma_{2}}{\Delta_{1}, \Delta_{2}}$ using 
  rules of $\biinttwo$.
\end{proof}

The main technical lemma below asserts that the interpolants constructed via 
the proof system in Fig.~\ref{fig:BiInt-Interpolation-labelled-rules} 
obey duality properties which are essential for proving the main
theorem (Thm.~\ref{tm:biint-interpolation}). The proof of this lemma can be
found in App.~\ref{app-proofs}. 

\begin{lemma}
\label{lm:int-interpolant}
For all $\Rsym, \Gamma_1, \Gamma_2, \Delta_1, \Delta_2$ and $\Isym$, 
if $\piseq{\Rsym, \Gamma_1 \mid \Gamma_2}{ \Delta_1 \mid \Delta_2}{\Isym}$, then 
\begin{enumerate}
\item For all $(\seq{\Sigma}{\Theta}) \in \Isym$, we have $\pseq{\Rsym, \Gamma_1, \Sigma}{\Theta, \Delta_1}$ and
\item For all $(\seq{\Lambda}{\Omega}) \in \orth{\Isym}$, we have $\pseq{\Rsym, \Gamma_2, \Lambda}{\Omega, \Delta_2}.$
\end{enumerate}
\end{lemma}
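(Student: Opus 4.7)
The plan is to proceed by induction on the height of the derivation of $\piseq{\Rsym, \Gamma_1 \mid \Gamma_2}{\Delta_1 \mid \Delta_2}{\Isym}$ in $\biinttwo$, with a case analysis on the last inference rule. For each rule I must verify both (1), that every $(\seq{\Sigma}{\Theta}) \in \Isym$ yields a provable $\seq{\Rsym, \Gamma_1, \Sigma}{\Theta, \Delta_1}$, and (2), that every $(\seq{\Lambda}{\Omega}) \in \orth{\Isym}$ yields a provable $\seq{\Rsym, \Gamma_2, \Lambda}{\Omega, \Delta_2}$. Throughout the argument I will freely use admissibility of weakening, contraction and $\cut_2$ together with invertibility, all granted by Lemma~\ref{lm:biint-labelled-calc-properties}.

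The base cases ($id$, $\top$, $\bot$) give singleton interpolants whose two required derivations are immediate from the corresponding rules of $\biintone$ plus weakening. The structural rules $\monl, \monr$ leave $\Isym$ unchanged and pass straight through by invoking the same rule in $\biintone$. The $\orthrule$ case swaps the two partitions and replaces $\Isym$ by $\orth{\Isym}$: obligation (1) for the conclusion is then obligation (2) of the premise (by the inductive hypothesis, applied after a swap), and obligation (2) requires provability for each $\Lambda \in \orth{\orth{\Isym}}$, which by the Persistence Lemma contains some $\Lambda' \in \Isym$; we obtain the required derivation from the inductive hypothesis on $\Isym$ plus weakening. For the propositional connectives acting on the unaltered side ($\lor_L, \lor_R, \land_L, \land_R$) and for $\supset_L, \excl_R$, which propagate or unite the premise interpolants, I construct the derivations by applying the corresponding rule of $\biintone$ to derivations from the inductive hypothesis. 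A small subtlety arises for rules whose interpolant is $\Isym_1 \cup \Isym_2$: an element of $\orth{\Isym_1 \cup \Isym_2}$ is a cartesian ``pairing'' of elements from $\orth{\Isym_1}$ and $\orth{\Isym_2}$, so its derivation is obtained by weakening the premise derivations from the inductive hypothesis and then composing them with the branching rule (for $\land_R$ on the right side, $\lor_L$ on the left side, etc.).

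The main obstacle will be the two modal-like rules $\supset_R$ and $\excl_L$, whose interpolants are transformed via $\supset \Isym^y_x$ and $\excl \Isym^y_x$ following Definition~\ref{def:imp-excl-interpolants}. Consider $\supset_R$: for obligation (1), each element of $\supset \Isym^y_x$ has the shape $\seq{\Gamma_i}{\Delta_i, x : \bigwedge C_i \supset \bigvee D_i}$, whose derivation must be built from the inductive hypothesis on the premise element $\seq{\Gamma_i, y{:}C_{i,\ast}}{\Delta_i, y{:}D_{i,\ast}}$ by applying $\lor_R$, $\land_L$ (to regroup the $C_{i,\ast}$ and $D_{i,\ast}$) and then $\supset_R$ with the fresh eigenvariable $y$. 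Obligation (2) is the harder half: elements of $\orth{\supset \Isym^y_x}$ are cartesian products of single labelled formulae drawn from the $\Gamma_i, \Delta_i$ together with ``anti-implications'' $x : \bigwedge C_i \supset \bigvee D_i$ on the wrong side; to derive them I will invoke the inductive hypothesis on $\orth{\Isym}$, which gives derivations containing the formulae $y{:}C_{i,j}$ and $y{:}D_{i,j}$, and then reassemble these via $\excl_L$ (the natural left-introduction dual to $\supset_R$) together with $\land_L / \lor_R$ and the admissible structural rules; the relational atom $Rxy$ needed to move between labels $x$ and $y$ must be threaded through carefully, using clause (3) of Lemma~\ref{lm:biint-labelled-calc-properties} to rename when necessary. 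The $\excl_L$ case is dually symmetric. These cases mirror the $\FBox$ and $\PBox$ cases of Lemma~\ref{lm:interpolant}, but the bookkeeping is heavier because sequents are two-sided and formulae change polarity under orthogonality, so I expect the bulk of the proof effort to be the careful verification of these two cases.
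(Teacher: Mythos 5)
Your overall architecture matches the paper's proof exactly: induction on the height of the $\biinttwo$ derivation, trivial base cases, the Persistence Lemma plus weakening for the $\orthrule$ case, cartesian ``pairing'' plus weakening plus the branching rule for the union-interpolant cases, and a regrouping via $\land_L$/$\lor_R$ followed by $\supset_R$ (resp.\ $\excl_L$) for obligation (1) of the eigenvariable rules. All of that is fine.

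However, there is a concrete error in the step you yourself identify as the crux, namely obligation (2) for $\supset_R$. After orthogonalisation, the interpolant formula $x:\bigwedge_i C_i \impl \bigvee_j D_j$ sits in the \emph{antecedent} of the sequent $\seq{\Rsym,\Gamma_2,\Lambda}{\Omega,\Delta_2}$ you must derive, so the rule needed to consume it is a \emph{left-implication} rule, not $\excl_L$: applying $\excl_L$ would decompose an exclusion $A \excl B$, which is the wrong connective, and would moreover introduce a fresh label in the wrong direction. The genuine difficulty---which your plan does not address---is that the ordinary $\supset_L$ of $\biintone$ keeps the side formulas $A$ and $B$ at the same label $x$ as the implication, whereas the inductive hypothesis delivers the components $y:C_{i,j}$ and $y:D_{i,j}$ at the fresh label $y$ with $Rxy$ in between. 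The paper bridges this with a separately \emph{derived} rule $\supset^{*}_{L}$ (Lemma~\ref{lm:admissibility-left-implication-right-exclusion}), obtained from $\supset_L$ by a $\monl$ step that copies $x:A\impl B$ down to $y$; dually, $\excl^{*}_{R}$ is needed for the $\excl_L$ case. Your fallback of ``renaming via clause (3) of Lemma~\ref{lm:biint-labelled-calc-properties}'' does not repair this, since collapsing $x$ and $y$ changes the goal sequent and destroys the freshness needed for the final $\supset_R$ on $x:A\impl B$. To complete the proof you must state and prove these two derived rules and use $\supset^{*}_{L}$ (not $\excl_L$) at this point; with that correction the rest of your argument goes through as in the paper.
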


Given a sequent $\Lambda$, we denote with $\Lambda^L$ (resp., $\Lambda^R$) 
the multiset of labelled formulas on the left (resp. right) hand side of $\Lambda$.
The following two lemmas are counterparts of Lem.~\ref{lm:deriving-interpolant-formula} and Lem.~\ref{lm:deriving-dual-interpolant-formula}. 
Lem.~\ref{lm:deriving-biint-interpolant-formula} essentially states that in a specific case, an interpolant
can be interpreted straightforwardly as a conjunction of implications. Its proof is given in App.~\ref{app-proofs}. 
The proof of Lem.~\ref{lm:deriving-biint-dual-interpolant-formula} follows the same pattern as in the proof of Lem.~\ref{lm:deriving-dual-interpolant-formula}.

\begin{lemma}
\label{lm:deriving-biint-interpolant-formula}
Let $\Isym = \{(\seq{\Sigma_{1}}{\Theta_{1}}), \ldots, (\seq{\Sigma_{n}}{\Theta_{n}})\}$ be an interpolant with 
$$
(\seq{\Sigma_{i}}{\Theta_{i}}) = (\seq{x:C_{i,1}, \ldots, x:C_{i,k_{i}}}{ x:D_{i,1}, \ldots, x:D_{i,j_{i}}})
\mbox{ for each } 1 \leq i \leq n.
$$
If $\pseq{\Sigma_{i},\Gamma}{\Theta_{i}}$, for all $(\seq{\Sigma_{i}}{\Theta_{i}}) \in \Isym$, and every formula in $\Gamma$ is labelled with $x$, then 
$$
\pseq{\Gamma}{x:\bigwedge_{i=1}^{n} (\bigwedge_{m=1}^{k_{i}} C_{i,m} \supset \bigvee_{m=1}^{j_{i}} D_{i,m}}).
$$
\end{lemma}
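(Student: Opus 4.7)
The plan is to derive, for each $i \in \{1, \ldots, n\}$, the implication $\pseq{\Gamma}{x:\bigwedge_{m=1}^{k_i} C_{i,m} \supset \bigvee_{m=1}^{j_i} D_{i,m}}$, and then glue these together via iterated applications of $\land_R$; the work thus reduces to a single index $i$.

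Fix such an $i$. Starting from the hypothesis $\pseq{\Sigma_i, \Gamma}{\Theta_i}$, in which every labelled formula carries the label $x$, I would first merge the $C_{i,m}$'s into a single conjunction via repeated $\land_L$, and the $D_{i,m}$'s into a single disjunction via repeated $\lor_R$, arriving at $\pseq{x:\bigwedge_m C_{i,m}, \Gamma}{x:\bigvee_m D_{i,m}}$. The degenerate cases $k_i = 0$ or $j_i = 0$ can be accommodated by interpreting the empty conjunction as $\top$ and the empty disjunction as $\bot$, with the $\top$ and $\bot$ rules together with admissible weakening supplying the missing formula at the right label.

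The main obstacle is that $\supset_R$ demands that the antecedent and consequent be relocated to a \emph{fresh} label $y$, whereas $\Gamma$ must remain accessible at $x$. To handle this, I would apply Lem.~\ref{lm:biint-labelled-calc-properties}(4) to uniformly rename $x$ to a fresh $y$ throughout, producing $\pseq{y:\bigwedge_m C_{i,m}, \Gamma[y/x]}{y:\bigvee_m D_{i,m}}$, then use admissibility of weakening (Lem.~\ref{lm:biint-labelled-calc-properties}(1)) to reintroduce the original $\Gamma$ at label $x$ together with the relational atom $Rxy$, giving $\pseq{Rxy, \Gamma, y:\bigwedge_m C_{i,m}, \Gamma[y/x]}{y:\bigvee_m D_{i,m}}$. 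For each $x:E \in \Gamma$ the spurious duplicate $y:E \in \Gamma[y/x]$ can then be eliminated by an application of $\monl$, since both $Rxy$ and $x:E$ are present. Iterating yields $\pseq{Rxy, \Gamma, y:\bigwedge_m C_{i,m}}{y:\bigvee_m D_{i,m}}$, from which a single application of $\supset_R$ delivers $\pseq{\Gamma}{x:\bigwedge_m C_{i,m} \supset \bigvee_m D_{i,m}}$.

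Iterating this construction over all $i$ and combining the resulting $n$ derivations with $\land_R$ produces the desired $\pseq{\Gamma}{x:\bigwedge_{i=1}^n (\bigwedge_m C_{i,m} \supset \bigvee_m D_{i,m})}$. All the ingredients needed---uniform label substitution, admissibility of weakening, and the $\monl$ rule---are provided by Lem.~\ref{lm:biint-labelled-calc-properties}, so the only delicate part of the argument is the label bookkeeping in the substitute-then-reweaken-then-$\monl$ maneuver that effects the shift from label $x$ to label $y$ while preserving the presence of $\Gamma$ at $x$.
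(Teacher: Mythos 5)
Your proposal is correct and follows essentially the same route as the paper's proof: collapse each $\Sigma_i \vdash \Theta_i$ into a single implication via $\land_L$/$\lor_R$, then perform the weaken-in-a-relational-atom-and-a-relabelled-copy-of-$\Gamma$, clean-up-with-$\monl$, apply-$\supset_R$ maneuver, and finish with iterated $\land_R$. The only (immaterial) difference is bookkeeping: the paper keeps the interpolant formulas at $x$ and moves $\Gamma$ to the fresh label, relabelling back at the very end via Lem.~\ref{lm:biint-labelled-calc-properties}(4), whereas you rename the whole sequent first and reintroduce $\Gamma$ at $x$, which lets you conclude directly at $x$ without the final relabelling.
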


\begin{lemma}
\label{lm:deriving-biint-dual-interpolant-formula}
Let $\Isym = \{(\seq{\Sigma_{1}}{\Theta_{1}}), \ldots, (\seq{\Sigma_{n}}{\Theta_{n}})\}$ be an interpolant with 
$$
(\seq{\Sigma_{i}}{\Theta_{i}}) = (\seq{x:C_{i,1}, \ldots, x:C_{i,k_{i}}}{ x:D_{i,1}, \ldots, x:D_{i,j_{i}}})
\mbox{ for each } 1 \leq i \leq n.
$$
If $\pseq{\Rsym, \Sigma_{i},\Gamma}{\Delta, \Theta_{i}}$ for all $(\seq{\Sigma_{i}}{\Theta_{i}}) \in \orth{\Isym}$, then 
$$
\pseq{\Rsym, \Gamma,  x:\bigwedge_{i=1}^{n} (\bigwedge_{m=1}^{k_{i}} C_{i,m} \supset \bigvee_{m=1}^{j_{i}} D_{i,m})}{\Delta.}
$$
\end{lemma}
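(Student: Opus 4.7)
The plan is to follow the template of the proof of Lem.~\ref{lm:deriving-dual-interpolant-formula}, replacing the Duality Lemma of the classical tense setting by its bi-intuitionistic counterpart (Lem.~\ref{lm:dual-int}) and, correspondingly, $\cut_1$ by $\cut_2$. Write $F := \bigwedge_{i=1}^{n}\bigl(\bigwedge_{m=1}^{k_{i}} C_{i,m} \supset \bigvee_{m=1}^{j_{i}} D_{i,m}\bigr)$ for the target formula interpolant, so that the goal is to derive $\pseq{\Rsym, \Gamma, x:F}{\Delta}$ in $\biintone$.

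First, I would establish the auxiliary fact that for every $(\seq{\Sigma_{i}}{\Theta_{i}}) \in \Isym$ the sequent $\pseq{\Rsym, x:F, \Sigma_{i}}{\Theta_{i}}$ is derivable. The derivation is short: apply $\land_{L}$ repeatedly to isolate the $i$-th conjunct $\bigwedge_{m} C_{i,m} \supset \bigvee_{m} D_{i,m}$ of $F$ in the antecedent, apply $\supset_{L}$, and discharge the two resulting premises using $\land_{R}$ together with $id$ on the $C_{i,m}$'s supplied by $\Sigma_{i}$, and using $\lor_{L}$ together with $id$ on the $D_{i,m}$'s targeted in $\Theta_{i}$, respectively.

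Next, I would invoke Lem.~\ref{lm:dual-int} to obtain a derivation $\Xi$ of the empty sequent from $\Isym \cup \orth{\Isym}$ built only from $\cut_{2}$ and contraction. Since every rule appearing in $\Xi$ preserves arbitrary contexts, the whole derivation can be uniformly augmented by adjoining $\Rsym, \Gamma, x:F$ to the antecedent and $\Delta$ to the consequent of every sequent; the result is still a legitimate proof skeleton, and its root is, modulo duplicated copies of the common context, precisely $\seq{\Rsym, \Gamma, x:F}{\Delta}$, which is recovered by a final cascade of admissible contractions (Lem.~\ref{lm:biint-labelled-calc-properties}(1)). The augmented leaves are then discharged as follows: a leaf from $\Isym$ now reads $\seq{\Rsym, \Gamma, x:F, \Sigma_{i}}{\Delta, \Theta_{i}}$ and is proved by weakening the derivation established in the previous step by $\Gamma$ and $\Delta$; a leaf from $\orth{\Isym}$ reads $\seq{\Rsym, \Gamma, x:F, \Sigma_{i}'}{\Delta, \Theta_{i}'}$ and is obtained by weakening the corresponding hypothesis $\pseq{\Rsym, \Sigma_{i}', \Gamma}{\Delta, \Theta_{i}'}$ by $x:F$.

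I expect the main nuisance, rather than a genuine conceptual obstacle, to be the bookkeeping around multiplicities: the $\cut_{2}$ and contraction instances in $\Xi$ will duplicate the augmented common context across both premises, and these duplicates must be folded back together by admissible contraction on $\Rsym$, $\Gamma$, $x:F$, and $\Delta$ before the clean end-sequent emerges. Conceptually, the argument is identical to the tense case: the Duality Lemma converts a sequent-level interpolant $\Isym$ into a derivation skeleton whose root is exactly the associated formula-level interpolant placed on the left of the sequent arrow, and the polarity-respecting definition of $\orth{(\cdot)}$ ensures that $x : F$ really is the dual of $\bigwedge \bigvee$ applied to $\Isym$ in the appropriate sense.
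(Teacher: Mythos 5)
Your proposal is correct and follows essentially the same route as the paper, which proves this lemma by invoking the Duality Lemma (Lem.~\ref{lm:dual-int}) and transplanting the argument of Lem.~\ref{lm:deriving-dual-interpolant-formula} to the two-sided setting with $\cut_2$. In fact you supply more detail than the paper does (in particular the explicit $\land_L$/$\supset_L$ derivation showing $\pseq{\Rsym, x{:}F, \Sigma_i}{\Theta_i}$ for each member of $\Isym$, which is the two-sided analogue of the step the paper attributes to weakening admissibility), and your handling of the leaves and the final contractions matches the intended argument.
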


\begin{proof} Follows from Lem.~\ref{lm:dual-int} and is similar to Lem.~\ref{lm:deriving-dual-interpolant-formula}.

\end{proof}

\begin{theorem}
\label{tm:biint-interpolation}
If $\pseq{}{x:A \supset B}$, then there exists a $C$ such that (i) $var(C) \subseteq var(A) \cap var(B)$ and (ii) $\pseq{}{x:A \supset C}$ and $\pseq{}{x:C \supset B}$.
\end{theorem}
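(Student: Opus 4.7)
The plan is to follow the same pattern as in Thm.~\ref{tm:tense-proof-theoretic-interpolation}, adapted to the bi-intuitionistic setting: strip off the outermost $\supset$ by invertibility, extract a sequent interpolant via Lem.~\ref{lm:int-interpolant-propvar-labels}, invoke its correctness (Lem.~\ref{lm:int-interpolant}), and finally compress the sequent interpolant into a single formula using Lem.~\ref{lm:deriving-biint-interpolant-formula} and its dual Lem.~\ref{lm:deriving-biint-dual-interpolant-formula}. The Duality Lemma (Lem.~\ref{lm:dual-int}) is what permits the same formula $C$ to serve both halves of the interpolation.

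Concretely, from $\pseq{}{x:A\supset B}$, invertibility of $\supset_{R}$ (Lem.~\ref{lm:biint-labelled-calc-properties}(2)) yields $\pseq{Rxy,\, y:A}{y:B}$ for some fresh label $y$. Applying Lem.~\ref{lm:int-interpolant-propvar-labels} with the partition $\Gamma_{1}:=\{y:A\}$, $\Delta_{1}:=\emptyset$, $\Gamma_{2}:=\emptyset$, $\Delta_{2}:=\{y:B\}$ produces an interpolant $\Isym$ with $\piseq{Rxy,\, y:A\mid\cdot}{\cdot\mid y:B}{\Isym}$, $var(\Isym)\subseteq var(A)\cap var(B)$, and every label in $\Isym$ drawn from $\{x,y\}$. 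Lem.~\ref{lm:int-interpolant} then gives, for each $(\seq{\Sigma}{\Theta})\in\Isym$, a proof of $\pseq{Rxy,\, y:A,\,\Sigma}{\Theta}$ and, for each $(\seq{\Lambda}{\Omega})\in\orth{\Isym}$, a proof of $\pseq{Rxy,\,\Lambda}{\Omega,\, y:B}$. Granted that every labelled formula in $\Isym$ sits at $y$ (argued below), let $C:=\bigwedge_{i=1}^{n}(\bigwedge_{m=1}^{k_{i}} C_{i,m}\supset\bigvee_{m=1}^{j_{i}} D_{i,m})$ be the formula interpretation of $\Isym$. Then Lem.~\ref{lm:deriving-biint-interpolant-formula} (with $\Gamma:=\{y:A\}$, followed by an admissible weakening by $Rxy$ if needed) delivers $\pseq{Rxy,\, y:A}{y:C}$, so one application of $\supset_{R}$ gives $\pseq{}{x:A\supset C}$. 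Dually, Lem.~\ref{lm:deriving-biint-dual-interpolant-formula} (with $\Rsym:=\{Rxy\}$, $\Gamma:=\emptyset$, $\Delta:=\{y:B\}$) delivers $\pseq{Rxy,\, y:C}{y:B}$, hence $\pseq{}{x:C\supset B}$ by $\supset_{R}$. Condition (i) follows from $var(C)\subseteq var(\Isym)\subseteq var(A)\cap var(B)$.

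The main obstacle I anticipate is justifying that every labelled formula of $\Isym$ actually carries the label $y$ rather than $x$, since Lem.~\ref{lm:deriving-biint-interpolant-formula} and its dual both require a single common label. The intended argument: inspection of the rules of $\biintone$ shows that, read bottom-up, no rule can introduce an $x$-labelled formula into a sequent that has none, because the monotonicity rules $\monl$ and $\monr$ only duplicate a \emph{pre-existing} $x$-labelled formula at $y$, and every logical rule decomposes a formula at its existing label or at a \emph{fresh} label. Since the root $\pseq{Rxy,\, y:A}{y:B}$ has no $x$-labelled formula, no sequent in its proof does either. Consequently, the id, $\top$ and $\bot$ rules of $\biinttwo$ inject into $\Isym$ only $y$-labelled or freshly-labelled formulas, and any fresh labels created by internal applications of $\supset_{R}/\excl_{L}$ are reindexed back to $y$ by the operations $\supset\Isym^{y'}_{y}$ and $\excl\Isym^{y'}_{y}$ of Def.~\ref{def:imp-excl-interpolants}, because the principal $\supset$- or $\excl$-formula of any such internal application must itself reside at $y$. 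Packaging this observation as a simple invariant tracked along the derivation completes the argument.
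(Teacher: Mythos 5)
Your overall architecture is the same as the paper's (invert the outer $\supset$, extract a sequent interpolant, apply Lem.~\ref{lm:int-interpolant}, then compress to a formula via Lem.~\ref{lm:deriving-biint-interpolant-formula} and Lem.~\ref{lm:deriving-biint-dual-interpolant-formula}), but the step you yourself flag as the main obstacle --- that every formula in $\Isym$ carries the single label $y$ --- is where the argument breaks. Your claimed invariant (``no rule can introduce an $x$-labelled formula into a sequent that has none, because $\monl$ and $\monr$ only duplicate a pre-existing $x$-labelled formula at $y$'') misreads $\monr$: read bottom-up, $\monr$ takes a succedent formula $y{:}A$ and, in the presence of $Rxy$, adds $x{:}A$, i.e.\ it propagates \emph{backwards} to the predecessor label. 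So a cut-free proof of $\pseq{Rxy,\,y:A}{y:B}$ (which invertibility hands you without any control over its shape) may perfectly well contain $x$-labelled formulas, and these can flow into $\Isym$ --- for instance through the $\top$ rule of $\biinttwo$, which records the label of the principal occurrence in the interpolant. Unlike the fresh labels created by internal $\supset_R/\excl_L$ steps, the root label $x$ is never reindexed away, because the outermost $\supset_R$ was consumed by the invertibility step and is not part of the derivation processed by $\biinttwo$. A secondary mismatch: Lem.~\ref{lm:deriving-biint-interpolant-formula} is stated for hypotheses $\pseq{\Sigma_i,\Gamma}{\Theta_i}$ \emph{without} relational atoms, whereas your instances carry $Rxy$; weakening only lets you \emph{add} $Rxy$ to a provable sequent, not remove it, so the parenthetical ``followed by an admissible weakening by $Rxy$ if needed'' does not repair this.

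The paper sidesteps all of this with one extra move you are missing: after inverting $\supset_R$ it collapses $\pseq{Rxy,\,y:A}{y:B}$ to the flat, single-labelled sequent $\pseq{x:A}{x:B}$ using the label-substitution property (Lem.~\ref{lm:biint-labelled-calc-properties}) together with admissibility of $ref$ (Lem.~\ref{lm:admissibility-refl}), and only then runs the interpolation algorithm. Since the input sequent now contains the single label $x$ and no relational atoms, the label condition of Lem.~\ref{lm:int-interpolant-propvar-labels} immediately forces every formula of $\Isym$ to be labelled $x$, and Lemmas \ref{lm:deriving-biint-interpolant-formula} and \ref{lm:deriving-biint-dual-interpolant-formula} apply verbatim; the final $\supset_R$ steps are recovered by weakening in a fresh relational atom. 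Replacing your invariant argument with this collapsing step would make your proof go through.
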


\begin{corollary}
\label{cr:biint-lydon-craig-interpolation}
The logic $\biint$ has the Craig interpolation property.
\end{corollary}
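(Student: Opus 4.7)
The plan is to derive the corollary as a straightforward translation of Theorem~\ref{tm:biint-interpolation} into the terminology of Definition~\ref{def:interpolation-property}. The implication connective of $\biint$ is $\supset$, so the Craig interpolation property for $\biint$ demands that whenever $A \supset B \in \biint$, there is a formula $C$ with $var(C) \subseteq var(A) \cap var(B)$ such that $A \supset C \in \biint$ and $C \supset B \in \biint$.

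First I would invoke the soundness and completeness of the nested sequent calculus $\biintone$ for $\biint$ (as established in~\cite{GorPosTiu08,PinUus18} and relied on throughout Section~\ref{sec:intuitionistic}). Completeness gives us that if $A \supset B$ is a theorem of $\biint$, then $\pseq{}{x:A \supset B}$. Next I would apply Theorem~\ref{tm:biint-interpolation} to obtain a formula $C$ with $var(C) \subseteq var(A) \cap var(B)$ and with both $\pseq{}{x:A \supset C}$ and $\pseq{}{x:C \supset B}$. Finally, soundness of $\biintone$ translates these two derivability statements back into $A \supset C \in \biint$ and $C \supset B \in \biint$, so $C$ is a Craig interpolant of $A \supset B$ in the sense of Definition~\ref{def:interpolation-property}.

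There is essentially no obstacle here: the corollary is a one-line consequence of Theorem~\ref{tm:biint-interpolation} together with the soundness and completeness of $\biintone$ with respect to $\biint$. The only subtlety worth noting is that the theorem is stated for provability in $\biintone$ (i.e.\ $\pseq{}{\cdot}$), whereas the Craig interpolation property is stated in terms of the logic $\biint$ itself, so invoking soundness/completeness is what bridges the two formulations. All the real work has already been done in assembling the interpolation calculus $\biinttwo$, proving the Persistence and Duality lemmas, verifying Lemma~\ref{lm:int-interpolant}, and extracting a formula interpolant via Lemmas~\ref{lm:deriving-biint-interpolant-formula} and~\ref{lm:deriving-biint-dual-interpolant-formula} in the proof of Theorem~\ref{tm:biint-interpolation}.
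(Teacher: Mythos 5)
Your proposal is correct and matches the paper's (implicit) argument: the corollary is stated as an immediate consequence of Theorem~\ref{tm:biint-interpolation}, bridged to the logic $\biint$ itself via the soundness and completeness of $\biintone$, exactly as you describe. Nothing further is needed.
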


\section{Conclusion and Future work}
\label{sec:concl}

We have presented a novel approach to proving the interpolation theorem for a range of logics possessing a nested sequent calculus. The key insight in our approach is the generalisation of the interpolation theorem to allow sets of sequents as interpolants. There is a natural definition of duality between interpolants via cut. We have shown that our method can be used to prove interpolation for logics for which interpolation was known to be difficult to prove.\looseness=-1

We intend on applying our approach to prove interpolation for bi-intuitionistic linear logic (BiILL)~\cite{CloustonDGT13}. Unlike tense logics and bi-intuitionistic logic, there is no obvious Kripke semantics for BiILL, so Kuznets et. al.'s approach is not immediately applicable, and it seems a proof-theoretic approach like ours would offer some advantage. 
We conjecture that the key insight in our work, i.e., the generalisation of interpolants to sets of sequents and the use of orthogonality to define duality between interpolants, can be extended to the linear logic setting; for example, via a similar notion of orthogonality as in multiplicative linear logic ~\cite{DanosR89}.

\bibliography{interpolation}

\begin{thebibliography}{10}

\bibitem{DBLP:journals/aml/Brunnler09}
Kai Br{\"{u}}nnler.
\newblock Deep sequent systems for modal logic.
\newblock {\em Arch. Math. Log.}, 48(6):551--577, 2009.

\bibitem{DBLP:conf/lics/CiabattoniGT08}
Agata Ciabattoni, Nikolaos Galatos, and Kazushige Terui.
\newblock From axioms to analytic rules in nonclassical logics.
\newblock In {\em Proceedings of the Twenty-Third Annual {IEEE} Symposium on
  Logic in Computer Science, {LICS} 2008, 24-27 June 2008, Pittsburgh, PA,
  {USA}}, pages 229--240, 2008.
\newblock URL: \url{https://doi.org/10.1109/LICS.2008.39}, \href
  {http://dx.doi.org/10.1109/LICS.2008.39} {\path{doi:10.1109/LICS.2008.39}}.

\bibitem{CiaLyoRam18}
Agata Ciabattoni, Tim Lyon, and Revantha Ramanayake.
\newblock From display to labelled proofs for tense logics.
\newblock In Sergei Artemov and Anil Nerode, editors, {\em Logical Foundations
  of Computer Science}, pages 120--139, Cham, 2018. Springer International
  Publishing.

\bibitem{CloustonDGT13}
Ranald Clouston, Jeremy~E. Dawson, Rajeev Gor{\'{e}}, and Alwen Tiu.
\newblock Annotation-free sequent calculi for full intuitionistic linear logic.
\newblock In {\em Computer Science Logic 2013 {(CSL} 2013), {CSL} 2013,
  September 2-5, 2013, Torino, Italy}, volume~23 of {\em LIPIcs}, pages
  197--214. Schloss Dagstuhl - Leibniz-Zentrum fuer Informatik, 2013.
\newblock URL: \url{https://doi.org/10.4230/LIPIcs.CSL.2013.197}, \href
  {http://dx.doi.org/10.4230/LIPIcs.CSL.2013.197}
  {\path{doi:10.4230/LIPIcs.CSL.2013.197}}.

\bibitem{DanosR89}
Vincent Danos and Laurent Regnier.
\newblock The structure of multiplicatives.
\newblock {\em Arch. Math. Log.}, 28(3):181--203, 1989.
\newblock URL: \url{https://doi.org/10.1007/BF01622878}, \href
  {http://dx.doi.org/10.1007/BF01622878} {\path{doi:10.1007/BF01622878}}.

\bibitem{FittingK15}
Melvin Fitting and Roman Kuznets.
\newblock Modal interpolation via nested sequents.
\newblock {\em Ann. Pure Appl. Logic}, 166(3):274--305, 2015.
\newblock URL: \url{https://doi.org/10.1016/j.apal.2014.11.002}, \href
  {http://dx.doi.org/10.1016/j.apal.2014.11.002}
  {\path{doi:10.1016/j.apal.2014.11.002}}.

\bibitem{DBLP:journals/igpl/Gore98}
Rajeev Gor{\'{e}}.
\newblock Substructural logics on display.
\newblock {\em Logic Journal of the {IGPL}}, 6(3):451--504, 1998.

\bibitem{GorPosTiu08}
Rajeev Gor{\'{e}}, Linda Postniece, and Alwen Tiu.
\newblock Cut-elimination and proof-search for bi-intuitionistic logic using
  nested sequents.
\newblock In {\em Advances in Modal Logic 7, papers from the seventh conference
  on "Advances in Modal Logic," held in Nancy, France, 9-12 September 2008},
  pages 43--66, 2008.
\newblock URL:
  \url{http://www.aiml.net/volumes/volume7/Gore-Postniece-Tiu.pdf}.

\bibitem{GorPosTiu11}
Rajeev Gor{\'e}, Linda Postniece, and Alwen Tiu.
\newblock On the correspondence between display postulates and deep inference
  in nested sequent calculi for tense logics.
\newblock {\em Log. Methods Comput. Sci.}, 7(2):2:8, 38, 2011.

\bibitem{DBLP:journals/sLogica/Kashima94}
Ryo Kashima.
\newblock Cut-free sequent calculi for some tense logics.
\newblock {\em Studia Logica}, 53(1):119--136, 1994.
\newblock URL: \url{https://doi.org/10.1007/BF01053026}, \href
  {http://dx.doi.org/10.1007/BF01053026} {\path{doi:10.1007/BF01053026}}.

\bibitem{interpolation-beth-kihara-ono}
Hitoshi Kihara and Hiroakira Ono.
\newblock Interpolation properties, {Beth} definability properties and
  amalgamation properties for substructural logics.
\newblock {\em Journal of Logic and Computation}, 20(4), August 2010.

\bibitem{KowalskiO17}
Tomasz Kowalski and Hiroakira Ono.
\newblock Analytic cut and interpolation for bi-intuitionistic logic.
\newblock {\em Rew. Symb. Logic}, 10(2):259--283, 2017.
\newblock URL: \url{https://doi.org/10.1017/S175502031600040X}, \href
  {http://dx.doi.org/10.1017/S175502031600040X}
  {\path{doi:10.1017/S175502031600040X}}.

\bibitem{DBLP:journals/apal/Kuznets18}
Roman Kuznets.
\newblock Multicomponent proof-theoretic method for proving interpolation
  properties.
\newblock {\em Ann. Pure Appl. Logic}, 169(12):1369--1418, 2018.

\bibitem{kuznet-lellman-intermediate}
Roman Kuznets and Bj\"orn Lellmann.
\newblock Interpolation for intermediate logics via hyper- and linear nested
  sequents.
\newblock In {\em Proc. AiML 2018}. Kings College Publications, 2018.

\bibitem{DBLP:conf/ijcai/LutzW11}
Carsten Lutz and Frank Wolter.
\newblock Foundations for uniform interpolation and forgetting in expressive
  description logics.
\newblock In {\em {IJCAI} 2011, Proceedings of the 22nd International Joint
  Conference on Artificial Intelligence, Barcelona, Catalonia, Spain, July
  16-22, 2011}, pages 989--995, 2011.
\newblock URL: \url{https://doi.org/10.5591/978-1-57735-516-8/IJCAI11-170},
  \href {http://dx.doi.org/10.5591/978-1-57735-516-8/IJCAI11-170}
  {\path{doi:10.5591/978-1-57735-516-8/IJCAI11-170}}.

\bibitem{maehara-method}
S.~Maehara.
\newblock Craig no interpolation theorem (in japanese).
\newblock {\em Suugaku}, 12:235–237, 1960.

\bibitem{DBLP:reference/mc/McMillan18}
Kenneth~L. McMillan.
\newblock Interpolation and model checking.
\newblock In {\em Handbook of Model Checking}, pages 421--446. Springer, 2018.

\bibitem{DBLP:journals/jphil/Negri05}
Sara Negri.
\newblock Proof analysis in modal logic.
\newblock {\em J. Philosophical Logic}, 34(5-6):507--544, 2005.

\bibitem{PinUus18}
Lu{\'{\i}}s Pinto and Tarmo Uustalu.
\newblock A proof-theoretic study of bi-intuitionistic propositional sequent
  calculus.
\newblock {\em J. Log. Comput.}, 28(1):165--202, 2018.
\newblock URL: \url{https://doi.org/10.1093/logcom/exx044}, \href
  {http://dx.doi.org/10.1093/logcom/exx044} {\path{doi:10.1093/logcom/exx044}}.

\bibitem{DBLP:journals/rsl/Poggiolesi08}
Francesca Poggiolesi.
\newblock A cut-free simple sequent calculus for modal logic {S5}.
\newblock {\em Rew. Symb. Logic}, 1(1):3--15, 2008.
\newblock URL: \url{https://doi.org/10.1017/S1755020308080040}, \href
  {http://dx.doi.org/10.1017/S1755020308080040}
  {\path{doi:10.1017/S1755020308080040}}.

\bibitem{Rau80}
Cecylia Rauszer.
\newblock {\em An algebraic and Kripke-style approach to a certain extension of
  intuitionistic logic}.
\newblock Instytut Matematyczny Polskiej Akademi Nauk, 1980.
\newblock URL: \url{http://eudml.org/doc/268511}.

\bibitem{DBLP:journals/jair/CateFS13}
Balder ten Cate, Enrico Franconi, and Inan{\c{c}} Seylan.
\newblock Beth definability in expressive description logics.
\newblock {\em J. Artif. Intell. Res.}, 48:347--414, 2013.

\bibitem{GorIanTiu12}
Alwen Tiu, Egor Ianovski, and Rajeev Gor{\'{e}}.
\newblock Grammar logics in nested sequent calculus: Proof theory and decision
  procedures.
\newblock In {\em Advances in Modal Logic 9, papers from the ninth conference
  on "Advances in Modal Logic," held in Copenhagen, Denmark, 22-25 August
  2012}, pages 516--537, 2012.
\newblock URL: \url{http://www.aiml.net/volumes/volume9/Tiu-Ianovski-Gore.pdf}.

\bibitem{basic-proof-theory-troelstra}
Anne Troesltra and Helmut Schwichtenberg.
\newblock {\em Basic Proof Theory}.
\newblock CUP, 1996.

\end{thebibliography}

\newpage

\appendix

\section{Definition of $\Rsym^{{\sf\Pi}}$}
In this appendix we define the relation $\Rsym^{{\sf \Pi}}$ that is used as a side condition in the $\FDia$ and $\PDia$ rules in $\tenseone$. Concepts needed for the definition are defined first followed by the definition of the relation.

\begin{definition}[Path Axiom Inverse~\cite{GorPosTiu11}] Let $\QDia^{-1} = \FDia$ if $\QDia = \PDia$, and $\QDia^{-1} = \PDia$, if $\QDia = \FDia$. If $F$ is a path axiom of the form below left, then we define the \emph{inverse of $F$} (denoted $I(F)$) to be the axiom below right:
$$
\QDia_{F_1} \cdots \QDia_{F_n} p \rightarrow \QDia_{F} p \qquad
I(F) = \QDia^{-1}_{F_n} \cdots \QDia^{-1}_{F_1} p \rightarrow \QDia^{-1}_{F} p
$$

\noindent
Given a set of path axioms ${\sf\Pi}$, we define the \emph{set of inverses} to be the set $I({\sf\Pi}) = \{I(F) | F \in {\sf\Pi}\}$.

\end{definition}

\begin{definition}[Composition of Path Axioms~\cite{GorPosTiu11}] Given two path axioms
$$
{F = \QDia_{F_1} \cdots \QDia_{F_n} p \rightarrow \QDia_{F} p \qquad G = \QDia_{G_1} \cdots \QDia_{G_m} p \rightarrow \QDia_{G} p}
$$

\noindent
we say  \emph{$F$ is composable with $G$ at $i$} iff $\QDia_{F} = \QDia_{G_{i}}$.

We define the \emph{composition}
$$
F \triangleright^{i} G = \QDia_{G_1}\cdots\QDia_{G_{i-1}} \QDia_{F_1} \cdots \QDia_{F_n} \QDia_{G_{i+1}} \cdots \QDia_{G_m} p \rightarrow \QDia_{G} p
$$

\noindent
when $F$ is composable with $G$ at $i$.

Using these individual compositions, we define the following \emph{set of compositions}:
$$
F \triangleright G = \{F \triangleright^{i} G \ | \ \text{F is composable with G at $i$} \}
$$

\end{definition}

\begin{definition}[Completion~\cite{GorPosTiu11}] The \emph{completion} of a set ${\sf\Pi}$ of path axioms, written ${\sf\Pi}^{*}$, is the smallest set of path axioms containing ${\sf\Pi}$ such that:

\begin{itemize}

\item $\FDia p \rightarrow \FDia p, \PDia p \rightarrow \PDia p \in {\sf\Pi}^{*}$;

\item If $F,G \in {\sf\Pi}^{*}$ and $F$ is composable with $G$ for some $i$, then $F \triangleright G \subseteq {\sf\Pi}^{*}$.

\end{itemize}

\end{definition}

\begin{definition}[Propagation Graph] 
Let $\seq{\Rsym}{\Gamma}$ be a labelled sequent where $N$ is the set of labels occurring in the sequent. We define the \emph{propagation graph $PG(\seq{\Rsym}{\Gamma}) = ( N,E,L )$} to be the directed graph with the set of nodes $N$, and where the set of edges $E$ and function $L$ that labels edges with either a $\FDia$ or $\PDia$ are defined as follows: For every $Rxy \in \Rsym$, we have an edge $(x,y) \in E$ with $L(x,y) = \FDia$ and an edge $(y,x) \in E$ with $L(y,x) = \PDia$.

\end{definition}

\begin{definition}[Path~\cite{GorPosTiu11}] A \emph{path} is a sequence of nodes and diamonds (labelling edges) of the form:

$$
x_{1},\QDia_{1},x_{2},\QDia_{2},\cdots,\QDia_{n-1},x_{n}
$$

\noindent
in the propagation graph $PG(\Gamma)$ of a sequent $\Gamma$ such that $x_{i}$ is connected to $x_{i+1}$ by an edge labelled with $\QDia_{i}$. For a given path $\pi = x_{1},\QDia_{1},x_{2},\QDia_{2},\cdots,\QDia_{n-1},x_{n}$, we define the \emph{string of $\pi$} to be the string of diamonds $s(\pi) = \QDia_{1} \QDia_{2} \cdots \QDia_{n-1}$.

\end{definition}

\begin{definition}[The Relation $\Rsym^{{\sf\Pi}}$] Let ${\sf\Pi}$ be a set of path axioms. For any two labels $x$ and $y$ occurring in a labelled sequent $\seq{\Rsym}{\Gamma}$, $x\Rsym^{{\sf\Pi}}y$ holds if and only if there exists a path $\pi$ in $PG(\seq{\Rsym}{\Gamma})$ such that $s(\pi) p \rightarrow \QDia p \in {\sf\Pi}^{*}$.

\end{definition}

\label{app:path-axioms-relation}

\section{Proofs}
\begin{customlem}{\ref{lm:dual}}
Given an interpolant $\Isym$, the empty sequent is derivable from $\Isym \cup \orth{\Isym}$ using the cut rule and
the contraction rule.
\end{customlem}
\begin{proof}
By induction on the size of $\Isym.$
Suppose 
$$
\Isym = \{ (\seq{}{x_1:A_1, \ldots, x_n:A_n}) \} \cup \Isym'.
$$
Suppose 
$$
\orth{\Isym'} = \{ \seq{}{\Lambda_1}, \ldots, \seq{}{\Lambda_k} \}.
$$
Then by the definition of orthogonal transformation:
$$
\orth{\Isym} = 
\bigcup_{i=1}^k \bigcup_{j=1}^n \{ (\seq{}{\Lambda_i, x_{j}:\overline{A_j} })  \}. 
$$
So  each $\Lambda_i \in \orth{\Isym'}$ corresponds to exactly $n$ sequents in $\orth{\Isym}$: 
$$
\seq{}{x_{1}:\overline{A_1}, \Lambda_i} \qquad \cdots \qquad \seq{}{x_{n}:\overline{A_n}, \Lambda_i}.
$$
From these, we can construct a derivaton $\Xi_i$ of $\seq{}{\Lambda_i}$ using only cut and contraction:
$$
\deduce
{\seq{}{\Lambda_i}}
{
\deduce{\vdots}{\seq{}{x_1:A_1, \ldots, x_n:A_n} \qquad \seq{}{x_1:\overline{A_1}, \Lambda_i} \quad \cdots \quad \seq{}{x_n:\overline{A_n}, \Lambda_i}}
}
$$
By induction hypothesis, we have a derivation of $\seq{}{}$ from $\Isym' \cup \orth{\Isym'}$:
$$
\deduce{\seq{}{}}
{\deduce{\vdots}{\Isym' \qquad \seq{}{\Lambda_1} \cdots \seq{}{\Lambda_k}}}
$$
By replacing every leaf $\seq{}{\Lambda_i}$ with derivation $\Xi_i$, 
we get a derivation of the empty sequent from assumptions $\Isym$ and $\orth{\Isym'}$, using only cut and contraction as required.
\end{proof}

\begin{customlem}{\ref{lm:interpolant}}
For all $\Rsym, \Gamma, \Delta$ and $\Isym$, 
if $\piseq{\Rsym}{ \Gamma \mid \Delta}{\Isym}$, then 
\begin{enumerate}
\item For all $(\seq{}{\Lambda}) \in \Isym,$ we have $\pseq{\Rsym}{ \Gamma, \Lambda}$ and 
\item For all $(\seq{}{\Theta}) \in \orth{\Isym}$, we have $\pseq{\Rsym}{ \Theta, \Delta}.$
\end{enumerate}
\end{customlem}

\begin{proof}
By  induction on the height of the proof of $\iseq{\Rsym}{ \Gamma \mid \Delta}{\Isym}$; we prove the result for a representative set of cases since all others are simple or shown similarly. The base cases are trivial, so we focus solely on the inductive step.

\paragraph*{\bf $\orthrule$-rule} Suppose the proof of $\iseq{\Rsym}{\Gamma \mid \Delta}{\Isym}$ ends with the $\orthrule$-rule:
$$
\infer[\orthrule]
{\iseq{\Rsym}{\Delta \mid \Gamma}{\orth{\Isym}}}
{\iseq{\Rsym}{\Gamma \mid \Delta}{\Isym}}
$$
\begin{enumerate}
\item Let $\seq{}{\Lambda} \in \orth{\Isym}$: we need to establish that $\pseq{\Rsym}{\Lambda, \Delta}$.
This follows immediately from the induction hypothesis.

\item Let $\seq{}{\Lambda} \in \orth{\orth{\Isym}}$: We need to establish that $\pseq{\Rsym}{\Gamma,\Lambda}.$ By Lem.~\ref{lm:double-orthogonal}, there exists a $\Lambda' \in \Isym$ such that $\Lambda' \subseteq \Lambda$.
By the IH, we have $\pseq{\Rsym}{\Gamma,\Lambda'}$, and applying admissibility of weakening (Lem.~\ref{lm:labelled-calc-properties}),
we get $\pseq{\Rsym}{\Gamma, \Lambda}$ as required.

\end{enumerate}

\paragraph*{\bf $\land$-rule} Suppose 
our proof ends with the inference:
$$
\infer[\land]
{\iseq{\Rsym}{ \Gamma \mid x:A \land B,  \Delta}{\Isym_1 \cup \Isym_2}}
{\iseq{\Rsym}{ \Gamma \mid  x:A, \Delta}{\Isym_1}  & \iseq{\Rsym} {\Gamma \mid x:B,  \Delta}{\Isym_2}}
$$

\begin{enumerate}

\item Let $\seq{}{\Lambda} \in \Isym_{1} \cup \Isym_{2}$. 
By the induction hypothesis, for any  $\seq{}{\Lambda_{1}} \in \Isym_{1}$ and 
$\seq{}{\Lambda_{2}} \in \Isym_{2}$, 
we have that $\pseq{\Rsym}{\Gamma, \Lambda_{1}}$, and 
$\pseq{\Rsym}{\Gamma, \Lambda_{2}}$. Regardless of if $\seq{}{\Lambda} \in \Isym_{1}$ or $\seq{}{\Lambda} \in \Isym_{2}$, we achieve the desired conclusion.

\item Let $\seq{}{\Lambda} \in \orth{\Isym_{1} \cup \Isym_{2}}$.
By the induction hypothesis, for any $\seq{}{\Lambda_{1}} \in \orth{\Isym_{1}}$ and
 $\seq{}{\Lambda_{2}} \in \orth{\Isym_{2}}$, 
 $\pseq{\Rsym}{ x:A, \Delta, \Lambda_{1}}$ and $\pseq{\Rsym} { x:B, \Delta, \Lambda_{2}}$. Observe that by the definition of $\trans$, there is a $\seq{}{\Lambda_{1}} \in \orth{\Isym_{1}}$ and $\seq{}{\Lambda_{2}} \in \orth{\Isym_{2}}$ 
 such that $(\seq{}{\Lambda}) = (\seq{}{\Lambda_{1}, \Lambda_{2}})$. Hence, by admissibility of weakening (Lem.~\ref{lm:labelled-calc-properties}) we can derive ${\seq{\Rsym}{x:A, \Delta, \Lambda_{1}, \Lambda_{2}}}$ and 
 ${\seq{\Rsym}{x:B, \Delta, \Lambda_{1}, \Lambda_{2}}}$, from which, an application of the conjunction rule gives the desired conclusion.

\end{enumerate}

\paragraph*{\bf $\FBox$-rule} Suppose the proof of $\iseq{\Rsym}{\Gamma \mid \Delta}{\Isym}$ ends with a $\FBox$ rule, i.e., $\Delta = \Delta' \cup \{x:\FBox A\}$, $\Isym = \FBox \Isym'^y_x$, and the proof has the form:
$$
\infer[\FBox]
{\iseq{\Rsym}{\Gamma \mid x:\FBox A, \Delta'}{\FBox \Isym'^y_x}}
{\iseq{\Rsym, Rxy}{\Gamma \mid y:A, \Delta'}{\Isym'} }
$$
\begin{enumerate}
\item Let $\seq{}{\Lambda} \in \FBox \Isym'^y_x$: we need to show that
$\pseq{\Rsym}{\Gamma,\Lambda}$. From the definition of $\FBox\Isym'^y_x$, $\Lambda$ must be of
the form $\{\Lambda', x: \FBox (B_1 \lor \cdots \lor B_k)\}$
where $(\seq{}{\Lambda', y:B_1, \ldots, y:B_k}) \in \Isym'.$
The proof of $\seq{\Rsym}{\Gamma, \Lambda}$ is constructed as follows:
$$
\infer[\FBox]
{\seq{\Rsym}{\Gamma, \Lambda', x: \FBox (B_1 \lor \cdots \lor B_k)}}
{
\infer[\lor \times (k-1)]
{\seq{\Rsym, Rxy}{\Gamma, \Lambda', y:B_1 \lor \cdots \lor B_k}}
{\seq{\Rsym, Rxy}{\Gamma, \Lambda', y:B_1, \ldots, y:B_k}}
}
$$
with the premise derivable by the induction hypothesis.

\item Let $\seq{}{\Lambda} \in \orth{\FBox \Isym'^y_x}$: we need to show
that $\pseq{\Rsym}{\Lambda, x:\FBox A, \Delta'}$. In this case, $\Lambda$ could contain zero or more formulae of the form
$x : \overline{\FBox (B_1 \lor \cdots \lor B_k)}$ such that
there exists $\seq{}{\Lambda'} \in \Isym'$ with $\{y:B_1, \ldots, y:B_k\} \subseteq \Lambda'.$ Let us refer to these formulae as a boxed-interpolant formulae. We suppose for the sake of simplicity that exactly one boxed-interpolant formula exists in $\Lambda$ and that $k = 2$; the general case is obtained similarly. We may therefore write $\Lambda$ as 
$\{{x:\overline{\Box(B_{1} \lor B_{2})}} \} \cup \Theta$.

It follows from our assumptions then that there exists $\Lambda' \in \Isym'$ with $\{y:B_1, y:B_2\} \subseteq \Lambda'$. Hence, there must exist a $\Lambda_{1} \in \orth{\Isym'}$ of the form $\{y:\overline{B_{1}}\} \cup \Theta$ and a $\Lambda_{2} \in \orth{\Isym'}$ of the form $\{y:\overline{B_{2}}\} \cup \Theta$. By the induction hypothesis, $\pseq{\Rsym, Rxy}{y:\overline{B}_{1}, {\Theta}, y:A, \Delta'}$ and $\pseq{\Rsym, Rxy}{y:\overline{B}_{2}, {\Theta}, y:A, \Delta'}$. Evoking weakening admissibility from Lem.~\ref{lm:labelled-calc-properties}, we have that
$$
\begin{aligned}
& \pseq{\Rsym, Rxy}{y:\overline{B}_{1}, {\Theta},x:\FDia  (\overline{B}_1 \land \overline{B}_2), y:A, \Delta'}\\
& \pseq{\Rsym, Rxy}{y:\overline{B}_{2}, {\Theta},x:\FDia  (\overline{B}_1 \land \overline{B}_2),y:A, \Delta'}
\end{aligned}
$$
By using the $\land$ rule with $y:B_1$ and $y:B_2$ principal, we can derive $$\seq{\Rsym, Rxy}{y:\overline{B}_{1} \land \overline{B}_2, {\Theta},x:\FDia (\overline{B}_1 \land \overline{B}_2),y:A, \Delta'}.$$ The desired result is obtained as follows:
$$
\infer[\FBox]
{\seq{\Rsym}{{\Theta},x:\FDia  (\overline{B}_1 \land \overline{B}_2),y:\FBox A, \Delta'}}
{
\infer[\FDia]
{\seq{\Rsym, Rxy}{{\Theta},x:\FDia  (\overline{B}_1 \land \overline{B}_2),y:A, \Delta'}}
{\seq{\Rsym, Rxy}{y:\overline{B}_{1} \land \overline{B}_2, {\Theta},x:\FDia  (\overline{B}_1 \land \overline{B}_2),y:A, \Delta'}}
}
$$
\end{enumerate}
\end{proof}

\begin{customthm}{\ref{tm:tense-proof-theoretic-interpolation}}
If $\pseq{}{x:A \rightarrow B}$, then there exists a $C$ such that (i) $var(C) \subseteq var(A) \cap var(B)$ and (ii) $\pseq{}{x:A \rightarrow C}$ and $\pseq{}{x:C \rightarrow B}$.
\end{customthm}

\begin{proof} Assume that $\pseq{}{x:A \rightarrow B}$, \textit{i.e.} $\pseq{}{x:\overline{A} \vee B}$. By Lem.~\ref{lm:labelled-calc-properties}, we know that $\pseq{}{x:\overline{A}, x:B}$. Therefore, by Lem.~\ref{lm:interpolant-labels-literals}, we know that there exists an interpolant 
$\Isym = \{\seq{}{\Lambda_{1}}, \ldots, \seq{}{\Lambda_{n}}\}$ 
with $\Lambda_{i} = \{x:A_{i,1}, \ldots, x:A_{i,k_{i}}\}$ for each $1 \leq i \leq n$ such that $\piseq{}{x:\overline{A} \mid x:B}{\Isym}$, $var(\Isym) \subseteq var(x:A) \cap var(x:B)$, and all formulae in $\Isym$ are labelled with $x$. By Lem.~\ref{lm:interpolant}, the following two statements hold: (a) for all $(\vdash \Lambda) \in \Isym$, $\pseq{}{x:\overline{A},\Lambda}$, (b) for all $(\vdash \Theta) \in \orth{\Isym}$, $\pseq{}{\Theta,x:B}$. We now use the interpolant $\Isym$ to construct a $C$ satisfying the conclusion of the theorem.

The fact that all labelled formulae in $\Isym$ have the same label $x$, along with statement (a) and Lem.~\ref{lm:deriving-interpolant-formula}, imply that $\pseq{}{x:\overline{A},x:\bigwedge \bigvee \Isym}$. Moreover, statement (b) and Lem.~\ref{lm:deriving-dual-interpolant-formula} imply that 
$\pseq{}{x:\overline{\bigwedge \bigvee \Isym},x:B}$. 
Applying the $\lor$ rule in both cases implies that $\pseq{}{x:\overline{A} \vee \bigwedge \bigvee \Isym}$ and $\pseq{}{x:\overline{\bigwedge \bigvee \Isym} \vee B}$, thus giving the desired result with $\bigwedge \bigvee \Isym$ the interpolant. Last, note that since $\bigwedge \bigvee \Isym$ is our interpolant, and $var(\Isym) \subseteq var(x:A) \cap var(x:B)$ holds, $var(\bigwedge \bigvee \Isym) \subseteq var(A) \cap var(B)$ will be satisfied as well. 
\end{proof}



\begin{lemma}
\label{lm:admissibility-left-implication-right-exclusion}
The following two rules are derivable in $\biintone$:
$$
\infer[\supset^{*}_{L}]
{\seq{\Rsym,Rxy,\Gamma,x:A \supset B}{\Delta}}
{\seq{\Rsym,Rxy,\Gamma,x:A \supset B}{\Delta,y:A} & \seq{\Rsym,Rxy,\Gamma,y:B}{\Delta}}
$$
$$
\infer[\excl^{*}_{R}]
{\seq{\Rsym,Ryx,\Gamma}{x:A \excl B,\Delta}}
{\seq{\Rsym,Ryx,\Gamma}{\Delta,y:A} & \seq{\Rsym,Rxy,\Gamma,y:B}{x:A \excl B,\Delta}}
$$
\end{lemma}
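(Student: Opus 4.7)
The plan is to derive each of these two rules by the same four-step template: first, use invertibility of the appropriate monotonicity rule ($\monl$ for $\supset^{*}_{L}$, $\monr$ for $\excl^{*}_{R}$) to propagate the principal formula across the relational edge, producing an additional copy at the secondary label $y$; second, use admissibility of weakening to add the principal formula back into the other premise; third, apply the standard logical rule ($\supset_{L}$ or $\excl_{R}$) of $\biintone$ at the label $y$; and finally, apply the monotonicity rule once more, as the root rule of the constructed derivation, to discharge the extra copy at $y$. All of these meta-level steps are licensed by Lem.~\ref{lm:biint-labelled-calc-properties}, which supplies weakening admissibility and invertibility of every rule of $\biintone$.

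Concretely, for $\supset^{*}_{L}$, suppose we are given derivations of the two premises $\seq{\Rsym,Rxy,\Gamma,x:A \supset B}{\Delta,y:A}$ and $\seq{\Rsym,Rxy,\Gamma,y:B}{\Delta}$. Invertibility of $\monl$ applied to the first premise (with principal formula $x:A \supset B$, enabled by the presence of $Rxy$) yields a derivation of $\seq{\Rsym,Rxy,\Gamma,x:A \supset B, y:A \supset B}{\Delta,y:A}$. Weakening the second premise with $x:A \supset B$ gives $\seq{\Rsym,Rxy,\Gamma,x:A \supset B, y:B}{\Delta}$. An application of $\supset_{L}$ at the label $y$, with principal formula $y:A \supset B$, combines these into a derivation of $\seq{\Rsym,Rxy,\Gamma,x:A \supset B, y:A \supset B}{\Delta}$, after which a closing application of $\monl$ with principal formula $x:A \supset B$ discharges the redundant $y:A \supset B$ and lands on the conclusion $\seq{\Rsym,Rxy,\Gamma,x:A \supset B}{\Delta}$.

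The derivation of $\excl^{*}_{R}$ follows the dual pattern: invertibility of $\monr$ augments the second premise on the right with the required copy of the principal formula at $y$; weakening adds $x:A \excl B$ to the other premise; the standard $\excl_{R}$ is applied at label $y$ with principal formula $y:A \excl B$; and a closing application of $\monr$ removes the redundant copy. No step of either derivation is non-routine once the role of $\monl$ and $\monr$ as bridges for the principal formula across the $R$-edge is identified, so I do not anticipate a substantive obstacle. The only point requiring a moment's care is to check that each invocation of $\monl$ or $\monr$ matches its schema with respect to the label-orientation of the relational atom, which is an unambiguous syntactic check.
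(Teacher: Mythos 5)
Your derivation is correct and takes essentially the same route as the paper's own proof: place a copy of the principal formula at the label $y$, weaken the other premise with the $x$-labelled copy, apply the ordinary $\supset_{L}$ (resp.\ $\excl_{R}$) at $y$, and close with one application of $\monl$ (resp.\ $\monr$). The only cosmetic difference is that you justify the first step by invertibility of $\monl$/$\monr$ where the paper invokes weakening admissibility, which here amounts to the same thing since the premise of each monotonicity rule is its conclusion plus one extra formula occurrence.
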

\begin{proof} We prove the claim for the $\supset^{*}_{L}$ rule; the proof for $\excl^{*}_{R}$ is similar. Assume that $\pseq{\Rsym,Rxy,\Gamma,x:A \supset B}{\Delta,y:A}$ and $\pseq{\Rsym,Rxy,\Gamma,y:B}{\Delta}$. By evoking admissibility of weakening (Lem.~\ref{lm:biint-labelled-calc-properties}), we obtain proofs of the following sequents:
$$
\begin{aligned}
& \seq{\Rsym,Rxy,\Gamma,x:A \supset B,y:A \supset B}{\Delta,y:A} \qquad
& \seq{\Rsym,Rxy,\Gamma,x:A \supset B,y:B}{\Delta}
\end{aligned}
$$
By applying $\supset_{L}$ with the above sequents, we obtain the premise below; the desired conclusion is derived with one application of $\monl$:
$$
\infer[\monl]
{\seq{\Rsym,Rxy,\Gamma,x:A \supset B}{\Delta}}
{
\seq{\Rsym,Rxy,\Gamma,x:A \supset B,y:A \supset B}{\Delta}
}
$$
\end{proof}

\begin{lemma}\label{lm:admissibility-refl} The rule
$$
\infer[ref]
{\Rsym, \Gamma \vdash \Delta}
{\Rsym, Rxx, \Gamma \vdash \Delta}
$$
is admissible in $\biintone$.
\end{lemma}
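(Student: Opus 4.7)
I would proceed by induction on the height of the derivation of $\Rsym, Rxx, \Gamma \vdash \Delta$, producing a derivation of $\Rsym, \Gamma \vdash \Delta$ of no greater height (up to applications of admissible rules). The key observation is that, among the primitive rules of $\biintone$, only $\monl$ and $\monr$ actually inspect the relational context; every other rule either ignores $\Rsym$ (the axioms and the purely propositional rules) or extends it with an atom built from a fresh eigenvariable ($\supset_R$ and $\excl_L$). Consequently, $Rxx$ can be carried passively through almost any inference, and the induction hypothesis on the premise(s) removes it after the fact.

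For the base cases ($id$, $\top$, $\bot$), the side conditions do not refer to the relational atom, so the axiom can simply be reinstantiated without $Rxx$. In every inductive case except $\monl$ and $\monr$, $Rxx$ appears inertly in both the conclusion and each premise, so I would apply the induction hypothesis to each premise derivation and then reapply the same rule; the freshness conditions on $\supset_R$ and $\excl_L$ cause no interference because their eigenvariables are chosen distinct from $x$.

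The main obstacle, and the only place where the argument is not completely routine, is the subcase of $\monl$ (and symmetrically $\monr$) in which the active relational atom is $Rxx$ itself:
$$
\infer[\monl]{\Rsym', Rxx, x:A, \Gamma' \vdash \Delta}{\Rsym', Rxx, x:A, x:A, \Gamma' \vdash \Delta}.
$$
Here I would invoke the induction hypothesis on the premise to obtain $\Rsym', x:A, x:A, \Gamma' \vdash \Delta$, and then appeal to admissibility of contraction (Lem.~\ref{lm:biint-labelled-calc-properties}) to collapse the duplicated labelled formula, yielding $\Rsym', x:A, \Gamma' \vdash \Delta$ as required. If instead $\monl$ fires on some other atom $Ruv \neq Rxx$, the induction hypothesis applies to the premise and $\monl$ on $Ruv$ is reapplied. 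The case of $\monr$ is symmetric, using right contraction. This contraction step is the conceptual crux: a monotonicity inference along a reflexive edge merely reproduces a labelled formula already present in the context, so deleting the edge is compensated exactly by one use of contraction, and nothing more.
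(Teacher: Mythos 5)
Your proposal is correct and follows essentially the same route as the paper: induction on derivation height, passing $Rxx$ through all rules except $\monl$ and $\monr$, and handling the case where $Rxx$ is the principal relational atom by applying the induction hypothesis and then admissibility of contraction (Lem.~\ref{lm:biint-labelled-calc-properties}) to collapse the duplicated labelled formula. The additional remarks about freshness of eigenvariables in $\supset_R$ and $\excl_L$ are a sensible elaboration of what the paper leaves implicit.
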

\begin{proof} We prove the result by induction on the height of the derivation of $\Rsym, Rxx, \Gamma \vdash \Delta$. The base case is simple since if $\Rsym, Rxx, \Gamma \vdash \Delta$ is an initial sequent, then so is $\Rsym, \Gamma \vdash \Delta$. For all rules, with the exception of the $\monl$ and $\monr$ rule, the inductive step is proven by applying the inductive hypothesis followed by an application of the rule. We therefore focus on the above two cases:

\paragraph*{\bf $\monl$-, $\monr$-rules} If $Rxx$ is not principal in a $\monl$ or $\monr$ inference, then the case is resolved by applying the inductive hypothesis followed by an application of the rule. However, if $Rxx$ is principal, then the inferences are of the following forms:
$$
\begin{tabular}{c @{\hskip 2em} c}
\AxiomC{$\Rsym, Rxx, x:A, x:A, \Gamma \vdash \Delta$}
\UnaryInfC{$\Rsym, Rxx, x:A, \Gamma \vdash \Delta$}
\DisplayProof

&

\AxiomC{$\Rsym, Rxx, \Gamma \vdash x:A, x:A,\Delta$}
\UnaryInfC{$\Rsym, Rxx, \Gamma \vdash x:A,\Delta$}
\DisplayProof
\end{tabular}
$$
Applying the inductive hypothesis to the premises, followed by an application of $\ctr$ admissibility (Lem.~\ref{lm:biint-labelled-calc-properties}), gives the desired conclusion.
\end{proof}

\begin{customlem}{\ref{lm:int-interpolant}}
For all $\Rsym, \Gamma_1, \Gamma_2, \Delta_1, \Delta_2$ and $\Isym$, 
if $\piseq{\Rsym, \Gamma_1 \mid \Gamma_2}{ \Delta_1 \mid \Delta_2}{\Isym}$, then 
\begin{enumerate}
\item For all $(\seq{\Sigma}{\Theta}) \in \Isym$, we have $\pseq{\Rsym, \Gamma_1, \Sigma}{\Theta, \Delta_1}$ and
\item For all $(\seq{\Lambda}{\Omega}) \in \orth{\Isym}$, we have $\pseq{\Rsym, \Gamma_2, \Lambda}{\Omega, \Delta_2}.$
\end{enumerate}
\end{customlem}

\begin{proof} By induction on the height of proofs; we prove the result for a representative set of cases since all others are simple or shown similarly. 
Moreover, the base cases are trivial, so we focus solely on the inductive step, where we show the cases for $\lor_{L}$ and $\supset_{R}$ since all other cases are similar or simple. 

\paragraph*{\bf $\lor_{L}$-rule} Suppose that our derivation ends with the upmost $\lor_{L}$ in Fig.~\ref{fig:BiInt-Interpolation-labelled-rules}.



\begin{enumerate}

\item Let $(\seq{\Sigma}{\Theta}) \in \Isym_{1} \cup \Isym_{2}$. By the inductive hypothesis, we have that for all $(\seq{\Sigma}{\Theta}) \in \Isym_{1}$, $\pseq{\Rsym,\Gamma_{1},\Sigma}{\Delta_{1},\Theta}$ and for all $(\seq{\Sigma}{\Theta}) \in \Isym_{2}$, $\pseq{\Rsym,\Gamma_{1},\Sigma}{\Delta_{1},\Theta}$. Therefore, regardless of if $(\seq{\Sigma}{\Theta})$ is in $\Isym_{1}$ or $\Isym_{2}$, we have that $\pseq{\Rsym,\Gamma_{1},\Sigma }{\Delta_{1},\Theta}$.

\item Let $(\seq{\Sigma}{\Theta}) \in \orth{\Isym_{1} \cup \Isym_{2}}$. By the inductive hypothesis, we have that for all $(\seq{\Sigma}{\Theta}) \in \orth{\Isym_{1}}$, $\pseq{\Rsym,x:A, \Gamma_{2},\Sigma}{\Delta_{2},\Theta}$ and for all $(\seq{\Sigma}{\Theta}) \in \orth{\Isym_{2}}$, $\pseq{\Rsym,x:B,\Gamma_{2},\Sigma}{\Delta_{2},\Theta}$. Observe that by the definition of $\trans$, there will be a $(\seq{\Sigma_{1}}{\Theta_{1}})$ and $(\seq{\Sigma_{2}}{\Theta_{2}})$ such that
 $(\seq{\Sigma_{1}, \Sigma_2}{\Theta_{1}, \Theta_2}) \in \orth{\Isym_{1} \cup \Isym_{2}}$. Hence, by the inductive hypothesis and the admissibility of weakening (Lem.~\ref{lm:biint-labelled-calc-properties}) we may derive 
$$
\seq{\Rsym,x:A, \Gamma_{2},\Sigma_{1}, \Sigma_{2}}{\Delta_{2},\Theta_{1},\Theta_{2}}
\quad\mbox{and}\quad
\pseq{\Rsym,x:B,\Gamma_{2}, \Sigma_{1}, \Sigma_{2}}{\Delta_{2},\Theta_{1},\Theta_{2}}.
$$
 One application of the $\lor_{L}$ rule gives the desired conclusion.

\end{enumerate}

%
%
%

\paragraph*{\bf $\supset_{R}$-rule} Assume that our derivation of $\iseq{\Rsym, \Gamma_{1} \mid \Gamma_{2}}{\Delta_{1} \mid \Delta_{2}} {\Isym}$ ends with a $\supset_{R}$ rule, 
where $\Isym \ = \ \supset {\Isym'}^{y}_{x}$ and the last inference has the form:
$$
\infer[\supset_{R}]
{\iseq{\Rsym, \Gamma_{1} \mid \Gamma_{2}}{\Delta_{1} \mid x:A \supset B,  \Delta_{2}} {\supset \Isym'^{y}_{x}}}
{\iseq{\Rsym,Rxy,\Gamma_{1} \mid y:A , \Gamma_{2}}{\Delta_{1} \mid y:B, \Delta_{2}} {\Isym'} }
$$
\begin{enumerate}

\item Let $(\seq{\Sigma}{\Theta}) \in \Isym$. We want to show that $\pseq{\Rsym, \Gamma_{1}, \Sigma}{\Delta_{1}, \Theta}$. Note that $\seq{\Sigma}{\Theta}$ is 
of the form:
$$
\seq{\Sigma}{x: C_{1} \wedge ... \wedge C_{k} \supset D_{1} \vee ... \vee D_{n}, \Theta'},
$$ 
where $\Theta = x: C_{1} \wedge ... \wedge C_{k} \supset D_{1} \vee ... \vee D_{n}, \Theta'$. This implies that there exists a $(\seq{\Lambda}{\Omega}) \in \Isym'$ of the form 
$$
\seq{\Sigma, y:C_{1},...,y:C_{k}}{y:D_{1},...,y:D_{n}, \Theta'}
$$ 
Therefore, the inductive hypothesis implies that the premise below is derivable:
$$
\infer[\supset_{R}]
{\seq{\Rsym, \Gamma_{1}, \Sigma}{x: C_{1} \wedge ... \wedge C_{k} \supset D_{1} \vee ... \vee D_{n}, \Theta', \Delta_{1}}}
{
 \infer[\land_{L} \times (k-1), \lor_{R} \times (n-1)]
 {\seq{\Rsym, Rxy, \Gamma_{1}, \Sigma, y:C_{1} \wedge ... \wedge C_{k}}{ y: D_{1} \vee ... \vee D_{n}, \Theta', \Delta_{1}}}
 {\seq{\Rsym, Rxy, \Gamma_{1}, \Sigma, y:C_{1}, ..., C_{k}}{ y: D_{1}, ..., y:D_{n}, \Theta', \Delta_{1}}}
}
$$
The last $\supset_{R}$ inference may be applied because, by our assumption, $y$ does not occur in $\Rsym,\Gamma_{1},\Sigma, \Theta', \Delta_{1}$.

\item Let $(\seq{\Sigma}{\Theta}) \in \orth{\Isym}$. Observe that $\Sigma$ will contain zero or more formulae of the form 
$x: C_{1} \land ... \land C_{k} \supset D_{1} \lor ... \lor D_{n}$, which we refer to as an implication-interpolant formula, 
such that there exists a sequent in $\Isym'$ of the form:  
$$
\seq{\Sigma', y:C_{1},...,y:C_{k}}{y:D_{1},...,y:D_{n}, \Theta}. 
$$ 
where $\Sigma'$ is equal to $\Sigma$ minus all implication-interpolant formulae. We assume for the sake of simplicity that $\Sigma$ contains one implication-interpolant formula, that $k=2$, and $n=2$; the general case is tedious, but shown similarly. 
$\seq{\Sigma}{\Theta}$ is therefore of the form: 
$$
\seq{x: C_{1} \land ... \land C_{k} \supset D_{1} \lor ... \lor D_{n}, \Sigma'}{\Theta}
.$$
It follows from our assumptions that there exists a $(\seq{\Lambda}{\Omega}) \in \Isym'$ of the form
$$
\seq{\Sigma', y:C_{1},y:C_{2}}{y:D_{1},y:D_{2}, \Theta}
.$$
Hence, by the definition of the $\trans$, the following four sequents are members of $\orth{\Isym'}$: 
$$
\seq{\Sigma'}{ \Theta, y:C_1} \qquad \seq{\Sigma'}{\Theta, y:C_2} \qquad \seq{y:D_1, \Sigma'}{\Theta} \qquad \seq{y:D_2, \Sigma'}{\Theta}
$$
By the inductive hypothesis the following four claims hold:
$$
\begin{aligned}
& \pseq{\Rsym, Rxy,y:A,\Gamma_2, \Sigma'}{\Theta,\Delta_{2},y:B,y:C_{1}}\\
& \pseq{\Rsym, Rxy,y:A,\Gamma_2, \Sigma'}{\Theta, \Delta_{2},y:B,y:C_{2}}\\
& \pseq{\Rsym, Rxy,y:A,y:D_{1},\Gamma_2, \Sigma'}{\Theta,  \Delta_{2},y:B}\\
& \pseq{\Rsym, Rxy,y:A,y:D_{2},\Gamma_2, \Sigma'}{\Theta, \Delta_{2},y:B}
\end{aligned}
$$
Applying the $\land_{R}$ rule to the first two sequents with $y:C_{1}$ and $y:C_{2}$ active, and the $\lor_{L}$ rule to the latter two sequents with $y:D_{1}$ and $y:D_{2}$ active, gives derivations of the following two sequents:
$$
\begin{aligned}
& \seq{\Rsym, Rxy,y:A,\Gamma_2, \Sigma'}{\Theta,\Delta_{2},y:B,y:C_{1} \land C_2}\\
& \seq{\Rsym, Rxy,y:A,y:D_{1} \lor D_2,\Gamma_2, \Sigma'}{\Theta,  \Delta_{2},y:B}
\end{aligned}
$$
By admissibility of weakening (Lem.~\ref{lm:biint-labelled-calc-properties}), we may weaken in $x:C_{1} \land C_{2} \supset D_{1} \lor D_{2}$ on the left side of both sequents and apply the $\supset^{*}_{L}$ rule (Lem.~\ref{lm:admissibility-left-implication-right-exclusion}) to obtain a derivation of:
$$
\seq{\Rsym, Rxy,y:A, x:C_{1} \land C_{2} \supset D_{1} \lor D_{2}, \Gamma_2, \Sigma'}{\Theta,\Delta_{2},y:B}
$$
An application of $\supset_{R}$ with $Rxy$, $y:A$, and $y:B$ principal gives the desired result.
\end{enumerate}
\end{proof}

\begin{customlem}{\ref{lm:deriving-biint-interpolant-formula}} 
Let $\Isym = \{(\seq{\Sigma_{1}}{\Theta_{1}}), \ldots, (\seq{\Sigma_{n}}{\Theta_{n}})\}$ be an interpolant with 
$$
(\seq{\Sigma_{i}}{\Theta_{i}}) = (\seq{x:C_{i,1}, \ldots, x:C_{i,k_{i}}}{ x:D_{i,1}, \ldots, x:D_{i,j_{i}}})
\mbox{ for each } 1 \leq i \leq n.
$$
If $\pseq{\Sigma_{i},\Gamma}{\Theta_{i}}$, for all $(\seq{\Sigma_{i}}{\Theta_{i}}) \in \Isym$, and every formula in $\Gamma$ is labelled with $x$, then 
$$
\pseq{\Gamma}{x:\bigwedge_{i=1}^{n} (\bigwedge_{m=1}^{k_{i}} C_{i,m} \supset \bigvee_{m=1}^{j_{i}} D_{i,m}}).
$$
\end{customlem}

\begin{proof} By assumption, $\pseq{x:C_{i,1}, \ldots, x:C_{i,k_{i}} ,\Gamma}{x:D_{i,1}, \ldots, x:D_{i,j_{i}}}$ for all $1 \leq i \leq n$. By repeated application of $\land_{L}$ and $\lor_{R}$, we obtain a derivation of the following sequent for each $1 \leq i \leq n$:
$$
\seq{x: \bigwedge_{m=1}^{k_{i}} C_{i,m},\Gamma}{x: \bigvee_{m=1}^{j_{i}} D_{i,m}}
$$
Let $y$ be a fresh variable and let $\Gamma' = \Gamma[y/x]$, i.e., the multiset of labelled formulae $\Gamma$ but with every formula labelled with $y$ instead of $x$. By the admissibility of weakening (Lem.~\ref{lm:biint-labelled-calc-properties}) and the derivability of the above sequents, we may derive the following:
$$
\seq{Ryx, x: \bigwedge_{m=1}^{k_{i}} C_{i,m},\Gamma, \Gamma'}{x: \bigvee_{m=1}^{j_{i}} D_{i,m}}
$$
for all $1 \leq i \leq n$. We construct derivations from the derivations of the above sequents by adding the following inferences to the bottom of each:
$$
\infer[\supset_{R}]
{\seq{\Gamma'}{y: \bigwedge_{m=1}^{k_{i}} C_{i,m} \supset \bigvee_{m=1}^{j_{i}} D_{i,m}}}
{
\infer[\monl]
{\seq{Ryx, x: \bigwedge_{m=1}^{k_{i}} C_{i,m},\Gamma'}{x: \bigvee_{m=1}^{j_{i}} D_{i,m}}}
{\seq{Ryx, x: \bigwedge_{m=1}^{k_{i}} C_{i,m},\Gamma, \Gamma'}{x: \bigvee_{m=1}^{j_{i}} D_{i,m}}}
}
$$
for all $1 \leq i \leq n$. By successively applying $\land_{R}$ we may derive the following:
$$
\seq{\Gamma'}{y:\bigwedge_{i=1}^{n} (\bigwedge_{m=1}^{k_{i}} C_{i,m} \supset \bigvee_{m=1}^{j_{i}} D_{i,m}})
$$
By the fact that the above sequent is derivable and by statement (4) of Lem.~\ref{lm:biint-labelled-calc-properties}, we obtain the desired conclusion:
$$
\pseq{\Gamma}{x:\bigwedge_{i=1}^{n} (\bigwedge_{m=1}^{k_{i}} C_{i,m} \supset \bigvee_{m=1}^{j_{i}} D_{i,m})}
$$
\end{proof}

\begin{customthm}{\ref{tm:biint-interpolation}}
If $\pseq{}{x:A \supset B}$, then there exists a $C$ such that (i) $var(C) \subseteq var(A) \cap var(B)$ and (ii) $\pseq{}{x:A \supset C}$ and $\pseq{}{x:C \supset B}$.
\end{customthm}

\begin{proof} Suppose that $\pseq{}{x:A \supset B}$. By Lem.~\ref{lm:biint-labelled-calc-properties} (2), (4) and Lem.~\ref{lm:admissibility-refl}, this implies that $\pseq{x:A}{x:B}$. By Lem.~\ref{lm:int-interpolant-propvar-labels}, we know there exists an 
$\Isym = \{(\seq{\Sigma_{1}}{\Theta_{1}}), \ldots, (\seq{\Sigma_{n}}{\Theta_{n}})\}$ with 
$$
(\seq{\Sigma_{i}}{\Theta_{i}}) = (\seq{x:C_{i,1}, \ldots, x:C_{i,k_{i}}}{x:D_{i,1}, \ldots, x:D_{i,j_{i}}})
$$ 
for each $1 \leq i \leq n$ such that 
$\piseq{x:A \mid \cdot }{\cdot \mid x:B}{\Isym}$, $var(\Isym) \subseteq var(x:A) \cap var(x:B)$, and $x$ is the only label occurring in $\Isym$. 
By Lem.~\ref{lm:int-interpolant}, the following two statements hold: (a) for all $(\seq{\Sigma}{\Theta}) \in \Isym$, we have 
$\pseq{x:A, \Sigma}{\Theta}$, 
and (b) for all $(\seq{\Sigma}{\Theta}) \in \orth{\Isym}$, we have 
$\pseq{\Sigma}{\Theta, x:B}$.


We now use the interpolant $\Isym$ to construct a formula $C$ satisfying the conditions specified in the claim of the theorem.

By the assumption that $\Isym$ only contains formulae with the label $x$ along with statement (a) and Lem.~\ref{lm:deriving-biint-interpolant-formula}, we obtain the following:
$$
\pseq{x:A}{x:\bigwedge_{i=1}^{n} (\bigwedge_{m=1}^{k_{i}} C_{i,m} \supset \bigvee_{m=1}^{j_{i}} D_{i,m})}
$$
Additionally, statement (b) and Lem.~\ref{lm:deriving-biint-dual-interpolant-formula} imply the following:
$$
\pseq{x:\bigwedge_{i=1}^{n} (\bigwedge_{m=1}^{k_{i}} C_{i,m} \supset \bigvee_{m=1}^{j_{i}} D_{i,m}) }{x:B}
$$
By weakening admissibility (Lem.~\ref{lm:biint-labelled-calc-properties}), we may weaken both sequents with $Ryx$ and apply $\supset_{R}$ to obtain the following:
$$
\pseq{}{y:A \supset \bigwedge_{i=1}^{n} (\bigwedge_{m=1}^{k_{i}} C_{i,m} \supset \bigvee_{m=1}^{j_{i}} D_{i,m})}
$$
$$
\pseq{}{y:\bigwedge_{i=1}^{n} (\bigwedge_{m=1}^{k_{i}} C_{i,m} \supset \bigvee_{m=1}^{j_{i}} D_{i,m}) \supset B }
$$
Evoking statement (4) of Lem.~\ref{lm:biint-labelled-calc-properties} allows us to replace the label $y$ with $x$. Last, since $var(\Isym) \subseteq var(x:A) \cap var(x:B)$, it is easy to see that $\bigwedge_{i=1}^{n} (\bigwedge_{m=1}^{k_{i}} C_{i,m} \supset \bigvee_{m=1}^{j_{i}} D_{i,m})$ is our formula interpolant for $A$ and $B$.
\end{proof}\label{app-proofs}

\end{document}